\renewcommand\footnotetextcopyrightpermission[1]{} % TODO REMOVE
\DeclarePairedDelimiterX\set[1]\lbrace\rbrace{\,#1\,}
\newcommand*{\Reals}{\mathbb{R}}
\newcommand*{\Naturals}{\mathbb{N}}
\DeclareMathOperator*{\argmin}{arg\,min}
\newcommand{\stitle}[1]{\vspace{1mm}\noindent{\textbf{#1}}.}
\newcommand{\dee}{\mathcal{D}}
\newcommand{\eps}{\varepsilon}
\newcommand{\dual}{\mathsf{dual}}
\newtheorem{theorem}{Theorem} %[section]
\newtheorem{lemma}[theorem]{Lemma}
\newtheorem{problem}{Problem}
\newtheorem{example}{Example}
\newcommand{\data}{\mathcal{D}\xspace}
\newcommand{\ef}{f}
\newcommand{\score}{\mathtt{score}\xspace}
\newcommand{\point}{p}
\newcommand{\rank}{\mathtt{rank}\xspace}
\newcommand{\nnrank}{\mathtt{nn\_rank}\xspace}
\newcommand{\ith}{\textsc{RAR}\xspace} % The main kth element problem
\newcommand{\srs}{\textsc{SRR}\xspace} % Stripe Range Searching
\newcommand{\Stripe}{\mathcal{S}}
\newcommand{\conf}{\mathcal{C}}
\newcommand{\result}{\mathcal{D}_o}
\newcommand{\epssample}{\mathcal{N}}
\newcommand{\Ball}{\mathcal{B}}
\newcommand{\ball}{\mathcal{B}}
\newcommand{\similarity}{\textsc{RAS}}
\newcommand{\css}{\textsc{CSS}}
\newcommand{\dist}{\mathbf{d}}
\newtcolorbox{highlightbox}{
  colback=blue!10,
  colframe=blue!20,
  arc=0.5mm, % Rounder edges
  fonttitle=\bfseries,
  % sharp corners=all,
  boxrule=0mm,
  boxsep=0mm,
  left=0mm,
  right=0mm,
  top=0mm,
  bottom=0mm
}
\newtcolorbox{examplebox}{
  colback=blue!10,
  colframe=blue!20,
  arc=2mm, % Rounder edges
  fonttitle=\bfseries,
  % sharp corners=all,
  boxrule=0mm,
  boxsep=1mm,
  left=0mm,
  right=0mm,
  top=0mm,
  bottom=0mm
}
\newtcolorbox{defbox}{
  colback=orange!10,
  colframe=orange!20,
  arc=2mm, % Rounder edges
  fonttitle=\bfseries,
  % sharp corners=all,
  boxrule=0mm,
  boxsep=1mm,
  left=0mm,
  right=0mm,
  top=0mm,
  bottom=0mm
}
\newtcolorbox{pbox}{
  colback=black!5,
  colframe=black!30,
  arc=2mm, % Rounder edges
  fonttitle=\bfseries,
  % sharp corners=all,
  boxrule=0mm,
  boxsep=1mm,
  left=0mm,
  right=0mm,
  top=0mm,
  bottom=0mm
}
\begin{document}
\title{Random-Access Ranked Retrieval and Similarity Search} % under Dynamic Scoring Function}

%%
%% The "author" command and its associated commands are used to define the authors and their affiliations.
\author{Mohsen Dehghankar}
\affiliation{%
  \institution{University of Illinois Chicago}
  \city{Chicago}
  \state{Illinois}
  \country{USA}
}
\orcid{0009-0006-1687-8012}
\email{mdehgh2@uic.edu}

\author{Abolfazl Asudeh}
\affiliation{%
  \institution{University of Illinois Chicago}
  \city{Chicago}
  \state{Illinois}
  \country{USA}
}
\orcid{0000-0002-5251-6186}
\email{asudeh@uic.edu}

\author{Raghav Mittal}
\affiliation{%
  \institution{University of Texas at Arlington}
  \city{Arlington}
  \state{Texas}
  \country{USA}
}
\orcid{0000-0002-4705-0123}
\email{rxm0006@mavs.uta.edu}

\author{Suraj Shetiya}
\affiliation{%
  \institution{IIT Bombay}
  \city{Mumbai}
  \state{Maharastra}
  \country{India}
}
\orcid{0000-0001-9166-2365}
\email{surajs@cse.iitb.ac.in}

\author{Gautam Das}
\affiliation{%
  \institution{University of Texas at Arlington}
  \city{Arlington}
  \state{Texas}
  \country{USA}
}
\orcid{0000-0002-4627-9065}
\email{gdas@uta.edu}

\begin{abstract}
We extend {\em Random Access}, a fundamental operation that enables efficient search and exploration algorithms, to the modern interactive data systems based on {\em Ranked Retrieval} and {\em Similarity Search}, where orderings are dynamically defined over a {\em high-dimensional} feature space. 
This extension enables efficient solutions for a wide range of applications, from data analytics tools and database systems to recommendation systems and machine learning.

We formalize the \emph{Random-Access Ranked Retrieval} (\textsc{RAR}) problem, and extend it to Similarity Search.
Our algorithmic innovations include the development of a theoretically efficient algorithm based on geometric arrangements, achieving logarithmic query time. However, this method suffers from exponential space complexity in high dimensions.
Therefore, we develop a second class of algorithms based on $\varepsilon$-sampling, which consume a linear space. Since exactly locating the tuple at a specific rank is challenging due to its connection to the range counting problem, we introduce a relaxed variant called \emph{$\kappa$-Random-Access Ranked Retrieval}, which returns a small subset of size $\kappa$ guaranteed to contain the target tuple.
To solve this problem efficiently, we define an intermediate problem, \emph{Stripe Range Retrieval} (\textsc{SRR}), and design a hierarchical sampling data structure tailored for narrow stripe range queries. Our method achieves practical scalability in both data size and dimensionality.
We prove near-optimal bounds on the efficiency of our algorithms and validate their performance through extensive experiments on real and synthetic datasets, demonstrating scalability to millions of tuples and hundreds of dimensions.
\end{abstract}

\maketitle

% DOI
% https://doi.org/10.1145/3770855.3817969

%%% do not modify the following VLDB block %%
%%% VLDB block start %%%
% \pagestyle{\vldbpagestyle}
% \begingroup\small\noindent\raggedright\textbf{PVLDB Reference Format:}\\
% \vldbauthors. \vldbtitle. PVLDB, \vldbvolume(\vldbissue): \vldbpages, \vldbyear.\\
% \href{https://doi.org/\vldbdoi}{doi:\vldbdoi}
% \endgroup
% \begingroup
% \renewcommand\thefootnote{}\footnote{\noindent
% This work is licensed under the Creative Commons BY-NC-ND 4.0 International License. Visit \url{https://creativecommons.org/licenses/by-nc-nd/4.0/} to view a copy of this license. For any use beyond those covered by this license, obtain permission by emailing \href{mailto:info@vldb.org}{info@vldb.org}. Copyright is held by the owner/author(s). Publication rights licensed to the VLDB Endowment. \\
% \raggedright Proceedings of the VLDB Endowment, Vol. \vldbvolume, No. \vldbissue\ %
% ISSN 2150-8097. \\
% \href{https://doi.org/\vldbdoi}{doi:\vldbdoi} \\
% }\addtocounter{footnote}{-1}\endgroup
%%% VLDB block end %%%

%%% do not modify the following VLDB block %%
%%% VLDB block start %%%
% \ifdefempty{\vldbavailabilityurl}{}{
% \vspace{.3cm}
% \begingroup\small\noindent\raggedright\textbf{PVLDB Artifact Availability:}\\
% The source code, data, and/or other artifacts have been made available at \url{\vldbavailabilityurl}.
% \endgroup
% }
%%% VLDB block end %%%

% \vspace{-6mm}
\section{Introduction}
Random access to data is a foundational operation in computer science, enabling efficient algorithms for search, selection, and exploration. However, this operation largely breaks down in modern data systems where access is governed by ranking functions rather than static keys. In such systems, data items are ordered by a {\em query-specific}, {\em user-specific}, or {\em machine-generated} scoring function\footnote{Also known as ranking functions.}, often {\em dynamically} defined over {\em high-dimensional} feature spaces. Yet, accessing the items at {\em arbitrary} rank positions or within a rank window typically requires fully materializing or sequentially scanning the ranked list, using operations such as {\tt get-next}.

This limitation arises across a wide range of data-driven applications, including database systems, interactive data exploration and analytics tools, information retrieval and recommendation systems, and machine learning pipelines~\cite{eldar2023direct, baeza1999modern, croft2010search, brin1998anatomy, broder2002taxonomy, ricci2021recommender}. Common scenarios include database queries on derived attributes\cite{ma2024spreadsheetbench,levandoski2010preference}, analytical exploration for business decisions~\cite{ma2024spreadsheetbench}, paged navigation of search results~\cite{croft2010search}, browsing ranked product lists in e-commerce~\cite{orlivskyi2021pagination}, and scrolling through personalized feeds on social platforms~\cite{li2010personalized}. 

A concrete example arises in web application APIs, which implement this through paged or offset-based navigation:

% \begin{examplebox}
% \vspace{-1mm}
\begin{example}\label{eg-1}
    In web and e-commerce applications, APIs commonly use {\em offset-based pagination}. 
    For instance, a request such as {\tt [GET /items?offset=400\&limit=5]} retrieves items ranked 401--405. 
    More generally, this pattern requires efficient retrieval of results 
    at deeper ranks without scanning all earlier entries~\cite{orlivskyi2021pagination, sudda2024optimizing}.
\end{example}
% \vspace{-2mm}
% \end{examplebox}

Similar scenarios also arise in data science and visual analytics tooling, where analysts frequently interact with large, ranked datasets. Tools such as spreadsheets, notebooks, and visual analytics platforms provide intuitive interfaces for exploration~\cite{chen2023visualizing,rahman2020benchmarking,shankar2022bolt,zhao2023data}, but the explosive growth of data volumes driven by big data and AI has placed heavy demands on their efficiency and responsiveness~\cite{bikakis2025visual}.
For example, while using data science workflows such as Jupyter notebooks~\cite{kluyver2016jupyter}, analysts may need random access to particular ranks or quantiles (e.g., 10\textsuperscript{th}, 20\textsuperscript{th}, etc.) in a sorted dataframe, effectively \emph{skipping} earlier-ranked items.

Beyond ranked retrieval, closely related challenges arise in similarity search, where items are ordered by their distance or similarity to a query object.

% \begin{examplebox}
% \vspace{-1mm}
\begin{example}\label{eg-3}
In similarity search applications, such as nearest-neighbor queries over high-dimensional embeddings, items are ranked by their similarity to a query point.
While existing systems efficiently retrieve the top-$k$ most similar items, users and downstream applications may need to access items at arbitrary similarity ranks or within specific distance bands.
For example, an analyst may wish to examine the points whose similarity lies within a narrow range around a decision boundary.
% Supporting such queries requires efficient random access to the ranked similarity list, without enumerating all closer neighbors.
\end{example}
% \vspace{-2mm}
% \end{examplebox}

% Whether in database systems, web applications, visual analytics tools, or scientific workflows, 
% the ability to perform {\em efficient random-access ranked retrieval}, i.e., to quickly access the tuple\footnote{We use the terms \emph{tuple} and \emph{item} interchangeably.} at an arbitrary rank position $i$, without the overhead of enumerating the entire dataset up to that position, is crucial for maintaining the {\em interactivity} and {\em responsiveness} expected by users.
% Particularly, the large number of 
% features (aka. attributes) in domains such as AI create a {\em high-dimensional space of possible ranking functions}, potentially encompassing up to hundreds of dimensions, especially when ranking functions are machine-generated~\cite{li2010personalized, mcauley2015image}. 

Traditionally, research on ranked retrieval has focused on \emph{top-$k$} query processing, where the goal is to identify the highest-ranked tuples~\cite{ilyas2008survey, fagin, gautopk, ilyasrank}. In this setting, results are retrieved sequentially from the top of the ranking up to position $k$, where $k$ is typically a small constant relative to the dataset size. 
Similarly, in the context of similarity search, $k$-nearest neighbor (kNN) search aims to retrieve the $k$ most similar points to a query~\cite{wang2021comprehensive, pan2024survey, malkov2018efficient}. These approaches are designed for small values of $k$ and, as shown in our evaluations, become ineffective for random-access queries that seek to retrieve a tuple at an arbitrary rank position $i$, where $i$ may be as large as $O(n)$.

Direct access to conjunctive queries in database systems has been studied to directly access a specific position in the join result, without fully materializing the join~\cite{tziavelis2024ranked, carmeli2023tractable, eldar2023direct, bagan2008computing}.
However, these works assume a {\em prespecified ordering} of the tuples (e.g., lexicographic order on an attribute) and focus on skipping the join operation.
As a result, such approaches are not suitable for settings where the ranking function is part of the input query\footnote{Please refer to Appendix~\ref{sec:app:related} for a complete review of the related work.}.

\stitle{Contributions} In this paper, we formalize and study the {\em Random-Access Ranked Retrieval} (\ith) problem and the {\em Random-Access Similarity Search} problem (\similarity) (Section~\ref{sec:def}). We show a reduction from \similarity\ to \ith\ problem.
Next, we propose an algorithm for \ith\ that uses computational geometric concepts of duality, arrangements, and levels of arrangements.
However, this algorithm suffers from the curse of dimensionality, and its space complexity increases exponentially with dimension (Appendix~\ref{sec:app:kthlevel}). 

Consequently, we propose several algorithms with linear space complexity even for a large number of attributes, based on the concept of $\eps$-samples (Section~\ref{sec:epssampling}).
We start from the two-dimensional case and extend our algorithm to higher dimensions by combining it with the stripe range searching problem. This approach offers an efficient solution for finding a conformal set for an arbitrary rank position.
Efficiently finding the tuples within a {\em narrow} stripe range is fundamental in the development of our solution.
We formulate this as the {\em stripe range retrieval} (\srs) problem, focusing on developing efficient algorithms for the challenging narrow ranges that contain only a few tuples.

Inspired by the Hierarchical Navigable Small World graphs for approximate nearest neighbor search~\cite{malkov2018efficient,dehghankar2025henn}, our practical and scalable algorithm constructs a hierarchical sampling structure for answering \srs, maintaining the neighborhood regions of points in the hierarchy using smallest enclosing balls (Section~\ref{sec:hierarchical}).

Theoretically, we show that our algorithms achieve near-optimal efficiency up to a logarithmic factor (Appendix~\ref{sec:app:opt}).
In addition, we conduct extensive experiments on real-world and synthetic datasets to evaluate the performance of our algorithms in practice and compare them with existing baselines (Section~\ref{sec:exp}).
Our experiments confirm the \emph{efficiency and scalability} of our proposed algorithms. In particular, the $\eps$-sampling algorithm, when combined with the hierarchical structure, achieves the best scalability with respect to both dimension and dataset size in terms of query time and index size, whereas the baseline methods are mostly affected by the curse of dimensionality.

% \vspace{-2.5mm} 
\begin{comment}
\stitle{Summary of Contributions}
We start by talking about the applications of random-access in different domain in Section~\ref{sec:applications}. Then,
we formalize the problems of (i) random-access ranked retrieval (\ith), (ii) stripe range retrieval (\srs), and (iii) random-access similarity search (\similarity) in Section~\ref{sec:def}. 
We establish a lower bound for the \ith\ problem and prove that our algorithms are near-optimal up to a logarithmic factor (Appendix~\ref{sec:app:opt}).
We introduce a logarithmic-time algorithm based on hyperplane arrangements for low-dimensional settings (Appendix~\ref{sec:app:kthlevel}).
We present linear-space algorithms based on $\eps$-sampling and range searching (Section~\ref{sec:epssampling}).
We develop a practical hierarchical index for the \srs\ problem using random sampling, which scales to large datasets and narrow stripe ranges (Section~\ref{sec:hierarchical}).
Finally, we validate our methods through comprehensive experiments across diverse settings (Section~\ref{sec:exp}).
\end{comment}
Due to space limitations, we defer extended discussions on applications, related work, more general settings (including dynamic and non-linear scoring functions), additional experiments, and omitted proofs and pseudocode to the appendix.

\vspace{-2mm}
\section{Applications Overview}
\label{sec:applications}

Random-access ranked retrieval (\ith) arises naturally across a broad range of modern data systems that rely on query-dependent ranking functions. While Examples~\ref{eg-1} and~\ref{eg-3} illustrate its relevance in web and e-commerce applications, similar access patterns appear in databases, data analytics platforms, and decision-support systems. Below, we briefly summarize several representative application domains; a detailed discussion with concrete examples is provided in Appendix~\ref{sec:app:applications}.
Together, these applications highlight \ith as a general-purpose access primitive with broad impact across ranked data management and analysis.

\noindent{\bf Extending Random Access to Ranked Retrieval.}
Random access to arbitrary positions in arrays and sorted lists is a fundamental algorithmic primitive, enabling classic techniques such as {\bf binary search}. In contrast, ranked retrieval systems typically only support sequential access, where retrieving the item at rank $i$ requires enumerating all higher-ranked results. \ith bridges this gap by enabling direct access to arbitrary rank positions under a ranking function, thereby extending a broad class of random-access–based algorithms to query-dependent, high-dimensional ranking settings.

\noindent{\bf Analytical Insights and Decision Support.}
Many analytics and business decision-making tasks require understanding how key metrics behave across different rank positions or percentiles, rather than focusing solely on top-$k$ results. Examples include boundary analysis in targeted advertising, percentile-based customer analytics, and sensitivity analysis around decision cut-offs~\cite{cormode2021relative, shrivastava2004medians, zhu2025approximation, greenwald2001space}. Such tasks require efficient access to arbitrary quantiles and narrow rank intervals, which \ith supports without exhaustive scanning.

\noindent{\bf Database Queries over Derived Attributes.}
Modern database and spreadsheet systems frequently operate over \emph{derived attributes} computed at query time, such as weighted combinations or transformations of existing columns~\cite{levandoski2010preference, ghosh2022jenner, rebman2023industry,ma2024spreadsheetbench}. Since these attributes do not exist at indexing time, supporting efficient ranked access over them is challenging. By treating derived attributes as query-defined scoring functions, \ith enables scalable access to arbitrary rank positions and rank ranges without materializing or sorting the full result set.

\noindent{\bf Individually Fair Decision Boundaries.}
In score-based decision systems, cut-off points often serve as decision boundaries and must be chosen carefully to ensure interpretability and individual fairness~\cite{ahn2019fairsight, liu2024rethinking, balciouglu2022notion, meng2024ranked}. Identifying meaningful boundaries requires localized inspection of ranked items around candidate cut-off points to detect significant score gaps. \ith enables such narrow, rank-local exploration, supporting principled and responsible boundary selection.

\vspace{-2mm}
\section{Formal Definitions}\label{sec:def}

% \input{tables/table_of_notations}

% \paragraph{Data Model.} 
We consider a dataset $\data = \{\point_1, \point_2, \cdots, \point_n\}$ consisting of \(n\) points in \(\Reals^d\), where each point \(\point \in \data\) has \(d\) real-valued (scoring) attributes, aka. features, used for ranking\footnote{Besides scoring attributes, the dataset may contain other attributes such as non-ordinal descriptive attributes that are not used for ranking. In this paper, we use the terms attribute and feature interchangeably to refer to the scoring attributes.}. 
A \emph{scoring function} \(\score: \Reals^d \rightarrow \Reals\) assigns a real-valued score to each point in the dataset\footnote{We use the terms scoring function and ranking function interchangeably.}. We consider the \emph{linear scoring functions}, where the score is defined as a weighted sum of the features. Specifically, given a weight vector \(\ef \in \Reals^d\), the score assigned to a point \(\point\) is computed as the dot product\footnote{We assume that the scoring vector \(\ef\) is normalized, i.e., \(\|\ef\| = 1\).}:
$\score_\ef(\point) = \ef^\top \point = \sum_{j=1}^d \ef[j]\point[j]$.
We may directly call $\ef$ as the scoring function.
Figure~\ref{fig:eps-sampling} (left) provides an example of a dataset $\data$ with $n=16$ points and two attributes $x$ and $y$.
Let the red origin-anchored ray in the figure represent $f$.
The perpendicular projection of each point \(\point\) on the vector $f$ shows $\score_\ef(\point)$. 

% \color{blue}

% The scope of our work is not limited to linear scoring functions: we support (a) any scoring function that can be expressed as a linear combination of non-linear terms (c.f. Appendix~\ref{app:nonlinear}), and (b) similarity measures, including Euclidean and Cosine similarity (c.s. Section~\ref{sec:3:ras}).

Our work extends beyond linear functions, supporting (a) scores expressed as linear combinations of non-linear terms (Appendix~\ref{app:nonlinear}) and (b) common similarity measures, like Euclidean and Cosine similarity (Section~\ref{sec:3:ras}). %; for simplicity, we omit these extensions in the main presentation.

\vspace{-3mm}
\subsection{Random-Access Ranked Retrieval (\ith)}\label{sec:def:DA}

\begin{figure}[!t]
    \centering
    \vspace{-4mm}
    \includegraphics[width=0.99\linewidth]{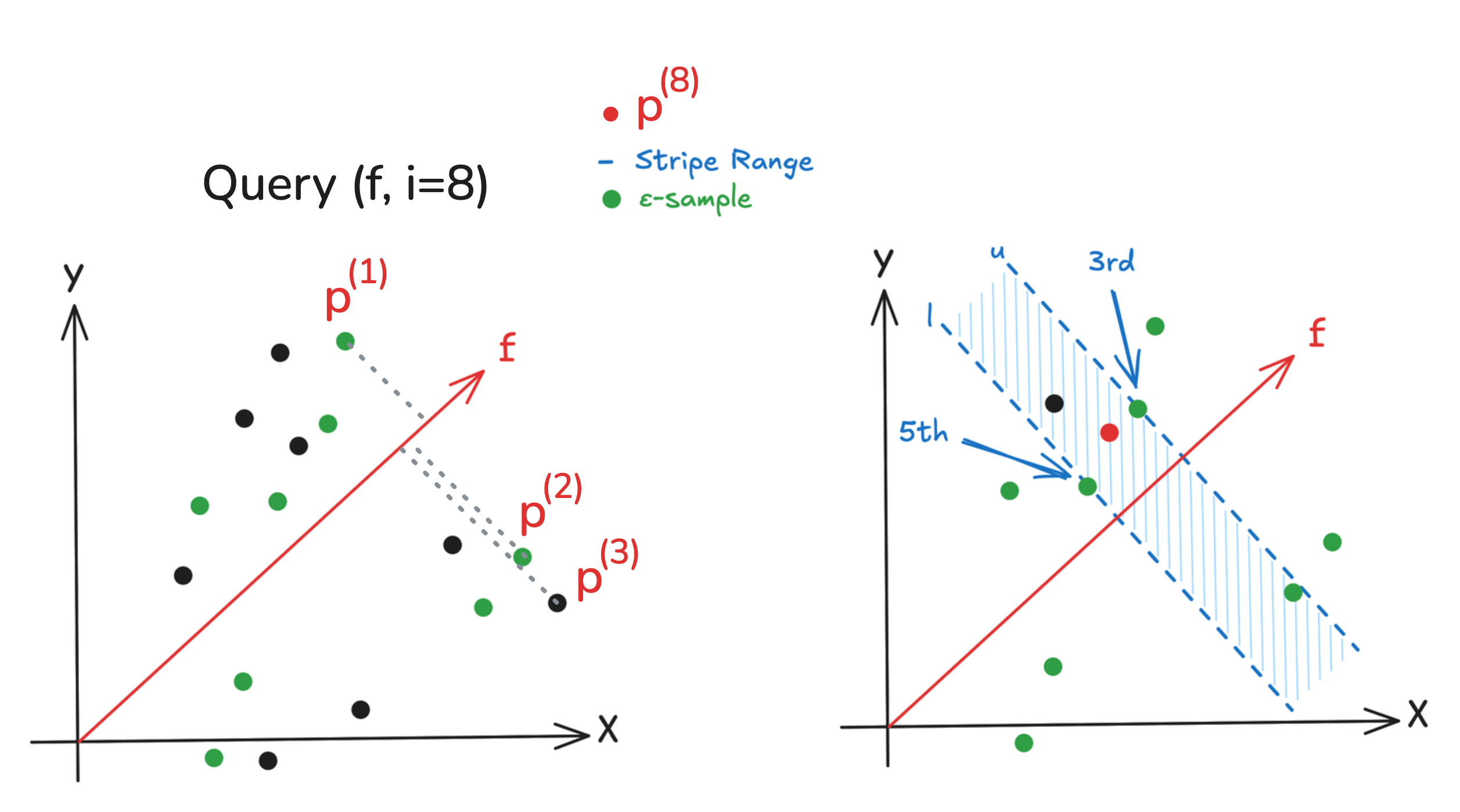}
    \vspace{-6mm}
    \caption{A visual representation of \textsc{Eps2D} algorithm. The toy dataset contain 16 points and the query asks for $(i = 8)$-th point with the scoring function shown in red. (Left) First, we find the bounds by solving \ith~ on the $\eps$-sample. (Right) Finding the $\ell$ and $u$ values, we solve the \srs~ problem on the stripe range.}
    \label{fig:eps-sampling}
    \vspace{-4mm}
\end{figure}

% \paragraph{Rank.} 
Given a scoring function $\ef$, we define a total order over the dataset \(\data\) by sorting the points in descending order of their scores. 
For example, in Figure~\ref{fig:eps-sampling}, the ordering of the projections on the vector of $f$ shows their ranking.
Let \(\rank_{\data, \ef} : \data \rightarrow \Naturals\) denote the function that assigns to each point in \(\data\) its rank in this ordering (i.e., \(\rank_{\data, \ef}(p) = i\) if \(p\) is the \(i\)-th highest-scoring point under \(\score_\ef\)). When the dataset \(\data\) is clear from the context, we write \(\rank_\ef\) for brevity.
We denote by \(\point^{(i)}\) the $i$-th point (aka the $i$-th element) under \(\score_\ef\), that is,
% \vspace{-2mm}
\(
\point^{(i)} = \rank^{-1}_\ef(i), \quad \forall\; 1 \leq i \leq n.
\) 
The main problem is to efficiently find the point at a specific rank $i$.

\vspace{-2mm}
\begin{pbox}
\begin{problem}[$\kappa$-Random-Access Ranked Retrieval ($\kappa$-\ith)]\label{prob:ith}
    Given a dataset \(\data\) available at preprocessing time, a query consists of a linear scoring function \(\ef\in\Reals^d\) and an integer \(i\in[n]\), find a conformal set $\conf(i) \subset \data$, where $|\conf(i)| = \kappa$ and $p^{(i)} \in \conf(i)$.
\end{problem}
\end{pbox}
\vspace{-1mm}
Based on this definition, when $\kappa = 1$, the task reduces to identifying the exact $i$-th ranked element. As shown in Appendix~\ref{sec:app:opt}, finding $\point^{(i)}$ in high-dimensional spaces is challenging, as its complexity is tied to that of the \emph{half-space range counting} problem~\cite{chazelle1989lower, agarwal2017range}.
For $1 < \kappa \ll n$, this formulation naturally leads to a relaxed variant of the problem: instead of returning a single point, the output is a small set that is guaranteed to contain $\point^{(i)}$. Inspired by the notion of conformal prediction in machine learning~\cite{angelopoulos2021gentle, shafer2008tutorial}, we refer to this set as the \emph{conformal set} for rank~$i$\footnote{\label{fn:kappa}We remove $\kappa$ when it is clear from the context or when we refer to the general problem for any $\kappa$.}.

% \color{black}

% \begin{pbox}
% \begin{problem}[Random-Access Ranked Conformal-Set (\conformal)]\label{prob:conformal}
%     Given a dataset \(\data\) available at preprocessing time, a query consists of a linear scoring function \(\ef\in\Reals^d\), an integer \(i\in[n]\), and a small value $\kappa \ll n$, find a conformal set $\conf(i)\subset \data$, where $|\conf(i)| \leq \kappa$ and 
%     \(
%         \rank^{-1}_{\data,\ef}(i) \in \conf(i)
%     \). %\mohsen{I changed the bound on the size of output to $\kappa$}
% \end{problem}
% \end{pbox}

% The \emph{relaxed} version of Problem~1 requires returning a subset $\result \subset \data$ such that the $i$-th ranked point $p^{(i)}$ is included in $\result$.

\vspace{-2mm}
\subsection{Stripe Range Retrieval (\srs)}\label{sec:def:stripe}
% \mohsen{Should we use 'stripe range retrieval' or 'stripe range searching'?}\abol{I agree}

While addressing Problem~\ref{prob:ith}, we also solve a side problem that finds all points that fall in a stripe range. Formally, a \emph{stripe range} \(\Stripe_{\ef, \ell, u}\) is defined by a scoring function \(\ef \in \Reals^d\) and two real-valued boundaries \(\ell, u \in \Reals\) such that \(\ell \leq u\). It corresponds to the universe of all points \(x \in \Reals^d\) whose scores based on \(\ef\) lies within the interval \([\ell, u]\), i.e.,

{\small
\vspace{-3mm}
\[
    \Stripe_{\ef, \ell, u} = \{ x \in \Reals^d \mid \ell \leq \score_\ef(x) \leq u \}.
\]
\vspace{-4mm}
}

Given a stripe range query, we aim to return the set of points in $\dee$ that fall inside the given stripe range.
For example, in Figure~\ref{fig:eps-sampling}, the stripe range query is highlighted between the two blue lines.

\vspace{-1mm}
\begin{pbox}
\begin{problem}[Stripe Range Retrieval (\srs)]\label{prob:stripe}
    Given a dataset \(\data\) and a query consisting of a \emph{stripe range}
    \(\Stripe_{\ef, \ell, u}\), return all points in \(\data\) that lie within this range, i.e.,
    \(
    \result = \Stripe_{\ef, \ell, u} \cap \data
    \).
\end{problem}
\end{pbox}

\vspace{-3mm}
\subsection{Random-Access Similarity Search (\similarity)}\label{sec:3:ras}
% \color{blue}

Similar to the ranked retrieval problem, we define a \emph{random-access variant of the similarity search problem} (also known as nearest neighbor search) under a given similarity metric.

Formally, let
$\dist : \data \times \data \rightarrow \mathbb{R}$
be a distance function that induces the similarity metric on the dataset $\data$. For a query point $q$, we define a ranking function
$\nnrank_q : \data \rightarrow \mathbb{N},$
which orders points in $\data$ by decreasing similarity to $q$ (equivalently, by increasing distance $\dist(q,\cdot)$). In other words, the most similar point to $q$ (i.e., the nearest neighbor) has rank $1$.

Despite the classical NN problem that retrieves the $k$-nearest neighbors for a constant $k$, we consider a more general \emph{random-access} problem, retrieving the point in rank $i$ for any $i \in [n]$.

\vspace{-2mm}
\begin{pbox}
    \begin{problem}[$\kappa$-Random-Access Similarity Search ($\kappa$-\similarity)]\label{prob:similarity}
        Given a dataset $\data$ and a similarity metric $\dist$ available at preprocesing time, a query consists of a point $q \in \mathbb{R}^d$, and an integer $i \in [n]$, find a conformal set $\conf(i) \subset \data$, where $|\conf(i)| = \kappa$ and $p^{(i)} = \nnrank_q^{-1}(i) \in \conf(i)$.
    \end{problem}
\end{pbox}

Similar to Problem~\ref{prob:ith}, when $\kappa = 1$, we look for the exact $i$-th point, while for $1 < \kappa \ll n$, we have a relaxed version of the problem\footref{fn:kappa}.

In this work, we focus on two widely used distance metrics in ANN search and practical similarity search systems: the $\ell_2-$norm (Euclidean distance) and cosine similarity.

In the following, we first show how the \similarity\ problem can be reduced to \ith. Subsequently, we focus on solving \ith, which represents the more general problem.

\color{black}
\vspace{-1mm}
\subsection{Reducing RAS to RAR}\label{sec:reduction}
% \color{blue}
\begin{figure}[t]
    \centering
    \includegraphics[width=\linewidth]{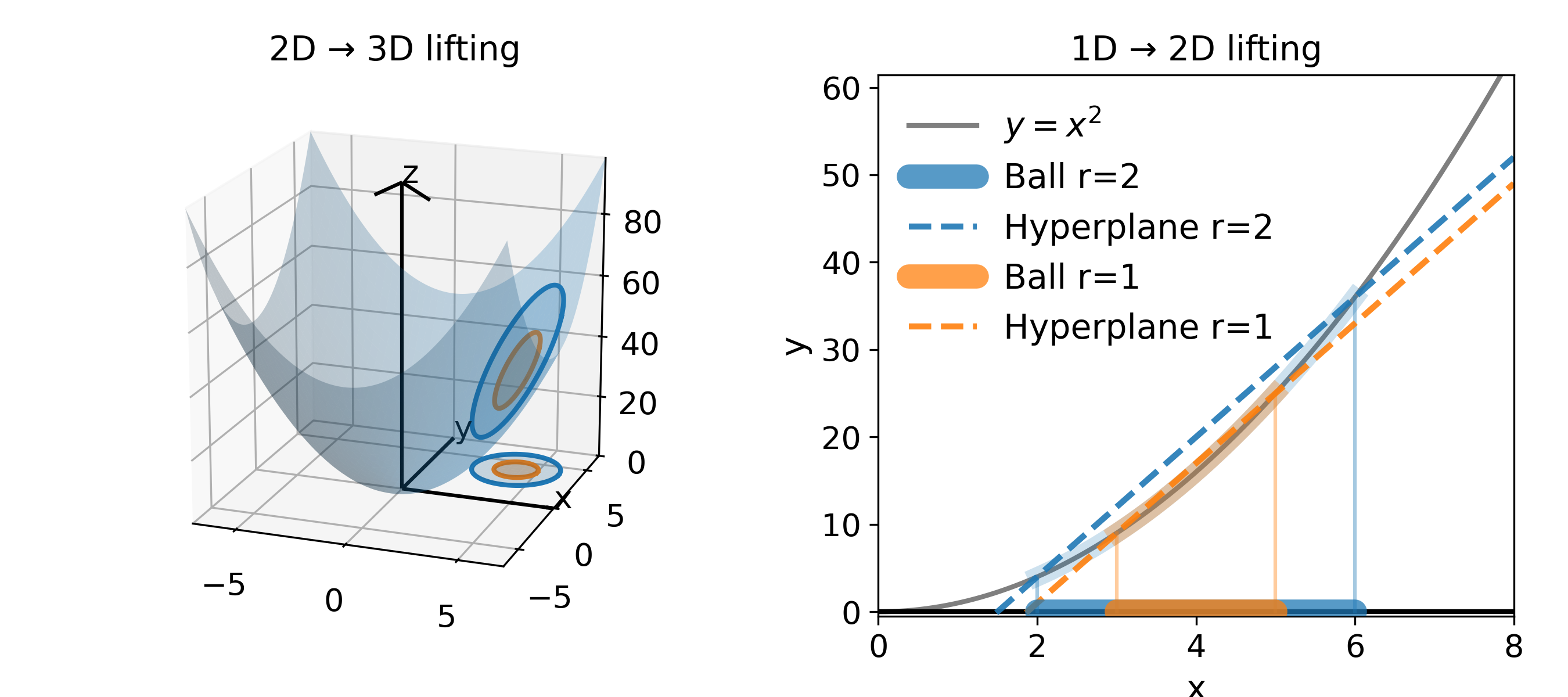}
    \vspace{-8mm}
    \caption{Paraboloid lifting~\cite{har2011geometric}. (Right) Lifting the points in $\mathbb{R}$ into $\mathbb{R}^2$, which transforms segments (1D Balls) into the intersection of half-planes and parabola ($y = x^2$). (Left) Similarly, the $\mathbb{R}^2$ to $\mathbb{R}^3$ lifting.}
    \label{fig:lifting}
    \vspace{-5mm}
\end{figure}

In this section, we use existing techniques to reduce the \similarity\ problem to the \ith\ problem, and show that an efficient solution to the \ith\ problem would also solve the \similarity\ problem. 

\smallskip
\noindent\textbf{Euclidean distance}: Given a query point $q\in \mathbb{R}^d$, 
the $\ell_2-$norm similarity of a point $p\in\mathcal{D}$ to $q$ is measured as
% the similarity search problem based on $\ell_2-$norm finds the point in $\mathcal{D}$ that is the nearest to $q$ where the similarity is measured by
$\dist(p,q)=\|p - q\|_2.$
In Appendix~\ref{app:reduction-1}, we prove Lemma~\ref{lem:reduction-1}, which provides the reduction of the \similarity\ problem to the \ith problem for $\ell_2$-norm:

\vspace{-1mm}
\begin{lemma}\label{lem:reduction-1}
Let $\phi:\mathbb{R}^d \rightarrow \mathbb{R}^{d+1}$ denote the paraboloid lifting transform~\cite{har2011geometric}, which maps a point in $\mathbb{R}^d$ to $\mathbb{R}^{d+1}$ (see Figure~\ref{fig:lifting}). The transformation is defined as
{\small\[
\phi(p) = (p[1], p[2], \ldots, p[d], \|p\|_2^2).
\]}

Using this mapping, Euclidean balls in $\mathbb{R}^d$ correspond to halfspaces in $\mathbb{R}^{d+1}$. Moreover, using the scoring function
{\small\[
f = (-2q[1], -2q[2], \ldots, -2q[d], 1),
\]}
the \similarity\ problem on query $q$ reduces to the \ith\ problem on $f$.
\end{lemma}

\vspace{-1mm}
\smallskip
\noindent\textbf{Cosine distance}: Cosine distance is a dissimilarity measure derived from cosine similarity (more specifically, $1-$ cosine similarity) that captures the angular separation between two vectors. For a query $q$, the cosine distance is given by
$\dist(p,q)=1-({q^\top p})/{(\|q\|_2\cdot\|p\|_2)}.$
Lemma~\ref{lem:reduction-2}, proven in Appendix~\ref{app:reduction-2}, provides the reduction of the \similarity\ problem to the \ith problem for this distance:

\begin{lemma}\label{lem:reduction-2}
    Let $\phi:\mathbb{R}^d \rightarrow \mathbb{R}^d$ denote the normalization map, which scales each point to unit $\ell_2$ norm. The transformation is defined as
    % \vspace{-1mm}
    {\small
    \[
    \phi(p) = \frac{p}{\|p\|_2}.
    \]
    }
    Using this mapping, the \similarity\ problem with query point $q$, can be reduced to \ith\ problem with the scoring function:
    {\small
    \[
        f=(\frac{q[1]}{\|q\|_2},\ldots, \frac{q[d]}{\|q\|_2}).
    \]}
\end{lemma}

% Similar to the previous approach, the numerator, $q^\top p$ can be interpreted as a dot product between the linear function $q$ and $p$. 

% To transform the cosine distance based \similarity\ to \ith\, we apply a different transformation that directly transforms the $d$ dimensional \similarity\ problem to the $d$ dimensional \ith\ problem. Instead of explicitly dividing by $\|q\|_2\cdot \|p\|_2$ after applying the linear function, we follow the standard approach of normalizing both the query $q$ and data points $p$ to unit length. This normalization does not change the cosine distance $d(p,q)$; rather, it makes $\dist(p,q)$ equivalent to the inner product $q^\top p$. Accordingly, our transformation normalizes every data point $p \in \mathcal{D}$ to unit length during pre-processing and normalizes the query $q$ to unit length at query time.

% \color{red}
% {Mohsen: Why did I mention in the 'Formal Definition' that the function should be normalized? Should we remove that footnote? to think}
% \color{blue}

Note that these reductions equally apply to any value of $\kappa \geq 1$. These transformations, map a ball in $\kappa$-\similarity\ problem to a half-space in $\kappa$-\ith\ problem (see Figure~\ref{fig:lifting}). As a result,
from now on, we focus only on the general $\kappa$-\ith\ problem.

% \color{black}
\vspace{-2mm}
\section{Background}\label{sec:background}
In this section, we summarize the key concepts and definitions used in this paper. We assume to be  given a range space $(\data, \mathcal{R})$, where $\data$ is the dataset, and $\mathcal{R}$ is a collection of geometric ranges defined over $\data$, such as all possible balls, half-spaces, or other ranges.

\stitle{VC-Dimension}
The Vapnik-Chervonenkis (VC) dimension~\cite{vapnik2015uniform} for a range space is the size of the largest subset $C \subseteq \data$ such that every subset of $C$ can be realized as the intersection of $C$ with some range in $\mathcal{R}$. A low VC-dimension implies a limited expressive power, which in turn allows for efficient sampling-based approximations~\cite{har2011geometric}.

\stitle{$\eps$-Samples}
An $\eps$-sample of a range space $(\data, \mathcal{R})$ is a subset $\epssample \subseteq \data$ such that, for every range $R \in \mathcal{R}$, the proportion of points in $R$ is approximately preserved in $\epssample$. More formally, $\epssample$ is an $\eps$-sample if for all $R \in \mathcal{R}$,

\vspace{-4mm}
\begin{small}
\begin{align*}
\left| \frac{|R \cap \data|}{|\data|} - \frac{|R \cap \epssample|}{|\epssample|} \right| \le \eps.
\end{align*}
\vspace{-3mm}
\end{small}

For more details on $\eps$-sampling, we refer the reader to \cite{har2011geometric}. 
We use the following foundational theorem on $\eps$-samples as a key tool in deriving our theoretical guarantees:

% \vspace{-2mm}
\begin{theorem}[$\eps$-sample theorem~\cite{vapnik2015uniform}]\label{thm:epssample}
Let $(\data, \mathcal{R})$ be a range space of VC-dimension $\delta$. Then for any $\eps, \varphi \in (0,1)$, a random sample $S \subseteq \data$ of size
\begin{small}
\(
    O\left( \frac{\delta}{\eps^2}\log \frac{\delta}{\eps} + \frac{\delta}{\eps^2}\log \frac{\delta}{\varphi}\right)
\)
\end{small}
is an $\eps$-sample of $(\data, \mathcal{R})$ with probability at least $1 - \varphi$.\hfill\qed
\end{theorem}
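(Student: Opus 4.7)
The plan is to prove this via the classical two-stage argument that combines a combinatorial bound (Sauer--Shelah) with a probabilistic concentration bound, using the double-sampling (symmetrization) trick to reduce a statement about all of $\data$ to a statement about a ghost sample. Fix a random sample $S$ of size $m$, drawn with or without replacement from $\data$. For a fixed range $R\in\mathcal{R}$, let $\mu(R)=|R\cap\data|/|\data|$ and $\hat\mu(R)=|R\cap S|/|S|$. We want to show $\sup_{R\in\mathcal{R}}|\mu(R)-\hat\mu(R)|\le\eps$ with probability at least $1-\varphi$. Union-bounding naively over $\mathcal{R}$ does not work, since $\mathcal{R}$ can be infinite; the key is to replace $\mathcal{R}$ by the (finite) set of distinct traces it induces on a sample.

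First I would set up symmetrization. Draw a second independent ghost sample $S'$ of size $m$, and let $\hat\mu'(R)$ be the analogous empirical measure. A standard calculation (using that $\hat\mu'$ concentrates around $\mu$ when $m$ is large enough, roughly $m\ge 1/\eps^2$) shows that
\begin{align*}
\Pr\!\Bigl[\sup_R |\mu(R)-\hat\mu(R)|>\eps\Bigr]
\le 2\,\Pr\!\Bigl[\sup_R |\hat\mu(R)-\hat\mu'(R)|>\tfrac{\eps}{2}\Bigr].
\end{align*}
Now, the crucial point: the supremum on the right only depends on how the ranges cut the pooled $2m$-sample $S\cup S'$. By the Sauer--Shelah lemma, a range space of VC-dimension $\delta$ induces at most $O((2m)^\delta)$ distinct traces on a set of size $2m$, so we may take a union bound over this many effective ranges rather than all of $\mathcal{R}$.

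Next I would bound each term in the union. Conditioning on the pooled sample and drawing a uniformly random partition into $(S,S')$, the quantity $\hat\mu(R)-\hat\mu'(R)$ for a fixed trace is a hypergeometric-like deviation, to which a Chernoff/Hoeffding bound applies and yields a tail probability of the form $\exp(-c\eps^2 m)$ for some constant $c$. Putting this together with the growth function bound gives an overall failure probability
\begin{align*}
\Pr\!\Bigl[\sup_R |\mu(R)-\hat\mu(R)|>\eps\Bigr]
\;\le\; c_1\,(2m)^\delta\,\exp\!\bigl(-c_2\,\eps^2 m\bigr).
\end{align*}
Setting this to at most $\varphi$ and solving for $m$ yields the claimed bound $m=O\!\bigl(\tfrac{\delta}{\eps^2}\log\tfrac{\delta}{\eps}+\tfrac{\delta}{\eps^2}\log\tfrac{1}{\varphi}\bigr)$, where the first term comes from inverting $m^\delta\le e^{c_2\eps^2 m/2}$ and the second from absorbing the $\log(1/\varphi)$ slack.

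The main obstacle I expect is the symmetrization step: one must be careful that the ``easy'' direction (showing that if $\hat\mu$ deviates from $\mu$ by $\eps$ then with good probability $\hat\mu'$ is within $\eps/2$ of $\mu$, so $\hat\mu$ and $\hat\mu'$ differ by $\eps/2$) actually holds uniformly over $R$, which requires the preliminary lower bound on $m$ and a conditioning argument so that $S'$ is independent of the ``bad'' range witnessing the supremum. The rest---Sauer--Shelah and Chernoff---is routine, but optimizing the constants to get exactly the stated sample-size bound (in particular, producing the $\log(\delta/\eps)$ factor rather than a worse $\log m$) requires an iterative-bootstrap substitution when inverting the transcendental inequality for $m$.
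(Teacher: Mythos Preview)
The paper does not prove this theorem; it is stated as background with a citation to Vapnik--Chervonenkis and closed immediately with \qed. Your sketch is the standard symmetrization/Sauer--Shelah/Chernoff argument and is correct in outline, so there is nothing to compare against.
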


% \vspace{-2mm}
% \stitle{$\eps$-Sample Construction}
In addition to random sampling, which provides a probabilistic method for constructing $\eps$-samples, there also exist deterministic construction algorithms based on discrepancy theory~\cite{chazelle2000discrepancy, har2011geometric}. 
% These methods can achieve near-linear runtime.

\renewcommand{\arraystretch}{1.4}  % Default is 1.0
\begin{table*}[!tb]
\centering
\small
\begin{tabular}{|l|l|c|c|c|c|}
\hline
\textbf{Algorithm Name} & \textbf{Summary} & \textbf{Dimension} & \textbf{Space Complexity} & \textbf{Query Time} & \textbf{Problem} \\
\hline
\textsc{KthLevel} 
& Following the levels of arrangement 
& $d \geq 2$
& $O(n^d)$  
& $O(\log n)$
& $1$-\ith \\
% \multirow{3}{*}{\textsc{KthLevel}} 
% & Building the Arrangement of 
% & $d = 2$ 
% & \multirow{3}{*}{$O(n^d)$} %$O(n^{4/3})$ 
% & $O(\log n)$ %\multirow{3}{*}{$O(\log^{d-1} n)$} 
% & \multirow{3}{*}{\ith} \\
% & dual hyperplanes& $d = 3$ 
% & % $O(n^{5/2})$ 
% & $O(\log n)$ 
% & \\
% & & $d > 3$ 
% & % $O(n^d)$ 
% &  $O\big((d-1)\log n\big)$
% & \\
\hline
\textsc{Eps2D} & $\eps$-sampling in 2D & $d=2$ & $O(n)$ & $O(n^{2/3} \log n)$ & $1$-\ith \\
\hline
% \textsc{EpsRange} & $\eps$-sampling + Range Searching & $d \geq 2$ & $O(n)$ & $O(\max \{n^{1 - 1/d}, dn^{2/3}\log^2 n\})$ & (relaxed) \ith \\
\multirow{2}{*}{\textsc{EpsRange}} & \multirow{2}{*}{$\eps$-sampling + Range Searching} & \multirow{2}{*}{$d \geq 2$} & \multirow{2}{*}{$O(n)$} & $O(\max\{n^{1 - 1/d}, \frac{1}{\eps^2}\log\frac{1}{\eps}\})$ & $(\eps n)$-\ith \\
 &  &  &  & $O(\max\{n^{1 - 1/d}, n^{2/3}\log n\})$ & $1$-\ith \\
\hline
\textsc{EpsHier} & $\eps$-sampling + Hierarchical Sampling & $d \geq 2$ & $O(n)$ & Worst-case $O(n)$; Practically Efficient & $(\eps n)$-\ith \\
\hline
\end{tabular}
\caption{Overview of the proposed methods for the \ith~ problem.\footnotemark}
\label{tab:solution-overview}
\vspace{-9mm}
\end{table*}

\vspace{-2mm}
\section{Solution Overview}\label{sec:overview}

In this section, we provide a high-level overview of our proposed algorithms and summarize their theoretical guarantees. 

The baseline approach for answering the \ith queries uses the extension of the classic divide and conquer (D\&C) median finding algorithm, which, instead of the median, finds the element at position $i$~\cite{blum1973time,cormen2022introduction}.
To do so, it first makes a linear pass over $\data$, and for each point $p$, computes \(\score_\ef(p)\). Then, using the D\&C algorithm, it finds \(\rank^{-1}_{\ef}(i)\) in $O(n)$. The Quick Select algorithm ~\footnote{\href{https://en.wikipedia.org/wiki/Quickselect}{https://en.wikipedia.org/wiki/Quickselect}} also provides a randomized version of this baseline. 
Instead, our algorithms, summarized in Table~\ref{tab:solution-overview}, provide more efficient solutions by preprocessing the data and constructing proper data structures:

\stitle{\textsc{KthLevel}}
Our first algorithm for solving the \ith~problem uses the computational geometric concepts of {\em duality} and {\em arrangement} of hyperplanes~\cite{edelsbrunner1987algorithms, chazelle1985power}.
This algorithm, named \textsc{KthLevel}, keeps track of different {\em levels of the arrangement}, which are searched at query time to efficiently answer \ith queries, i.e., finding the tuple at a given rank position $i$.
Due to the space constraints, further details about this algorithm and its analysis are presented in Appendix~\ref{sec:app:kthlevel}.

This method is
an exact solution to the 1-\ith problem ($\kappa = 1$) that offers
highly efficient {\em logarithmic query time}, but it is not practical for high-dimensional settings ($d > 2$) as it has a {\em space complexity exponential to the number of attributes $d$}.
Our next algorithms based on $\eps$-samples aim to maintain linear space complexity.

\stitle{$\eps$-sampling} We start with the simple two-dimensional case, also illustrated in Figure~\ref{fig:eps-sampling}. During the preprocessing, we compute an \(\eps\)-sample of the dataset, denoted by \(\epssample_\eps\). Given a scoring function \(\ef\) and a target rank \(i\), we approximately solve the \ith~problem on the smaller set \(\epssample_\eps\), rather than the full dataset \(\data\).
From this approximate solution, we derive a stripe range containing the true \(i\)-th ranked point. By solving an instance of stripe range searching (\srs) we find this exact point. We refer to this algorithm as \textsc{Eps2D}. A visual representation of this approach is shown in Figure~\ref{fig:eps-sampling}.

For higher-dimensional settings, we follow a similar generalized strategy (\textsc{EpsRange}). However, unlike in 2D, standard range searching algorithms do not perform well in high-dimensional datasets. To overcome this, we propose a practical algorithm to solve the intermediate \srs problem with hierarchical sampling (\textsc{EpsHier}).

\stitle{Lower Bounds and Optimality}
A discussion on the lower bounds and optimality of our algorithms is provided in the Appendix~\ref{sec:app:opt}.
\begin{comment}
\section{Algorithm Based on $k$-th Level of Arrangement}\label{sec:kthlevel}
% Notations for this section
\newcommand{\dual}{\mathsf{dual}}
% End

We present an algorithm for solving the \ith~problem using the computational geometric concepts of {\em duality} and {\em arrangement} of hyperplanes~\cite{edelsbrunner1987algorithms, chazelle1985power}.
At a high level, building the arrangement of dual hyperplanes of the tuples, our approach, named \textsc{KthLevel}, keeps track of different {\em levels of the arrangement}, which are used at the query time for efficient answering of \ith queries, i.e., finding the tuple at a given rank position $i$.
This algorithm achieves a {\em logarithmic query time}, but at a cost of a {\em space complexity exponential to the number of attributes $d$}.

\color{blue}
This algorithm provides an exact solution to the \textsc{DAR} problem; however, it becomes impractical in high dimensions ($d > 2$) with exponential preprocessing time and space in $d$. The details of the algorithm and its analysis are presented in Appendix~\ref{sec:app:kthlevel}.
\color{black}
\end{comment}
\vspace{-3mm}
\section{$\eps$-sampling Approach}\label{sec:epssampling}
In theory, the solution based on constructing the arrangement of dual hyperplanes provides a logarithmic time for answering the $\ith$ queries (see Appendix~\ref{sec:app:kthlevel}). However, it is not practical, especially when the number of ranking attributes is not small ($d \geq 3$), since the size of the arrangement (hence, the space complexity) increases exponentially with $d$.
Therefore, in this section, we propose practical solutions that keep the space complexity linear, i.e., $O(n)$.
Our algorithms are based on the computational geometric concept of $\eps$-samples~\cite{har2011geometric} (also introduced in Section~\ref{sec:background}). 

In the following, 
We start by discussing the two-dimensional \textsc{Eps2D} algorithm, then we will generalize our solution to the higher dimensions, and propose practical methods to solve the intermediate \srs~ problem (Problem~\ref{prob:stripe}).

\vspace{-3mm}
\subsection{\textsc{Eps2D}: $\eps$-sampling in 2D}\label{sec:eps2d}
In this section, we discuss our solution based on $\eps$-samples in 2D, called \textsc{Eps2D}, also illustrated in Figure~\ref{fig:eps-sampling}.

\footnotetext{The Big-Oh notations in our guarantees hide constants that are polynomially dependent on the dimension $d$. Consistent with the prior literature~\cite{agarwal2017range}, our focus is on the exponential dependence on $d$.}

\stitle{Preprocessing}
During the preprocessing phase, the algorithm computes an \(\eps\)-sample of the dataset \(\data\), where the ranges are \emph{stripe ranges}. We denote this $\eps$-sample by \(\epssample_\eps\), for a given parameter \(\eps\).

\stitle{Query Answering}
Let \(m = |\epssample_\eps|\) denote the size of the $\eps$-sample built during the preprocessing, where \(m \ll n\)\footnote{This comes from Theorem~\ref{thm:epssample}, while noting that the VC-dimension of stripe ranges is $\delta=d+1$.}. We take the following steps to answer the query:

(\emph{Step i: Thresholding}) 
During the \emph{query phase}, given a query pair \((\ef, i)\), we begin by solving the $1$-\ith~problem on the \(\eps\)-sample \(\epssample_\eps\). Specifically, we identify the \(i_\ell\)-th and \(i_u\)-th ranked points in \(\epssample_\eps\) with respect to the scoring function \(\ef\), where:
\begin{align}\label{eq:klku}
    i_\ell = \left\lfloor m \left( \frac{i}{n} - \eps \right) \right\rfloor, \quad
    i_u = \left\lceil m \left( \frac{i}{n} + \eps \right) \right\rceil,
\end{align}
and \(m = |\epssample_\eps|\). This step can be done by sorting the points in $\epssample_\eps$ according to $\score_\ef$, or by using linear-time-in-$m$ baseline median-finding algorithms, such as median-of-medians.
These two points serve as score-based thresholds that are guaranteed to bound the score of the \(i\)-th ranked point in the full dataset \(\data\), as stated in the following lemma.

\begin{lemma}\label{lm:boundaries}
Let \(\epssample_\eps\) be an \(\eps\)-sample of the dataset \(\data\) for some \(\eps > 0\). Then, for any linear scoring function \(\ef\), the score of the \(i\)-th ranked point \(\point^{(i)}\) in \(\data\) is bounded as
\vspace{-1mm}
{\small
\[
    \score_\ef(q_\ell) \leq \score_\ef(\point^{(i)}) \leq \score_\ef(q_u),
\]}
where \(q_\ell\) and \(q_u\) are the \(i_\ell\)-th and \(i_u\)-th ranked points in \(\epssample_\eps\) under \(\ef\), i.e.,
\vspace{-2mm}
{\small
\[
    q_j = \rank^{-1}_{\epssample_\eps, \ef}(i_j), \quad \text{for } j \in \{\ell, u\}.
\]}
\end{lemma}

\begin{proof}\vspace{-1mm}
See Appendix~\ref{sec:app:proofs} for proof.
\end{proof}

(\emph{Step ii: \srs}) 
Next, we compute the score thresholds corresponding to the boundary points identified in the previous step:
{\small
\begin{align}\label{eq:threshold}
    \ell = \score_\ef(q_\ell), \quad u = \score_\ef(q_u).
\end{align}
}

Using these values, we define the stripe range \(\Stripe_{\ef, \ell, u}\). We then solve an instance of the \srs~problem on the dataset \(\data\) using this stripe range. The result is a subset of points within the score interval \([\ell, u]\), which we denote by \(\result\). This set contains the \(i\)-th ranked point \(\point^{(i)}\), since $\epssample_\eps$ is an $\eps$-sample.
% with high probability\footnote{This probability depends on the probability that $\epssample_\eps$ is actully an $\eps$-sample.\abol{I suggest to instead of using "high probability", use something like ``assuming that $\epssample_\eps$ is an $\eps$-sample''. we do not want to cast our approach as a Monte-carlo randomized}}. 
Note that this set is a conformal set for rank $i$, hence providing a solution to the relaxed $|\data_o|$-\ith~problem.

(Step iii: Range Counting) 
We define the half-space \(H_u\) as,
{\small
\begin{align}
    H_u = \{ \point \in \data \mid \point^\top \ef \geq u \},
\end{align}}
\(H_u\) contains all points in \(\data\) whose score under \(\ef\) is at least \(u\). We then apply a half-space range counting algorithm in two dimensions to compute the number of points lying within \(H_u\)~\cite{matouvsek1991efficient}.

(Step iv: Final Selection) 
In the final step, we sort the points in the candidate set \(\result\) in descending order according to the scoring function \(\score_\ef\). Let \(|H_u|\) denote the number of points in \(\data\) whose score is greater than or equal to \(u\), as computed in the previous step. We then return the \((i - |H_u|)\)-th point in the sorted list as the exact answer to the \ith~query.

A pseudo-code of \textsc{Eps2D} is shown in Algorithm~\ref{alg:eps2d-query} in Appendix~\ref{sec:app:pseudo}.
The analysis of the space and time complexity of \textsc{Eps2D} algorithm is also provided in Appendix~\ref{sec:app:eps2d}. The result can be summarized in the following theorem:

\vspace{-2mm}
\begin{theorem}
    The \textsc{Eps2D} algorithm solves the exact \ith~problem in 2D using linear space in $n$ and \( O(n^{2/3} \log n) \) query time\footnote{Note that we assumed the set $\epssample$ to be an $\eps$-sample. According to Theorem~\ref{thm:epssample}, this happens with high probability via random sampling.}.
\end{theorem}
% \vspace{}
\subsection{\textsc{EpsRange}: $\eps$-sampling in Higher Dimensions}
Now, we present the generalization of the \textsc{Eps2D} algorithm to higher dimensions (\( d \geq 2 \)), referred to as the \textsc{EpsRange} algorithm. The preprocessing phase remains unchanged: we compute an \(\eps\)-sample \(\epssample_\eps\) of the dataset \(\data\), of size \( m \), for a chosen parameter \(\eps\). We know that $m \ll n$, because the VC-dim of stripe ranges is $\delta=d + 1$~\cite{har2011geometric}.

During query time, the algorithm proceeds similarly to \textsc{Eps2D}. In Step~i, we find the $i_\ell$-th and $i_u$-th points in \(\epssample_\eps\) according to the scoring function \(\ef\). 
In Step~ii, we solve a high-dimensional instance of the \srs~ problem. The resulting set \(\result\) contains candidate points from \(\data\) that lie between the scores of the \(i_\ell\)-th and \(i_u\)-th ranked points in the sample. This set is a conformal set for the relaxed problem $|\data_o|$-\ith.
To obtain the exact solution, we proceed to Step~iii, where we perform a high-dimensional range counting query to determine the exact rank of each point in \(\result\). 
Finally, in Step~iv, we sort the candidates in \(\result\) and return the point with exact rank \(i\), completing the solution to the \ith~ selection problem.

\stitle{Analysis}
The analysis is similar to the 2D case, provided in Appendix~\ref{sec:app:epsrange:analysis}. The results are as follows:

\vspace{-1mm}
\begin{theorem}
    The \textsc{EpsRange} algorithm solves the $\kappa$-\ith~problem with linear space usage and $O(\max\{n^{1-1/d}, \frac{d}{\eps^2}\log \frac{d}{\eps}\})$ time, where $\kappa = O(\eps n)$. The exact version, $1$-\ith, is also solved with the same space usage and $O(\max\{n^{1 - 1/d}, dn^{2/3}\log n\})$ time.
\end{theorem}

\vspace{-2mm}
\subsection{Stripe Range Searching with Hierarchical Sampling}\label{sec:hierarchical}

\begin{figure}[t]
    \centering
    \includegraphics[width=0.9\linewidth]{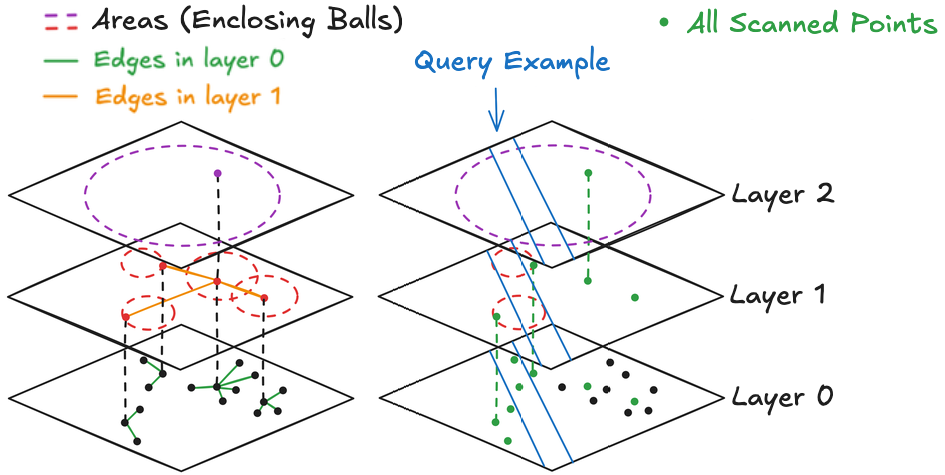}
    \vspace{-4.5mm}
    \caption{A visual representations of the Hierarchical Sampling structure for solving \srs~problem. In this setting, decay rate $r = 4$. (Left) The data structure built during the preprocessing. (Right) An example of running a query on top of this structure, the green points are all the points scanned during the query phase.} 
    \label{fig:hierarchical}\vspace{-4mm}
\end{figure}

% As observed in the previous section, t
The main bottleneck in the runtime of the \textsc{EpsRange} algorithm lies in the range searching subroutines used to solve the \srs~ problem. These subroutines often suffer from the curse of dimensionality, particularly for general simplices, such as non-orthogonal half-spaces within stripe ranges. 
Even algorithms with theoretical sublinear time guarantees tend to perform poorly in practice~\cite{matousek1992reporting}, with a runtime comparable to or worse than a naive linear scan over the entire dataset (see experiments in Section~\ref{sec:exp}).

To address this limitation, we propose a practical algorithm for solving the \srs~ problem, where the ranges are stripes in $d$ dimensions. We empirically show that augmenting \textsc{EpsRange} with this solution leads to significant performance improvements on real-world high-dimensional datasets (\textsc{EpsHier} algorithm).

We present a hierarchical data structure for solving the stripe range searching problem. We begin with a high-level overview of the structure, followed by a description of the \emph{preprocessing step}, and conclude with the details of the \emph{query phase} of the algorithm.

\stitle{Overview}
We observe that the stripe range queries constructed in the solution of the \ith~problem correspond to {\em narrow regions, especially for small choices of~$\eps$}. To exploit this, we construct a hierarchical structure over the input point set that enables efficient pruning of irrelevant points when answering such range queries.

The core idea is to organize the points in a hierarchy that preserves spatial proximity, inspired by similar structures used in nearest neighbor search~\cite{malkov2018efficient, haussler1986epsilon, balltree,dehghankar2025henn}\footnote{See Related Work (Appendix~\ref{sec:app:related}) for comparison.}. Each point in the hierarchy maintains a list of neighbors, and during preprocessing, we compute and store an \emph{enclosing ball} around each such neighborhood\footnote{Not necessarily centered at that point.}.

During query processing, we employ a simple yet effective heuristic: if the stripe intersects the enclosing ball of a node, we explore the node; otherwise, we safely prune it. This significantly reduces the number of points examined during query time.

Empirical results demonstrate that for the narrow stripe ranges used in the \ith~problem, this approach leads to a fast algorithm by eliminating many unnecessary explorations. See Figure~\ref{fig:hierarchical} (right), which illustrates the points explored during an example query.

% \vspace{-2mm}
\subsubsection{Preprocessing} \;

\stitle{Layers}
Given the input dataset $\data$, we construct a hierarchical graph $\mathcal{G}(\data)$ on top of it. The graph consists of $L$ layers built via recursive random sampling.
We define the base layer as the dataset itself, 
$\mathcal{L}_0 = \data$.
Each subsequent layer $\mathcal{L}_\ell$ is formed by randomly sampling $\frac{|\mathcal{L}_{\ell-1}|}{r}$ points from the previous layer $\mathcal{L}_{\ell-1}$, where $r$ is a hyperparameter known as the \emph{exponential decay rate}.
Thus, the size of each layer satisfies:
\begin{small}
% \vspace{-2mm}
\[
    |\mathcal{L}_\ell| = \frac{|\mathcal{L}_{\ell-1}|}{r}, \quad \forall\ 1 \leq \ell \leq L
\]
% \vspace{-4mm}
\end{small}

% \vspace{-3mm}
\stitle{Edges}
Once the layer $\mathcal{L}_\ell$ is sampled, each point in $\mathcal{L}_\ell$ serves as a centroid for partitioning the points in the layer below, $\mathcal{L}_{\ell-1}$. Specifically, each point $p \in \mathcal{L}_{\ell-1}$ is connected to its nearest neighbor in $\mathcal{L}_\ell$, forming a \emph{directed edge} from the centroid $c^*$ to $p$, where
\begin{align}\label{eq:edges}
    c^* = \arg\min_{c \in \mathcal{L}_\ell} ||p - c||_2.
\end{align}
This procedure creates a layered structure, where each level induces a graph by connecting lower-layer points to their nearest centroids in the current layer (see Figure~\ref{fig:hierarchical}, left, where each level illustrates the induced graph).
For every node $\point$ in the graph $\mathcal{G}(\data)$, let $N_\ell(\point)$ denote the set of its neighbors at layer $\mathcal{L}_\ell$. For example, in Figure~\ref{fig:hierarchical} (left), the root node has three red neighbors at layer 1 and five black neighbors at layer 0.

\stitle{Area of Nodes}
For each node $\point$ in layer $\ell$ of the graph $\mathcal{G}(\data)$, we associate a set called the \emph{area} of the node, denoted by $\mathcal{A}_\ell(\point)$. The area captures the set of points from the base layer that are hierarchically covered by this node.
The definition is recursive. For the base layer, we set:
$\mathcal{A}_0(\point) = \{\point\}.$
For any node $\point$ in layer $\ell > 0$, the area is defined as:
{\small\[
    \mathcal{A}_\ell(\point) = \bigcup_{q \in N_{\ell - 1}(\point)} \mathcal{A}_{\ell - 1}(q),
\]}
where $N_{\ell - 1}(\point)$ denotes the neighbors (i.e., children) of $\point$ in layer $\ell - 1$.
In other words, the area of a node at layer $\ell$ is the union of the areas of all its children in the layer below.
As illustrated in Figure~\ref{fig:hierarchical}, we represent the area of each node using the smallest enclosing circle, shown with dotted lines.

\stitle{Enclosing Balls}
For each node $\point$ at layer $\ell$, we define $\Ball_\ell(\point)$ as the \emph{smallest enclosing ball} that covers the area $\mathcal{A}_\ell(\point)$. These balls are later used during query time to enable efficient pruning of nodes when searching within a stripe range:
{\small\[
    \Ball_\ell(\point) = \textsf{Enclosed-Ball}(\mathcal{A}_\ell(\point)).
\]}

\vspace{-1mm}
\stitle{Construction Algorithm}
To construct the hierarchical graph $\mathcal{G}(\data)$, we employ a bottom-up recursive approach. Starting from the base layer, we iteratively sample a subset of points to form the next layer. At each level, we connect nodes in the current layer to their nearest centroids in the layer above, thereby forming directed edges. After establishing the connections, we compute and update the area sets and corresponding enclosing balls for the newly constructed layer. The pseudo-code of the preprocessing procedure is detailed in Algorithm~\ref{alg:hierarchical-preprocess} in Appendix~\ref{sec:app:pseudo}.
Please refer to the Appendix~\ref{sec:app:hierarchy} for the analysis of this algorithm.

\stitle{Querying Algorithm}
Given a stripe range $\Stripe_{f,\ell, u}$, the goal is to report all points in $\Stripe_{f,\ell, u} \cap \data$. The query process begins at the top of the hierarchy, starting from layer $\mathcal{L}_L$.
At each layer $\ell$, we examine whether the enclosing ball $\Ball_\ell(\point)$ of each point $\point \in \mathcal{L}_\ell$ intersects the stripe $\Stripe_{f,\ell, u}$. If there is no intersection, the corresponding node is pruned from further exploration. Otherwise, we proceed to explore its neighbors in the next lower layer, denoted by $N_{\ell - 1}(\point)$.
The full query procedure is provided in Algorithm~\ref{alg:hierarchical-query} in Appendix~\ref{sec:app:pseudo}. A visual illustration of this process is shown in Figure~\ref{fig:hierarchical} (right).

% \clearpage
\section{Experiments}\label{sec:exp}
In this section, we evaluate the performance of our algorithms in both the random-access ranked retrieval and similarity search settings.
Our code is publicly available\footnote{\href{https://github.com/UIC-InDeXLab/random-access-search}{\textcolor{blue}{{https://github.com/UIC-InDeXLab/random-access-search}}}}.
% \color{blue}
We begin by describing the datasets.
Next, we present results for the \emph{stripe range searching} subtask showcasing our proposed hierarchical index (Subsection~\ref{sec:exp:srs}).
We then evaluate our methods on two primary tasks: \emph{random-access ranked retrieval} (Subsection~\ref{sec:exp:ith}) and \emph{random-access similarity search} (Subsection~\ref{sec:exp:similarity}).
Additional results and experimental details are deferred to Appendix~\ref{sec:app:exp}.
% \color{black}

\begin{figure}[t]
    \vspace{-1mm}
    \centering
    \includegraphics[width=0.495\linewidth]{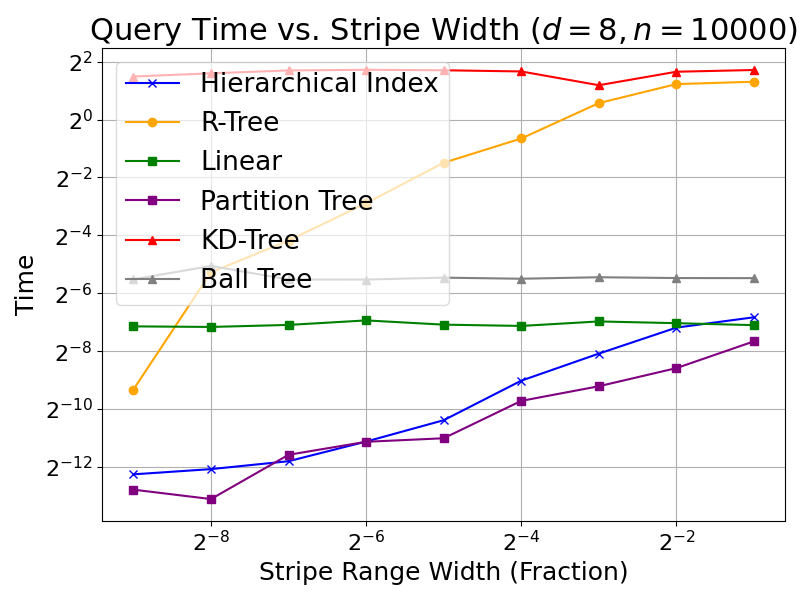}
    \includegraphics[width=0.495\linewidth]{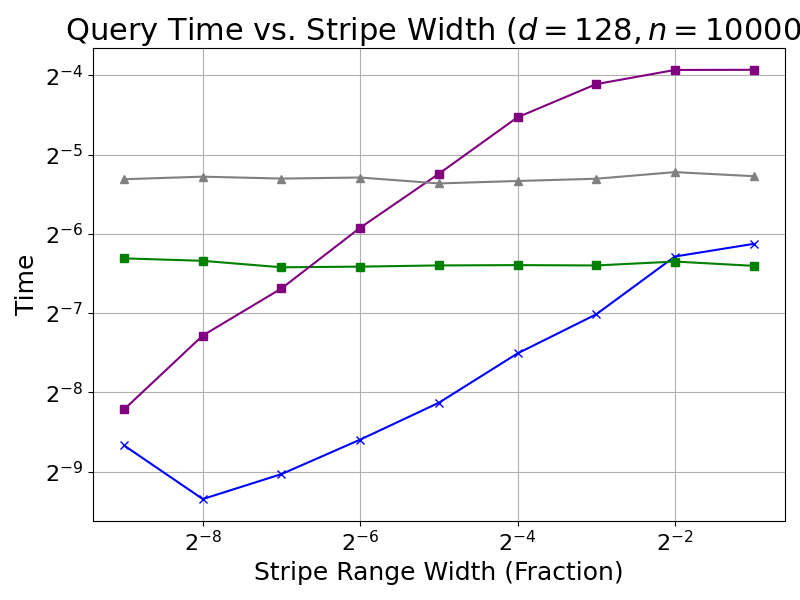}
    \vspace{-7mm}
    \caption{Comparison of \srs query time with respect to the stripe width across different dimensionalities on the synthetic data.}
    \label{fig:srs:time-width}
    \vspace{-1mm}
\end{figure}

\stitle{Datasets}
For the ranked retrieval, we use synthetic and real-world datasets. The synthetic dataset is generated from a Zipfian distribution. Our real-world datasets include \textsc{US Used Cars}~\cite{mital2020uscars} (3M records, 66 columns), \textsc{FIFA 2023}~\cite{leone2022fifa23} (300K records, 54 columns), and \textsc{US Flights}~\cite{usdot_flight_delays} (5M records, 31 columns). 
For the similarity search setting, we evaluate our methods on five popular benchmark datasets from
\href{https://ann-benchmarks.com}{\texttt{ann-benchmarks.com}}:
\textsc{SIFT-128-Euclidean}~\cite{jegou2010product} (1M vectors, $\ell_2$),
\textsc{Fashion-MNIST-784-Euclidean}~\cite{xiao2017online} (60K vectors, $\ell_2$),
\textsc{GIST-960-Euclidean}~\cite{jegou2010product} (1M vectors, $\ell_2$),
\textsc{GloVe-100-Angular}~\cite{pennington2014glove} (1M vectors, cosine), and
\textsc{GloVe-25-Angular}~\cite{pennington2014glove} (1M vectors, cosine)\footnote{The number in the dataset name represents the dimensionality of the vectors.}.
More details of the datasets are provided in Appendix~\ref{sec:app:exp:datasets}.

\vspace{-3mm}
\subsection{Stripe Range Retrieval}\label{sec:exp:srs}
For the \srs problem, we compare against classical and widely used range searching indices, including KD-Tree~\cite{bentley1975multidimensional}, R-Tree~\cite{guttman1984r}, and Partition Tree~\cite{matouvsek1991efficient}, and Ball Trees~\cite{balltree}. Our proposed hierarchical index is abbreviated as \textsf{Hierarchical}. As a baseline, we also include a simple linear search approach, denoted as \textsf{Exhaustive}.
To apply the baseline range searching indices to stripe range queries, we traverse the tree top-down, pruning subtrees whose corresponding hyperrectangles do not intersect the query stripe (see Algorithm~\ref{alg:hierarchical-query}). More experiments on \srs\ problem is provided in Appendix~\ref{sec:app:exp:srs}.

\begin{figure}[t]
    \vspace{-3mm}
    \centering
    \includegraphics[width=0.495\linewidth]{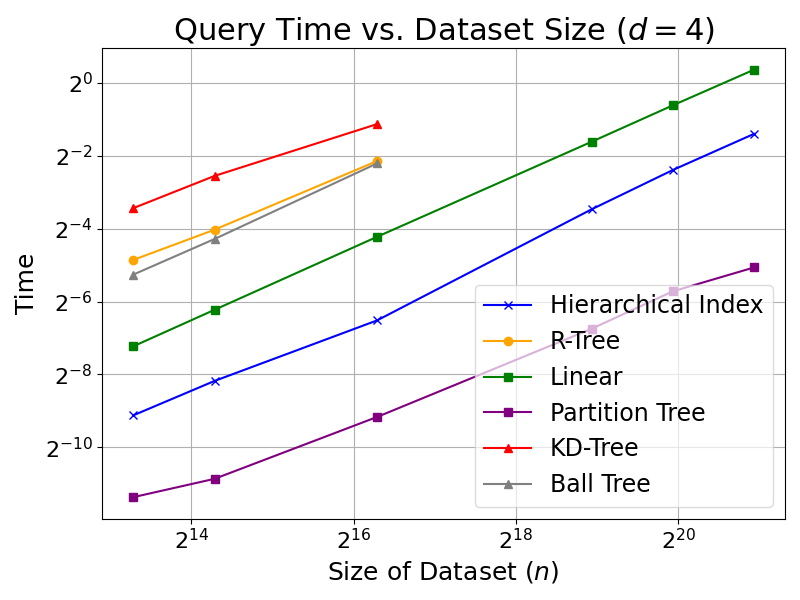}
    \includegraphics[width=0.495\linewidth]{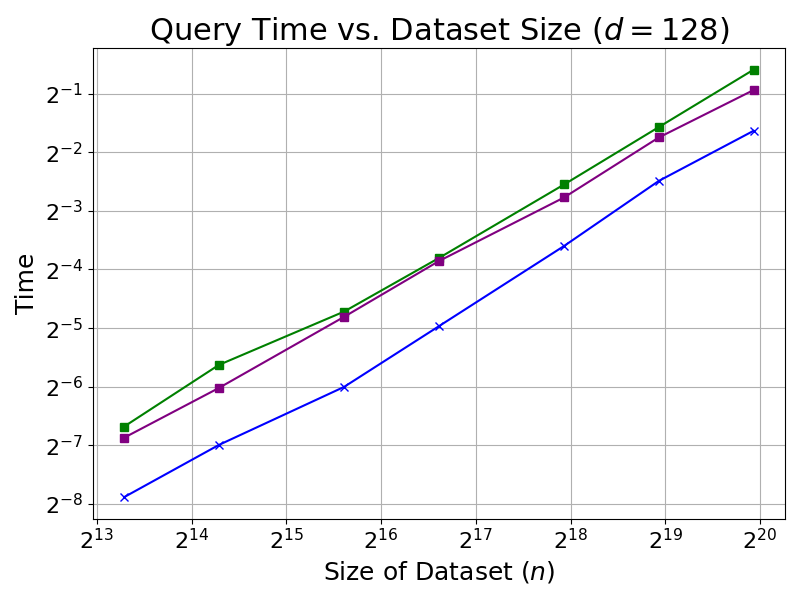}
    \vspace{-7mm}
    \caption{Comparison of the query time of Hierarchical Sampling on \srs\ problem vs dataset size. Large $d$ and $n$ cases do not use \textsf{KD-tree} or \textsf{R-tree}, as these methods do not scale well.}
    \label{fig:srs:time-size}
    % \vspace{-4mm}
\end{figure}

\subsubsection{Synthetic Dataset}
We begin by evaluating the performance of the algorithms on synthetic datasets.

\stitle{Effect of Stripe Width}
For these experiments, the scoring function $\ef$ is sampled uniformly from the unit hypersphere in $\mathbb{R}^d$. We vary the stripe $\Stripe_{\ef, \ell, u}$ by adjusting its width, i.e., the number of points it contains. Figure~\ref{fig:srs:time-width} presents the query time performance across different dimensionalities $d$ as a function of stripe width.
We observe that both \textsf{KD-Tree} and \textsf{R-Tree} methods fail to scale effectively to high-dimensional settings, and thus we exclude them from experiments with $d = 128$. The proposed \textsf{Hierarchical Sampling} method achieves {\bf up to a 16$\times$ speedup} over \textsf{Exhaustive} search, particularly in \emph{narrow stripe} queries. Such narrow stripes frequently arise when solving the \ith problem, indicating that this method is also helpful in this context. While the \textsf{Partition Tree} also shows a fast query time in low-dimensional spaces, its performance degrades with increasing dimensionality.

\stitle{Effect of Dataset Size}
Figure~\ref{fig:srs:time-size} illustrates the impact of increasing the dataset size $n$ on the query time of the algorithms, for both low- and high-dimensional settings. As observed, the \textsf{Hierarchical Sampling} method consistently outperforms the baselines.
% \vspace{-1mm}

\stitle{Real Datasets}
 Figure~\ref{fig:srs:real} shows the comparison on \textsc{Us Used Cards} dataset. As shown, \textsf{Hierarchical Sampling} significantly outperforms both \textsf{Exhaustive} search and the \textsf{Partition Tree} method. In this experiment, multiple random scoring functions are sampled uniformly from the unit hypersphere. (See Appendix~\ref{sec:app:exp:srs} for other real-datasets).

\begin{figure}[t]
    \vspace{-3mm}
    \centering
    \includegraphics[width=0.495\linewidth]{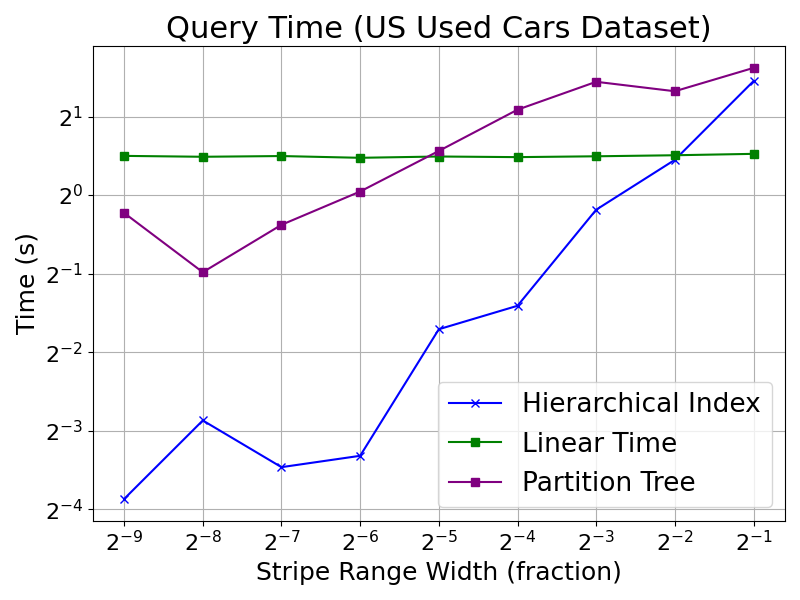}
    \includegraphics[width=0.495\linewidth]{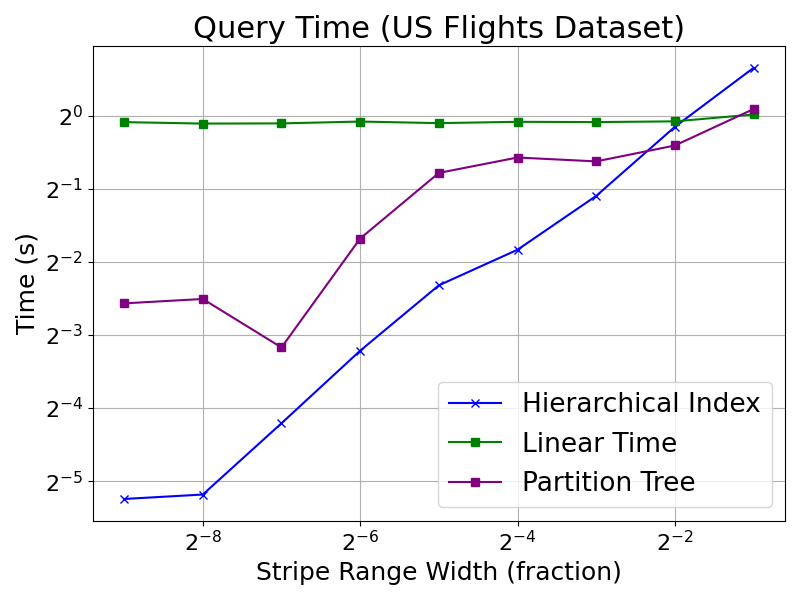}
    \vspace{-7mm}
    \caption{Comparison of the query time of Hierarchical Sampling on \srs\ problem vs the stripe range width.}
    \label{fig:srs:real}\vspace{-2mm}
\end{figure}

\begin{figure}[t]
    \vspace{-2mm}
    \centering
    \includegraphics[width=0.495\linewidth]{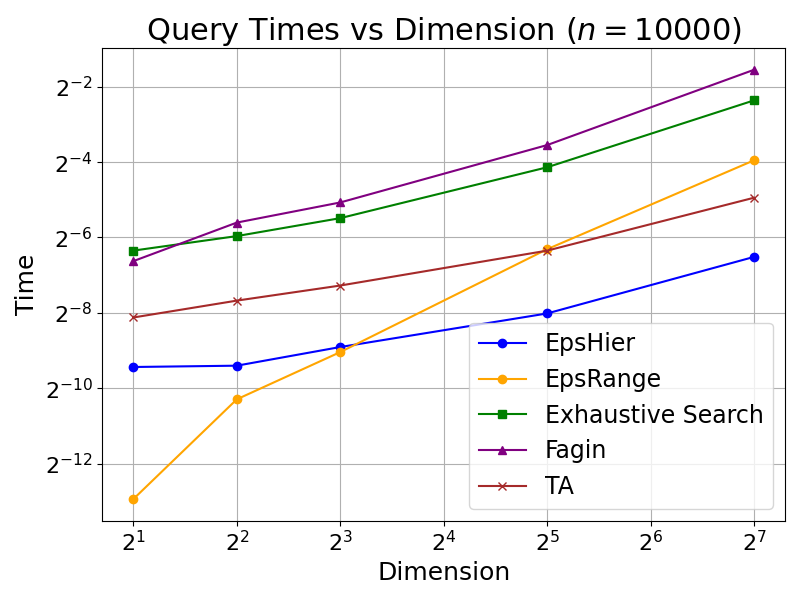}
    \includegraphics[width=0.495\linewidth]{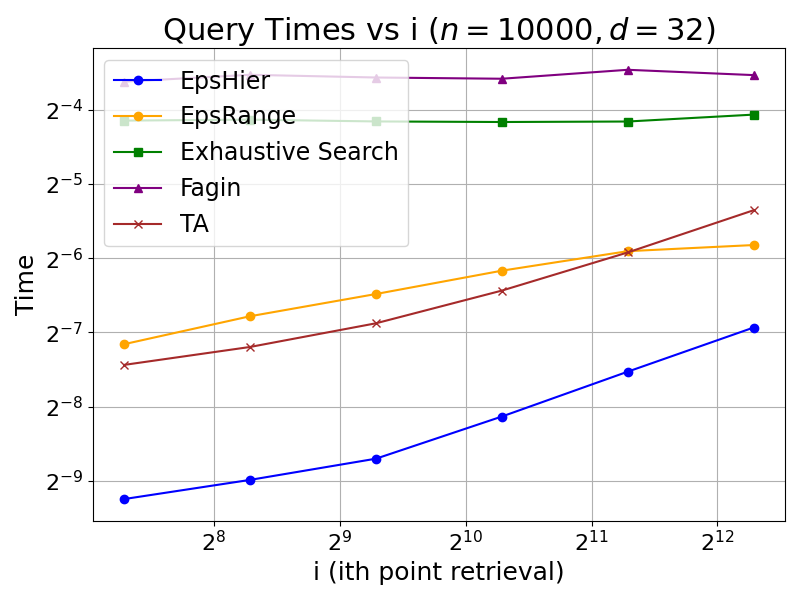}
    \vspace{-7mm}
    \caption{Comparison of the query time of \textsf{EpsRange} and \textsf{EpsHier} on $\kappa$-\ith\ problem vs dimension $d$ and the rank $i$. In this setting, $\eps = \frac{1}{16}$.}
    \label{fig:dar:time-dim-i}\vspace{-6mm}
\end{figure}

\stitle{Accuracy of the Output} In all the settings, the output is guaranteed to have a recall of $100\%$, reporting all the points in the range.

\vspace{-2mm}
\subsection{Random-Access Ranked Retrieval}\label{sec:exp:ith}

We include the Threshold Algorithm (TA) and Fagin's algorithm~\cite{fagin} as baseline methods for retrieving top-K elements\footnote{Please refer to the related work (Appendix~\ref{sec:app:related}).}. Additionally, we consider a simple search baseline, referred to as \textsf{Exhaustive}, which computes all the scores and extracts the $i$-th ranked element. Our proposed algorithms (\textsf{EpsRange} and \textsf{EpsHier}) are used for the $\kappa$-\ith\ problem while the \textsf{KthLevel} algorithm solves the exact $1$-\ith\ problem in 2D (see Appendix~\ref{sec:app:exp:rar}).

\subsubsection{Synthetic Dataset} \;

\vspace{-1mm}
\stitle{Effect of Dimension and value of $i$}
Figure~\ref{fig:dar:time-dim-i} (left) presents the query time of algorithms as a function of dimension $d$. As $d$ increases, both \textsf{Fagin} and \textsf{TA} algorithms exhibit poor scalability for random-access queries. Among the proposed methods, \textsf{EpsHier} consistently outperforms \textsf{EpsRange}, particularly when the $d > 8$. The performance degradation of \textsf{EpsRange} in higher dimensions is because of its reliance on \textsf{Partition Tree} index~\cite{matousek1992reporting}, which suffers from the curse of dimensionality.
Figure~\ref{fig:dar:time-dim-i} (right) illustrates the impact of varying the value of $i$ in the \ith query on the runtime. The \textsf{TA} algorithm shows increasing query time as $i$ grows, since it must retrieve all the top-$i$ items. In contrast, the runtime of \textsf{EpsRange} remains relatively stable, as it depends primarily on the position of the stripe rather than the value of $i$. The \textsf{EpsHier} method also shows a moderate increase in runtime for larger $i$, which can be because of the relative location of the stripes with respect to the dataset. The marginal stripes (smaller values of $i$) intersect with fewer balls in the hierarchical structure proposed in Section~\ref{sec:hierarchical}, and the pruning is more effective in this case.

\stitle{Effect of Dataset Size and $\eps$}
Figure~\ref{fig:dar:time-size-eps} (left) shows the effect of dataset size on the query time of the algorithms. We observe significant speedups achieved by both \textsf{EpsRange} and \textsf{EpsHier} compared to the baseline methods.
Figure~\ref{fig:dar:time-size-eps} (right) illustrates the impact of varying the parameter $\eps$. As expected, this parameter influences the size of the output set, $\kappa$, in the $\kappa$-\ith\ problem. Increasing the value of $\eps$ results in a larger output set, whereas smaller values of $\eps$ require searching over a larger $\eps$-sample during the initial phase of the algorithm. Both extremes lead to increased query times. The most efficient performance is typically achieved for a moderate value of $\eps$, which may serve as a default when not given by user.

% \vspace{-1mm}
\stitle{Real Datasets}
Figure~\ref{fig:dar:flight} presents the results of running the algorithms on the \textsf{US Flights} dataset, evaluating across different values of $\eps$ (left) and different values of $i$ (right). We observe significant speedups achieved by \textsf{EpsHier}, followed by \textsf{EpsRange}. Appendix~\ref{sec:app:exp:rar} shows the results on other datasets.

\begin{figure}[t]
    \vspace{-2mm}
    \centering
    \includegraphics[width=0.495\linewidth]{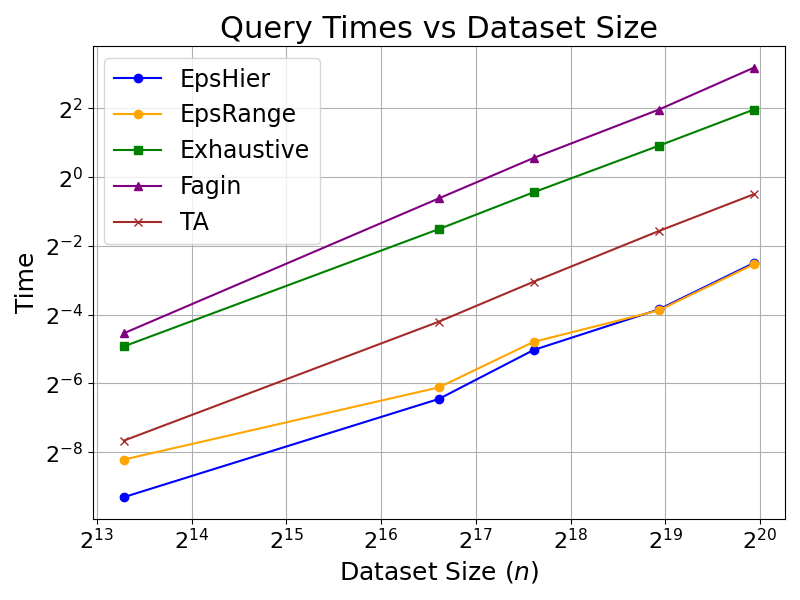}
    \includegraphics[width=0.495\linewidth]{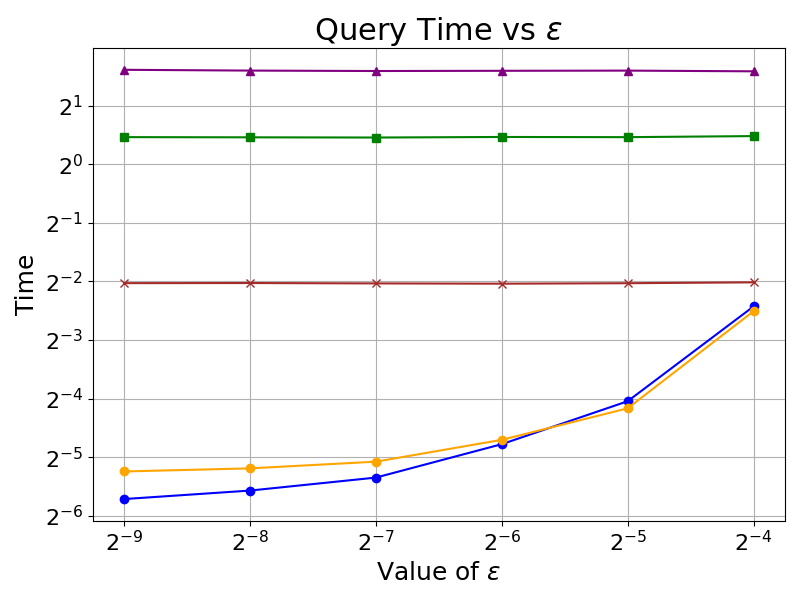}
    \vspace{-4mm}
    \caption{Comparison of the query time of \textsc{EpsRange} and \textsc{EpsHier} on $\kappa$-\ith\ problem vs dataset size and $\eps$. In the right plot, $n = 10000$ and $\kappa = \eps n$.}
    \label{fig:dar:time-size-eps}
    % \vspace{-5mm}
\end{figure}

\begin{figure}[t]
    \vspace{-2mm}
    \centering
    \includegraphics[width=0.495\linewidth]{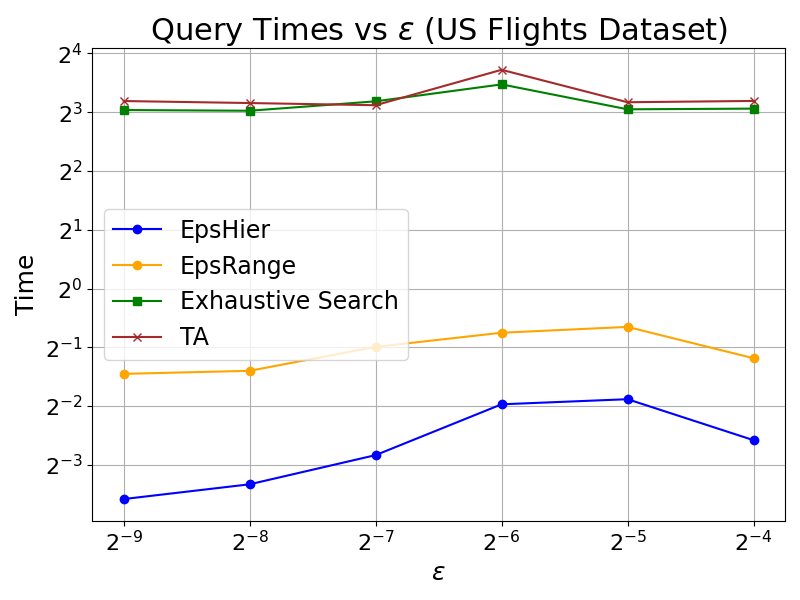}
    \includegraphics[width=0.495\linewidth]{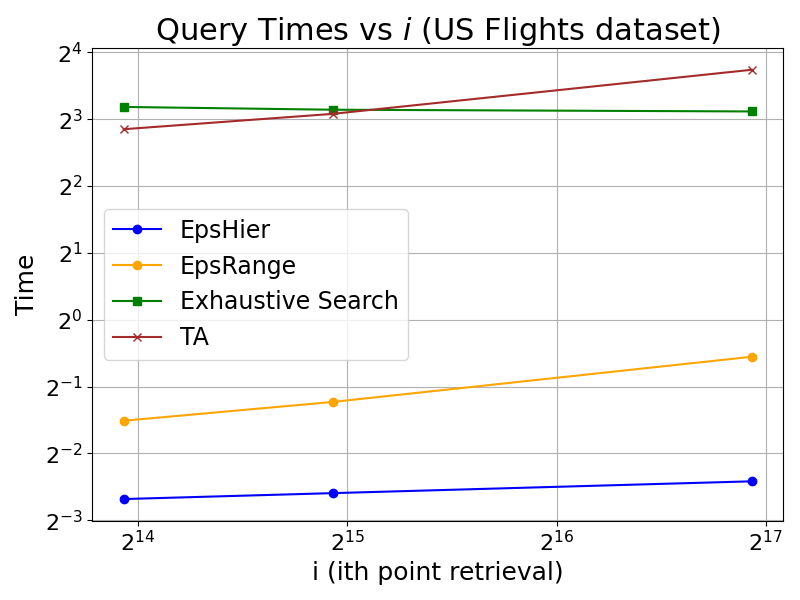}
    \vspace{-7mm}
    \caption{Comparing the query time of \textsc{EpsHier} and \textsc{EpsRange} on $\kappa$-\ith\ problem vs the value of $\eps$ and rank $i$ on US Flights dataset.}
    \label{fig:dar:flight}
    \vspace{-7mm}
\end{figure}

% \vspace{-4mm}
\stitle{Accuracy of the Output} In all the experiments, the returned set is guaranteed to have the $\point^{(i)}$, so the recall is always 100\%.

% \color{blue}
\vspace{-3mm}
\subsection{Random-Access Similarity Search}\label{sec:exp:similarity}

For the similarity search setting, we first apply the reduction described in Section~\ref{sec:reduction}.
For datasets under the $\ell_2$ metric, we lift the data points during preprocessing and apply the same lifting to the query at query time.
For cosine similarity, we treat the query as a linear scoring function corresponding to the standard dot product.
The objective is to retrieve the $i$-th nearest neighbor of a query point, where $i$ is not necessarily $O(1)$.

We consider six baselines for comparison. Our first baseline is (i) the classical \emph{Quickselect} algorithm for order statistics computation~\footnote{\href{https://en.wikipedia.org/wiki/Quickselect}{https://en.wikipedia.org/wiki/Quickselect}}.
We also evaluate several popular ANN indexes implemented in the widely-used \textsc{Faiss} library~\cite{douze2024faiss}:
(ii) HNSW~\cite{malkov2018efficient},
(iii) PQ~\cite{jegou2010product},
(iv) IVFPQ~\cite{jegou2010product},
(v) IVFFlat~\cite{jegou2010product}, and
(vi) LSH~\cite{guarantee}.
The details of these algorithms are discussed in the related works (Appendix~\ref{sec:app:related}).

In Appendix~\ref{sec:app:exp:similarity}, we describe how these ANN indexes are adapted to support the random-access version of the problem.
We note that, as demonstrated by our experimental results, these methods are generally not well suited for random-access retrieval when $i$ (or equivalently $k$) is large; they are included primarily for completeness and comparison.

Figure~\ref{fig:similarity} shows the recall and query time of our algorithm, \textsc{EpsHier}, compared with the baseline methods on the \textsc{SIFT-128-Euclidean} dataset.
As the target rank $i$ increases to larger values (e.g., the first, second, and third quartiles), all ANN-based methods degrade substantially, exhibiting both increased query time and near-zero recall.
In contrast, \textsc{EpsHier} consistently achieves close to $100\%$ recall while maintaining relatively low query time, even when random-accessing elements in the middle of the ranked order.
This behavior is particularly evident for large ranks $i = O(n)$. Table~\ref{tab:knn-time-recall-comparison} in Appendix~\ref{sec:app:exp:similarity} shows a complete comparison on a wide range of $i$ values and other datasets.

\begin{figure}[t]
    \centering
    \includegraphics[width=\linewidth]{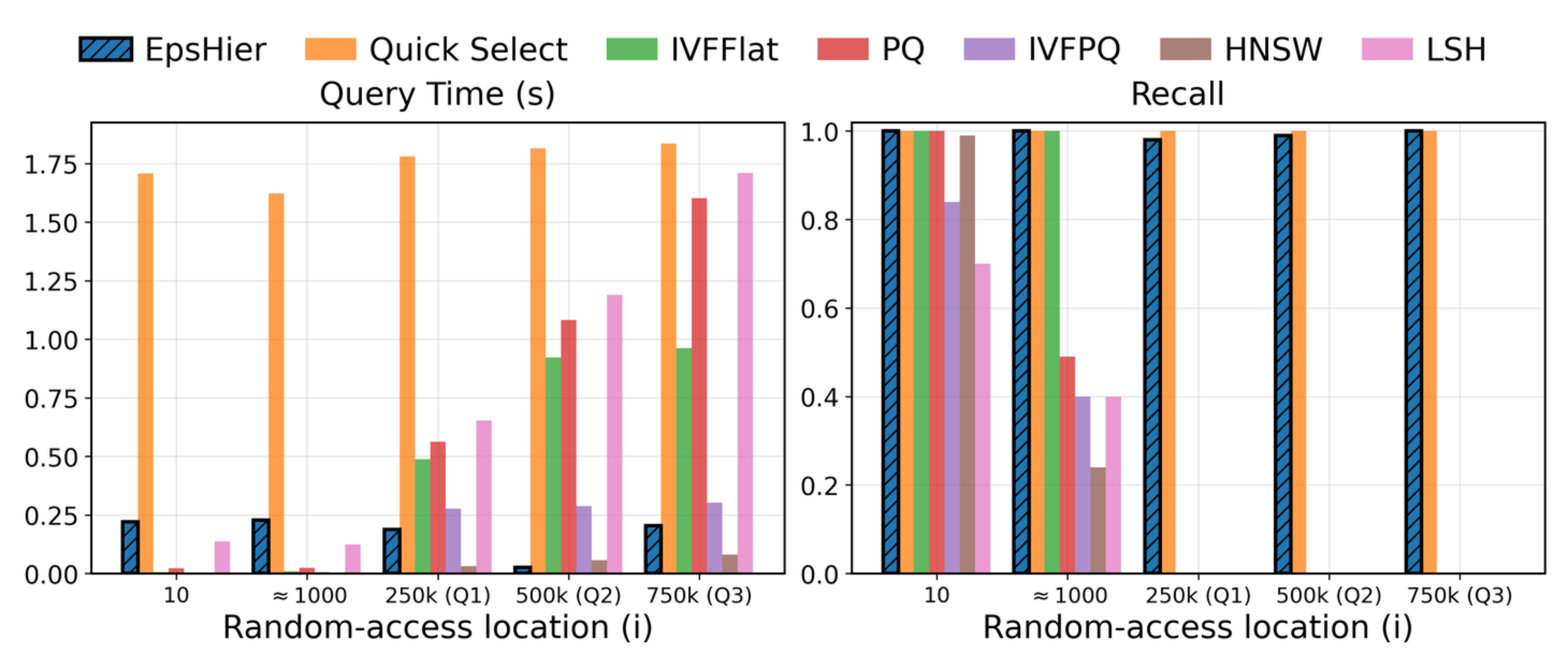}
    \vspace{-7mm}
    \caption{Comparing query time and recall@i on random-access similarity search task. The dataset is \textsc{Sift-128-Euclidean} with 1M vectors of dimensionality 128 with $\ell_2$ distance.}
    \label{fig:similarity}
    \vspace{-3mm}
\end{figure}

% \color{black}
\vspace{-3mm}
\section{Conclusion}\label{sec:conclusion}
\vspace{-1mm}
We studied the problem of Random-Access Ranked Retrieval (\ith) and Random-Access Similarity Search (\similarity). We showed that \similarity\ can be reduced to \ith. For solving the general \ith\ problem, we first proposed an algorithm based on geometric arrangements with logarithmic query time but an exponential space complexity.
Next, we proposed our space-efficient algorithms based on $\eps$-samples.
By providing a hierarchical sampling-based data structure for efficient stripe-range retrieval, we developed a practical algorithm for the relaxed $\kappa$-\ith\ problem that scales to hundreds of dimensions.

\balance
% \clearpage

\bibliographystyle{ACM-Reference-Format}
\bibliography{ref}

%%% -*-BibTeX-*-
%%% Do NOT edit. File created by BibTeX with style
%%% ACM-Reference-Format-Journals [18-Jan-2012].

\begin{thebibliography}{101}

%%% ====================================================================
%%% NOTE TO THE USER: you can override these defaults by providing
%%% customized versions of any of these macros before the \bibliography
%%% command.  Each of them MUST provide its own final punctuation,
%%% except for \shownote{}, \showDOI{}, and \showURL{}.  The latter two
%%% do not use final punctuation, in order to avoid confusing it with
%%% the Web address.
%%%
%%% To suppress output of a particular field, define its macro to expand
%%% to an empty string, or better, \unskip, like this:
%%%
%%% \newcommand{\showDOI}[1]{\unskip}   % LaTeX syntax
%%%
%%% \def \showDOI #1{\unskip}           % plain TeX syntax
%%%
%%% ====================================================================

\ifx \showCODEN    \undefined \def \showCODEN     #1{\unskip}     \fi
\ifx \showDOI      \undefined \def \showDOI       #1{#1}\fi
\ifx \showISBNx    \undefined \def \showISBNx     #1{\unskip}     \fi
\ifx \showISBNxiii \undefined \def \showISBNxiii  #1{\unskip}     \fi
\ifx \showISSN     \undefined \def \showISSN      #1{\unskip}     \fi
\ifx \showLCCN     \undefined \def \showLCCN      #1{\unskip}     \fi
\ifx \shownote     \undefined \def \shownote      #1{#1}          \fi
\ifx \showarticletitle \undefined \def \showarticletitle #1{#1}   \fi
\ifx \showURL      \undefined \def \showURL       {\relax}        \fi
% The following commands are used for tagged output and should be
% invisible to TeX
\providecommand\bibfield[2]{#2}
\providecommand\bibinfo[2]{#2}
\providecommand\natexlab[1]{#1}
\providecommand\showeprint[2][]{arXiv:#2}

\bibitem[\protect\citeauthoryear{??}{csm}{[n.d.]}]%
        {csmetrics}
 \bibinfo{year}{[n.d.]}\natexlab{}.
\newblock \bibinfo{title}{{CSMetrics}}.
\newblock \bibinfo{howpublished}{\url{https://csmetrics.net/}}.
\newblock
\newblock
\shownote{[Online; accessed Feb. 2026].}


\bibitem[\protect\citeauthoryear{Agarwal}{Agarwal}{2017}]%
        {agarwal2017range}
\bibfield{author}{\bibinfo{person}{Pankaj~K Agarwal}.} \bibinfo{year}{2017}\natexlab{}.
\newblock \showarticletitle{Range searching}.
\newblock In \bibinfo{booktitle}{\emph{Handbook of discrete and computational geometry}}. \bibinfo{publisher}{Chapman and Hall/CRC}, \bibinfo{pages}{1057--1092}.
\newblock


\bibitem[\protect\citeauthoryear{Agarwal et~al\mbox{.}}{Agarwal et~al\mbox{.}}{1998}]%
        {agarwal1998constructing}
\bibfield{author}{\bibinfo{person}{Pankaj~K Agarwal} {et~al\mbox{.}}} \bibinfo{year}{1998}\natexlab{}.
\newblock \showarticletitle{Constructing levels in arrangements and higher order Voronoi diagrams}.
\newblock \bibinfo{journal}{\emph{SIAM journal on computing}} \bibinfo{volume}{27}, \bibinfo{number}{3} (\bibinfo{year}{1998}), \bibinfo{pages}{654--667}.
\newblock


\bibitem[\protect\citeauthoryear{Agarwal, Cohen, Halperin, and Mulzer}{Agarwal et~al\mbox{.}}{2019}]%
        {agarwal2019dynamic}
\bibfield{author}{\bibinfo{person}{Pankaj~K Agarwal}, \bibinfo{person}{Ravid Cohen}, \bibinfo{person}{Dan Halperin}, {and} \bibinfo{person}{Wolfgang Mulzer}.} \bibinfo{year}{2019}\natexlab{}.
\newblock \showarticletitle{Dynamic maintenance of the lower envelope of pseudo-lines}.
\newblock \bibinfo{journal}{\emph{arXiv preprint arXiv:1902.09565}} (\bibinfo{year}{2019}).
\newblock


\bibitem[\protect\citeauthoryear{Aggarwal, Hansen, and Leighton}{Aggarwal et~al\mbox{.}}{1990}]%
        {aggarwal1990solving}
\bibfield{author}{\bibinfo{person}{Alok Aggarwal}, \bibinfo{person}{Mark Hansen}, {and} \bibinfo{person}{Thomas Leighton}.} \bibinfo{year}{1990}\natexlab{}.
\newblock \showarticletitle{Solving query-retrieval problems by compacting Voronoi diagrams}. In \bibinfo{booktitle}{\emph{Proceedings of the twenty-second annual ACM symposium on Theory of computing}}. \bibinfo{pages}{331--340}.
\newblock


\bibitem[\protect\citeauthoryear{Ahn and Lin}{Ahn and Lin}{2019}]%
        {ahn2019fairsight}
\bibfield{author}{\bibinfo{person}{Yongsu Ahn} {and} \bibinfo{person}{Yu-Ru Lin}.} \bibinfo{year}{2019}\natexlab{}.
\newblock \showarticletitle{Fairsight: Visual analytics for fairness in decision making}.
\newblock \bibinfo{journal}{\emph{IEEE transactions on visualization and computer graphics}} \bibinfo{volume}{26}, \bibinfo{number}{1} (\bibinfo{year}{2019}), \bibinfo{pages}{1086--1095}.
\newblock


\bibitem[\protect\citeauthoryear{Andoni and Indyk}{Andoni and Indyk}{2008}]%
        {guarantee}
\bibfield{author}{\bibinfo{person}{Alexandr Andoni} {and} \bibinfo{person}{Piotr Indyk}.} \bibinfo{year}{2008}\natexlab{}.
\newblock \showarticletitle{Near-optimal hashing algorithms for approximate nearest neighbor in high dimensions}.
\newblock \bibinfo{journal}{\emph{Communications Of ACM}} \bibinfo{volume}{51}, \bibinfo{number}{1} (\bibinfo{year}{2008}), \bibinfo{pages}{117–122}.
\newblock


\bibitem[\protect\citeauthoryear{Angelopoulos and Bates}{Angelopoulos and Bates}{2021}]%
        {angelopoulos2021gentle}
\bibfield{author}{\bibinfo{person}{Anastasios~N Angelopoulos} {and} \bibinfo{person}{Stephen Bates}.} \bibinfo{year}{2021}\natexlab{}.
\newblock \showarticletitle{A gentle introduction to conformal prediction and distribution-free uncertainty quantification}.
\newblock \bibinfo{journal}{\emph{arXiv preprint arXiv:2107.07511}} (\bibinfo{year}{2021}).
\newblock


\bibitem[\protect\citeauthoryear{Asudeh, Jagadish, Miklau, and Stoyanovich}{Asudeh et~al\mbox{.}}{2018}]%
        {asudeh2018obtaining}
\bibfield{author}{\bibinfo{person}{Abolfazl Asudeh}, \bibinfo{person}{HV Jagadish}, \bibinfo{person}{Gerome Miklau}, {and} \bibinfo{person}{Julia Stoyanovich}.} \bibinfo{year}{2018}\natexlab{}.
\newblock \showarticletitle{On obtaining stable rankings}.
\newblock \bibinfo{journal}{\emph{Proceedings of the VLDB Endowment}} \bibinfo{volume}{12}, \bibinfo{number}{3} (\bibinfo{year}{2018}), \bibinfo{pages}{237--250}.
\newblock


\bibitem[\protect\citeauthoryear{Aumüller, Bernhardsson, and Faithfull}{Aumüller et~al\mbox{.}}{2018}]%
        {ann1}
\bibfield{author}{\bibinfo{person}{Martin Aumüller}, \bibinfo{person}{Erik Bernhardsson}, {and} \bibinfo{person}{Alexander Faithfull}.} \bibinfo{year}{2018}\natexlab{}.
\newblock \bibinfo{title}{ANN-Benchmarks: A Benchmarking Tool for Approximate Nearest Neighbor Algorithms}.
\newblock
\newblock
\showeprint[arxiv]{1807.05614}~[cs.IR]
\urldef\tempurl%
\url{https://arxiv.org/abs/1807.05614}
\showURL{%
\tempurl}


\bibitem[\protect\citeauthoryear{Auvolat et~al\mbox{.}}{Auvolat et~al\mbox{.}}{2015}]%
        {auvolat2015clustering}
\bibfield{author}{\bibinfo{person}{Alex Auvolat} {et~al\mbox{.}}} \bibinfo{year}{2015}\natexlab{}.
\newblock \showarticletitle{Clustering is efficient for approximate maximum inner product search}.
\newblock \bibinfo{journal}{\emph{arXiv preprint arXiv:1507.05910}} (\bibinfo{year}{2015}).
\newblock


\bibitem[\protect\citeauthoryear{Bachrach et~al\mbox{.}}{Bachrach et~al\mbox{.}}{2014}]%
        {bachrach2014speeding}
\bibfield{author}{\bibinfo{person}{Yoram Bachrach} {et~al\mbox{.}}} \bibinfo{year}{2014}\natexlab{}.
\newblock \showarticletitle{Speeding up the xbox recommender system using a euclidean transformation for inner-product spaces}. In \bibinfo{booktitle}{\emph{Proceedings of the 8th ACM Conference on Recommender systems}}. \bibinfo{pages}{257--264}.
\newblock


\bibitem[\protect\citeauthoryear{Baeza-Yates et~al\mbox{.}}{Baeza-Yates et~al\mbox{.}}{1999}]%
        {baeza1999modern}
\bibfield{author}{\bibinfo{person}{R~Ricardo Baeza-Yates} {et~al\mbox{.}}} \bibinfo{year}{1999}\natexlab{}.
\newblock \bibinfo{booktitle}{\emph{Modern information retrieval}}.
\newblock \bibinfo{publisher}{ACM Press.}
\newblock


\bibitem[\protect\citeauthoryear{Bagan, Durand, Grandjean, and Olive}{Bagan et~al\mbox{.}}{2008}]%
        {bagan2008computing}
\bibfield{author}{\bibinfo{person}{Guillaume Bagan}, \bibinfo{person}{Arnaud Durand}, \bibinfo{person}{Etienne Grandjean}, {and} \bibinfo{person}{Fr{\'e}d{\'e}ric Olive}.} \bibinfo{year}{2008}\natexlab{}.
\newblock \showarticletitle{Computing the jth solution of a first-order query}.
\newblock \bibinfo{journal}{\emph{RAIRO-Theoretical Informatics and Applications}} \bibinfo{volume}{42}, \bibinfo{number}{1} (\bibinfo{year}{2008}), \bibinfo{pages}{147--164}.
\newblock


\bibitem[\protect\citeauthoryear{Balc{\i}o{\u{g}}lu and G{\"u}rerk}{Balc{\i}o{\u{g}}lu and G{\"u}rerk}{2022}]%
        {balciouglu2022notion}
\bibfield{author}{\bibinfo{person}{Ahmet~Zahid Balc{\i}o{\u{g}}lu} {and} \bibinfo{person}{O{\u{g}}uz G{\"u}rerk}.} \bibinfo{year}{2022}\natexlab{}.
\newblock \showarticletitle{On a Notion of Outliers Based on Ratios of Order Statistics}.
\newblock \bibinfo{journal}{\emph{arXiv preprint arXiv:2207.13068}} (\bibinfo{year}{2022}).
\newblock


\bibitem[\protect\citeauthoryear{Bentley}{Bentley}{1975}]%
        {bentley1975multidimensional}
\bibfield{author}{\bibinfo{person}{Jon~Louis Bentley}.} \bibinfo{year}{1975}\natexlab{}.
\newblock \showarticletitle{Multidimensional binary search trees used for associative searching}.
\newblock \bibinfo{journal}{\emph{Commun. ACM}} \bibinfo{volume}{18}, \bibinfo{number}{9} (\bibinfo{year}{1975}), \bibinfo{pages}{509--517}.
\newblock


\bibitem[\protect\citeauthoryear{Bikakis, Chrysanthis, Li, Papastefanatos, and Yu}{Bikakis et~al\mbox{.}}{2025}]%
        {bikakis2025visual}
\bibfield{author}{\bibinfo{person}{Nikos Bikakis}, \bibinfo{person}{Panos~K Chrysanthis}, \bibinfo{person}{Guoliang Li}, \bibinfo{person}{George Papastefanatos}, {and} \bibinfo{person}{Lingyun Yu}.} \bibinfo{year}{2025}\natexlab{}.
\newblock \showarticletitle{Visual Analytics Challenges and Trends in the Age of AI: The BigVis Community Perspective}.
\newblock \bibinfo{journal}{\emph{ACM SIGMOD Record}} \bibinfo{volume}{54}, \bibinfo{number}{2} (\bibinfo{year}{2025}), \bibinfo{pages}{66--69}.
\newblock


\bibitem[\protect\citeauthoryear{Blum et~al\mbox{.}}{Blum et~al\mbox{.}}{1973}]%
        {blum1973time}
\bibfield{author}{\bibinfo{person}{Manuel Blum} {et~al\mbox{.}}} \bibinfo{year}{1973}\natexlab{}.
\newblock \showarticletitle{Time bounds for selection}.
\newblock \bibinfo{journal}{\emph{J. Comput. Syst. Sci.}} \bibinfo{volume}{7}, \bibinfo{number}{4} (\bibinfo{year}{1973}), \bibinfo{pages}{448--461}.
\newblock


\bibitem[\protect\citeauthoryear{Brin and Page}{Brin and Page}{1998}]%
        {brin1998anatomy}
\bibfield{author}{\bibinfo{person}{Sergey Brin} {and} \bibinfo{person}{Lawrence Page}.} \bibinfo{year}{1998}\natexlab{}.
\newblock \showarticletitle{The anatomy of a large-scale hypertextual web search engine}.
\newblock \bibinfo{journal}{\emph{Computer networks and ISDN systems}} \bibinfo{volume}{30}, \bibinfo{number}{1-7} (\bibinfo{year}{1998}), \bibinfo{pages}{107--117}.
\newblock


\bibitem[\protect\citeauthoryear{Bringmann, Carmeli, and Mengel}{Bringmann et~al\mbox{.}}{2025}]%
        {bringmann2025tight}
\bibfield{author}{\bibinfo{person}{Karl Bringmann}, \bibinfo{person}{Nofar Carmeli}, {and} \bibinfo{person}{Stefan Mengel}.} \bibinfo{year}{2025}\natexlab{}.
\newblock \showarticletitle{Tight fine-grained bounds for direct access on join queries}.
\newblock \bibinfo{journal}{\emph{ACM Transactions on Database Systems}} \bibinfo{volume}{50}, \bibinfo{number}{1} (\bibinfo{year}{2025}), \bibinfo{pages}{1--44}.
\newblock


\bibitem[\protect\citeauthoryear{Broder}{Broder}{2002}]%
        {broder2002taxonomy}
\bibfield{author}{\bibinfo{person}{Andrei Broder}.} \bibinfo{year}{2002}\natexlab{}.
\newblock \showarticletitle{A taxonomy of web search}. In \bibinfo{booktitle}{\emph{ACM Sigir forum}}, Vol.~\bibinfo{volume}{36}. ACM New York, NY, USA, \bibinfo{pages}{3--10}.
\newblock


\bibitem[\protect\citeauthoryear{Bruno, Gravano, and Marian}{Bruno et~al\mbox{.}}{2002}]%
        {taadaptbruno}
\bibfield{author}{\bibinfo{person}{N. Bruno}, \bibinfo{person}{L. Gravano}, {and} \bibinfo{person}{A. Marian}.} \bibinfo{year}{2002}\natexlab{}.
\newblock \showarticletitle{Evaluating top-k queries over Web-accessible databases}. In \bibinfo{booktitle}{\emph{IEEE ICDE}}. \bibinfo{pages}{369--380}.
\newblock


\bibitem[\protect\citeauthoryear{Carmeli, Tziavelis, Gatterbauer, Kimelfeld, and Riedewald}{Carmeli et~al\mbox{.}}{2023}]%
        {carmeli2023tractable}
\bibfield{author}{\bibinfo{person}{Nofar Carmeli}, \bibinfo{person}{Nikolaos Tziavelis}, \bibinfo{person}{Wolfgang Gatterbauer}, \bibinfo{person}{Benny Kimelfeld}, {and} \bibinfo{person}{Mirek Riedewald}.} \bibinfo{year}{2023}\natexlab{}.
\newblock \showarticletitle{Tractable orders for direct access to ranked answers of conjunctive queries}.
\newblock \bibinfo{journal}{\emph{ACM Transactions on Database Systems}} \bibinfo{volume}{48}, \bibinfo{number}{1} (\bibinfo{year}{2023}), \bibinfo{pages}{1--45}.
\newblock


\bibitem[\protect\citeauthoryear{Chan}{Chan}{2020}]%
        {chan2020dynamic}
\bibfield{author}{\bibinfo{person}{Timothy~M Chan}.} \bibinfo{year}{2020}\natexlab{}.
\newblock \showarticletitle{Dynamic geometric data structures via shallow cuttings}.
\newblock \bibinfo{journal}{\emph{Discrete \& Computational Geometry}} \bibinfo{volume}{64}, \bibinfo{number}{4} (\bibinfo{year}{2020}), \bibinfo{pages}{1235--1252}.
\newblock


\bibitem[\protect\citeauthoryear{Chang, Bergman, Castelli, Li, Lo, and Smith}{Chang et~al\mbox{.}}{2000}]%
        {onion}
\bibfield{author}{\bibinfo{person}{Yuan-Chi Chang}, \bibinfo{person}{Lawrence Bergman}, \bibinfo{person}{Vittorio Castelli}, \bibinfo{person}{Chung-Sheng Li}, \bibinfo{person}{Ming-Ling Lo}, {and} \bibinfo{person}{John~R. Smith}.} \bibinfo{year}{2000}\natexlab{}.
\newblock \showarticletitle{The onion technique: indexing for linear optimization queries}.
\newblock \bibinfo{journal}{\emph{ACM SIGMOD}} \bibinfo{volume}{29}, \bibinfo{number}{2} (\bibinfo{year}{2000}), \bibinfo{pages}{391–402}.
\newblock


\bibitem[\protect\citeauthoryear{Chazelle}{Chazelle}{1989}]%
        {chazelle1989lower}
\bibfield{author}{\bibinfo{person}{Bernard Chazelle}.} \bibinfo{year}{1989}\natexlab{}.
\newblock \showarticletitle{Lower bounds on the complexity of polytope range searching}.
\newblock \bibinfo{journal}{\emph{Journal of the American Mathematical Society}} \bibinfo{volume}{2}, \bibinfo{number}{4} (\bibinfo{year}{1989}), \bibinfo{pages}{637--666}.
\newblock


\bibitem[\protect\citeauthoryear{Chazelle}{Chazelle}{2000}]%
        {chazelle2000discrepancy}
\bibfield{author}{\bibinfo{person}{Bernard Chazelle}.} \bibinfo{year}{2000}\natexlab{}.
\newblock \bibinfo{booktitle}{\emph{The Discrepancy Method: Randomness and Complexity}}.
\newblock \bibinfo{publisher}{Cambridge University Press}.
\newblock
\showISBNx{9780521004177}


\bibitem[\protect\citeauthoryear{Chazelle and Guibas}{Chazelle and Guibas}{1986}]%
        {chazelle1986fractional}
\bibfield{author}{\bibinfo{person}{Bernard Chazelle} {and} \bibinfo{person}{Leonidas~J Guibas}.} \bibinfo{year}{1986}\natexlab{}.
\newblock \showarticletitle{Fractional cascading: I. A data structuring technique}.
\newblock \bibinfo{journal}{\emph{Algorithmica}} \bibinfo{volume}{1}, \bibinfo{number}{1} (\bibinfo{year}{1986}), \bibinfo{pages}{133--162}.
\newblock


\bibitem[\protect\citeauthoryear{Chazelle, Guibas, and Lee}{Chazelle et~al\mbox{.}}{1985}]%
        {chazelle1985power}
\bibfield{author}{\bibinfo{person}{Bernard Chazelle}, \bibinfo{person}{Leo~J Guibas}, {and} \bibinfo{person}{Der-Tsai Lee}.} \bibinfo{year}{1985}\natexlab{}.
\newblock \showarticletitle{The power of geometric duality}.
\newblock \bibinfo{journal}{\emph{BIT Numerical Mathematics}} \bibinfo{volume}{25}, \bibinfo{number}{1} (\bibinfo{year}{1985}), \bibinfo{pages}{76--90}.
\newblock


\bibitem[\protect\citeauthoryear{Chen, Tang, Li, and Parameswaran}{Chen et~al\mbox{.}}{2023}]%
        {chen2023visualizing}
\bibfield{author}{\bibinfo{person}{Fanchao Chen}, \bibinfo{person}{Dixin Tang}, \bibinfo{person}{Haotian Li}, {and} \bibinfo{person}{Aditya~G Parameswaran}.} \bibinfo{year}{2023}\natexlab{}.
\newblock \showarticletitle{Visualizing Spreadsheet Formula Graphs Compactly}.
\newblock \bibinfo{journal}{\emph{Proceedings of the VLDB Endowment}} \bibinfo{volume}{16}, \bibinfo{number}{12} (\bibinfo{year}{2023}), \bibinfo{pages}{4030--4033}.
\newblock


\bibitem[\protect\citeauthoryear{Cormen, Leiserson, Rivest, and Stein}{Cormen et~al\mbox{.}}{2022}]%
        {cormen2022introduction}
\bibfield{author}{\bibinfo{person}{Thomas~H Cormen}, \bibinfo{person}{Charles~E Leiserson}, \bibinfo{person}{Ronald~L Rivest}, {and} \bibinfo{person}{Clifford Stein}.} \bibinfo{year}{2022}\natexlab{}.
\newblock \bibinfo{booktitle}{\emph{Introduction to algorithms}}.
\newblock \bibinfo{publisher}{MIT press}.
\newblock


\bibitem[\protect\citeauthoryear{Cormode, Karnin, Liberty, Thaler, and Vesel{\`y}}{Cormode et~al\mbox{.}}{2021}]%
        {cormode2021relative}
\bibfield{author}{\bibinfo{person}{Graham Cormode}, \bibinfo{person}{Zohar Karnin}, \bibinfo{person}{Edo Liberty}, \bibinfo{person}{Justin Thaler}, {and} \bibinfo{person}{Pavel Vesel{\`y}}.} \bibinfo{year}{2021}\natexlab{}.
\newblock \showarticletitle{Relative error streaming quantiles}. In \bibinfo{booktitle}{\emph{Proceedings of the 40th ACM SIGMOD-SIGACT-SIGAI Symposium on Principles of Database Systems}}. \bibinfo{pages}{96--108}.
\newblock


\bibitem[\protect\citeauthoryear{Croft, Metzler, Strohman, et~al\mbox{.}}{Croft et~al\mbox{.}}{2010}]%
        {croft2010search}
\bibfield{author}{\bibinfo{person}{W~Bruce Croft}, \bibinfo{person}{Donald Metzler}, \bibinfo{person}{Trevor Strohman}, {et~al\mbox{.}}} \bibinfo{year}{2010}\natexlab{}.
\newblock \bibinfo{booktitle}{\emph{Search engines: Information retrieval in practice}}. Vol.~\bibinfo{volume}{520}.
\newblock \bibinfo{publisher}{Addison-Wesley Reading}.
\newblock


\bibitem[\protect\citeauthoryear{Das}{Das}{2009}]%
        {gautopk}
\bibfield{author}{\bibinfo{person}{Gautam Das}.} \bibinfo{year}{2009}\natexlab{}.
\newblock \showarticletitle{Top-k Algorithms and Applications}. In \bibinfo{booktitle}{\emph{DASFAA}}. \bibinfo{pages}{789--792}.
\newblock


\bibitem[\protect\citeauthoryear{Das, Gunopulos, Koudas, and Tsirogiannis}{Das et~al\mbox{.}}{2006}]%
        {lpta}
\bibfield{author}{\bibinfo{person}{Gautam Das}, \bibinfo{person}{Dimitrios Gunopulos}, \bibinfo{person}{Nick Koudas}, {and} \bibinfo{person}{Dimitris Tsirogiannis}.} \bibinfo{year}{2006}\natexlab{}.
\newblock \showarticletitle{Answering top-k queries using views}. In \bibinfo{booktitle}{\emph{VLDB}}. \bibinfo{pages}{451–462}.
\newblock


\bibitem[\protect\citeauthoryear{Datar, Immorlica, Indyk, and Mirrokni}{Datar et~al\mbox{.}}{2004}]%
        {eucli}
\bibfield{author}{\bibinfo{person}{Mayur Datar}, \bibinfo{person}{Nicole Immorlica}, \bibinfo{person}{Piotr Indyk}, {and} \bibinfo{person}{Vahab~S. Mirrokni}.} \bibinfo{year}{2004}\natexlab{}.
\newblock \showarticletitle{Locality-sensitive hashing scheme based on p-stable distributions}. In \bibinfo{booktitle}{\emph{SCG}}. \bibinfo{pages}{253–262}.
\newblock


\bibitem[\protect\citeauthoryear{Dehghankar and Asudeh}{Dehghankar and Asudeh}{2025}]%
        {dehghankar2025henn}
\bibfield{author}{\bibinfo{person}{Mohsen Dehghankar} {and} \bibinfo{person}{Abolfazl Asudeh}.} \bibinfo{year}{2025}\natexlab{}.
\newblock \showarticletitle{HENN: A Hierarchical Epsilon Net Navigation Graph for Approximate Nearest Neighbor Search}.
\newblock \bibinfo{journal}{\emph{CoRR, abs/2505.17368}} (\bibinfo{year}{2025}).
\newblock


\bibitem[\protect\citeauthoryear{Dey}{Dey}{1998}]%
        {dey1998improved}
\bibfield{author}{\bibinfo{person}{Tamal~K Dey}.} \bibinfo{year}{1998}\natexlab{}.
\newblock \showarticletitle{Improved bounds for planar k-sets and related problems}.
\newblock \bibinfo{journal}{\emph{Discrete \& Computational Geometry}} \bibinfo{volume}{19}, \bibinfo{number}{3} (\bibinfo{year}{1998}), \bibinfo{pages}{373--382}.
\newblock


\bibitem[\protect\citeauthoryear{Dobkin, Hershberger, Kirkpatrick, and Suri}{Dobkin et~al\mbox{.}}{1990}]%
        {dobkin1990implicitly}
\bibfield{author}{\bibinfo{person}{David Dobkin}, \bibinfo{person}{John Hershberger}, \bibinfo{person}{David Kirkpatrick}, {and} \bibinfo{person}{Subhash Suri}.} \bibinfo{year}{1990}\natexlab{}.
\newblock \showarticletitle{Implicitly searching convolutions and computing depth of collision}. In \bibinfo{booktitle}{\emph{International Symposium on Algorithms}}. Springer, \bibinfo{pages}{165--180}.
\newblock


\bibitem[\protect\citeauthoryear{Douze, Guzhva, Deng, Johnson, Szilvasy, Mazaré, Lomeli, Hosseini, and Jégou}{Douze et~al\mbox{.}}{2024}]%
        {douze2024faiss}
\bibfield{author}{\bibinfo{person}{Matthijs Douze}, \bibinfo{person}{Alexandr Guzhva}, \bibinfo{person}{Chengqi Deng}, \bibinfo{person}{Jeff Johnson}, \bibinfo{person}{Gergely Szilvasy}, \bibinfo{person}{Pierre-Emmanuel Mazaré}, \bibinfo{person}{Maria Lomeli}, \bibinfo{person}{Lucas Hosseini}, {and} \bibinfo{person}{Hervé Jégou}.} \bibinfo{year}{2024}\natexlab{}.
\newblock \showarticletitle{The Faiss library}.
\newblock  (\bibinfo{year}{2024}).
\newblock
\showeprint[arxiv]{2401.08281}~[cs.LG]


\bibitem[\protect\citeauthoryear{Edelsbrunner}{Edelsbrunner}{1987}]%
        {edelsbrunner1987algorithms}
\bibfield{author}{\bibinfo{person}{Herbert Edelsbrunner}.} \bibinfo{year}{1987}\natexlab{}.
\newblock \bibinfo{booktitle}{\emph{Algorithms in combinatorial geometry}}. Vol.~\bibinfo{volume}{10}.
\newblock \bibinfo{publisher}{Springer Science \& Business Media}.
\newblock


\bibitem[\protect\citeauthoryear{Eldar, Carmeli, and Kimelfeld}{Eldar et~al\mbox{.}}{2023}]%
        {eldar2023direct}
\bibfield{author}{\bibinfo{person}{Idan Eldar}, \bibinfo{person}{Nofar Carmeli}, {and} \bibinfo{person}{Benny Kimelfeld}.} \bibinfo{year}{2023}\natexlab{}.
\newblock \showarticletitle{Direct access for answers to conjunctive queries with aggregation}.
\newblock \bibinfo{journal}{\emph{arXiv preprint arXiv:2303.05327}} (\bibinfo{year}{2023}).
\newblock


\bibitem[\protect\citeauthoryear{Everett et~al\mbox{.}}{Everett et~al\mbox{.}}{1993}]%
        {everett1993optimal}
\bibfield{author}{\bibinfo{person}{Hazel Everett} {et~al\mbox{.}}} \bibinfo{year}{1993}\natexlab{}.
\newblock \showarticletitle{An optimal algorithm for the ($\leq$ k)-levels, with applications to separation and transversal problems}. In \bibinfo{booktitle}{\emph{Proceedings of the ninth annual symposium on Computational geometry}}. \bibinfo{pages}{38--46}.
\newblock


\bibitem[\protect\citeauthoryear{Fagin, Lotem, and Naor}{Fagin et~al\mbox{.}}{2001}]%
        {fagin}
\bibfield{author}{\bibinfo{person}{Ronald Fagin}, \bibinfo{person}{Amnon Lotem}, {and} \bibinfo{person}{Moni Naor}.} \bibinfo{year}{2001}\natexlab{}.
\newblock \showarticletitle{Optimal aggregation algorithms for middleware}. In \bibinfo{booktitle}{\emph{ACM PODS}}. \bibinfo{pages}{102–113}.
\newblock


\bibitem[\protect\citeauthoryear{Fu, Xiang, Wang, and Cai}{Fu et~al\mbox{.}}{2019}]%
        {nsg}
\bibfield{author}{\bibinfo{person}{Cong Fu}, \bibinfo{person}{Chao Xiang}, \bibinfo{person}{Changxu Wang}, {and} \bibinfo{person}{Deng Cai}.} \bibinfo{year}{2019}\natexlab{}.
\newblock \showarticletitle{Fast approximate nearest neighbor search with the navigating spreading-out graph}.
\newblock \bibinfo{journal}{\emph{PVLDB}} \bibinfo{volume}{12}, \bibinfo{number}{5} (\bibinfo{year}{2019}), \bibinfo{pages}{461–474}.
\newblock


\bibitem[\protect\citeauthoryear{Ghosh, Gupta, Mehrotra, Yus, and Altowim}{Ghosh et~al\mbox{.}}{2022}]%
        {ghosh2022jenner}
\bibfield{author}{\bibinfo{person}{Dhrubajyoti Ghosh}, \bibinfo{person}{Peeyush Gupta}, \bibinfo{person}{Sharad Mehrotra}, \bibinfo{person}{Roberto Yus}, {and} \bibinfo{person}{Yasser Altowim}.} \bibinfo{year}{2022}\natexlab{}.
\newblock \showarticletitle{JENNER: just-in-time enrichment in query processing}.
\newblock \bibinfo{journal}{\emph{Proceedings of the VLDB Endowment}} \bibinfo{volume}{15}, \bibinfo{number}{11} (\bibinfo{year}{2022}).
\newblock


\bibitem[\protect\citeauthoryear{Gibbons, Matias, and Poosala}{Gibbons et~al\mbox{.}}{2002}]%
        {gibbons2002fast}
\bibfield{author}{\bibinfo{person}{Phillip~B Gibbons}, \bibinfo{person}{Yossi Matias}, {and} \bibinfo{person}{Viswanath Poosala}.} \bibinfo{year}{2002}\natexlab{}.
\newblock \showarticletitle{Fast incremental maintenance of approximate histograms}.
\newblock \bibinfo{journal}{\emph{ACM Transactions on Database Systems (TODS)}} \bibinfo{volume}{27}, \bibinfo{number}{3} (\bibinfo{year}{2002}), \bibinfo{pages}{261--298}.
\newblock


\bibitem[\protect\citeauthoryear{Greenwald and Khanna}{Greenwald and Khanna}{2001}]%
        {greenwald2001space}
\bibfield{author}{\bibinfo{person}{Michael Greenwald} {and} \bibinfo{person}{Sanjeev Khanna}.} \bibinfo{year}{2001}\natexlab{}.
\newblock \showarticletitle{Space-efficient online computation of quantile summaries}.
\newblock \bibinfo{journal}{\emph{ACM SIGMOD Record}} \bibinfo{volume}{30}, \bibinfo{number}{2} (\bibinfo{year}{2001}), \bibinfo{pages}{58--66}.
\newblock


\bibitem[\protect\citeauthoryear{Guttman}{Guttman}{1984}]%
        {guttman1984r}
\bibfield{author}{\bibinfo{person}{Antonin Guttman}.} \bibinfo{year}{1984}\natexlab{}.
\newblock \showarticletitle{R-trees: A dynamic index structure for spatial searching}. In \bibinfo{booktitle}{\emph{ACM SIGMOD}}. \bibinfo{pages}{47--57}.
\newblock


\bibitem[\protect\citeauthoryear{Har-Peled}{Har-Peled}{2011}]%
        {har2011geometric}
\bibfield{author}{\bibinfo{person}{Sariel Har-Peled}.} \bibinfo{year}{2011}\natexlab{}.
\newblock \bibinfo{booktitle}{\emph{Geometric approximation algorithms}}.
\newblock Number 173. \bibinfo{publisher}{American Mathematical Soc.}
\newblock


\bibitem[\protect\citeauthoryear{Haussler and Welzl}{Haussler and Welzl}{1986}]%
        {haussler1986epsilon}
\bibfield{author}{\bibinfo{person}{David Haussler} {and} \bibinfo{person}{Emo Welzl}.} \bibinfo{year}{1986}\natexlab{}.
\newblock \showarticletitle{Epsilon-nets and simplex range queries}. In \bibinfo{booktitle}{\emph{Proceedings of the second annual symposium on Computational geometry}}. \bibinfo{pages}{61--71}.
\newblock


\bibitem[\protect\citeauthoryear{Hristidis et~al\mbox{.}}{Hristidis et~al\mbox{.}}{2004}]%
        {rankviews}
\bibfield{author}{\bibinfo{person}{Vagelis Hristidis} {et~al\mbox{.}}} \bibinfo{year}{2004}\natexlab{}.
\newblock \showarticletitle{Algorithms and applications for answering ranked queries using ranked views}.
\newblock \bibinfo{journal}{\emph{The VLDB Journal}} \bibinfo{volume}{13}, \bibinfo{number}{1} (\bibinfo{year}{2004}), \bibinfo{pages}{49--70}.
\newblock


\bibitem[\protect\citeauthoryear{Ilyas et~al\mbox{.}}{Ilyas et~al\mbox{.}}{2008a}]%
        {ilyas2008survey}
\bibfield{author}{\bibinfo{person}{Ihab~F Ilyas} {et~al\mbox{.}}} \bibinfo{year}{2008}\natexlab{a}.
\newblock \showarticletitle{A survey of top-k query processing techniques in relational database systems}.
\newblock \bibinfo{journal}{\emph{ACM Computing Surveys (CSUR)}} \bibinfo{volume}{40}, \bibinfo{number}{4} (\bibinfo{year}{2008}), \bibinfo{pages}{1--58}.
\newblock


\bibitem[\protect\citeauthoryear{Ilyas et~al\mbox{.}}{Ilyas et~al\mbox{.}}{2008b}]%
        {ilyasrank}
\bibfield{author}{\bibinfo{person}{Ihab~F. Ilyas} {et~al\mbox{.}}} \bibinfo{year}{2008}\natexlab{b}.
\newblock \showarticletitle{A survey of top-k query processing techniques in relational database systems}.
\newblock \bibinfo{journal}{\emph{Comput. Surveys}} \bibinfo{volume}{40}, \bibinfo{number}{4}, Article \bibinfo{articleno}{11} (\bibinfo{year}{2008}).
\newblock


\bibitem[\protect\citeauthoryear{Indyk and Motwani}{Indyk and Motwani}{1998}]%
        {ann2}
\bibfield{author}{\bibinfo{person}{Piotr Indyk} {and} \bibinfo{person}{Rajeev Motwani}.} \bibinfo{year}{1998}\natexlab{}.
\newblock \showarticletitle{Approximate nearest neighbors: towards removing the curse of dimensionality}. In \bibinfo{booktitle}{\emph{ACM STOC}}. \bibinfo{pages}{604–613}.
\newblock


\bibitem[\protect\citeauthoryear{Jafari, Maurya, Nagarkar, Islam, and Crushev}{Jafari et~al\mbox{.}}{2021}]%
        {jafari2021survey}
\bibfield{author}{\bibinfo{person}{Omid Jafari}, \bibinfo{person}{Preeti Maurya}, \bibinfo{person}{Parth Nagarkar}, \bibinfo{person}{Khandker~Mushfiqul Islam}, {and} \bibinfo{person}{Chidambaram Crushev}.} \bibinfo{year}{2021}\natexlab{}.
\newblock \showarticletitle{A survey on locality sensitive hashing algorithms and their applications}.
\newblock \bibinfo{journal}{\emph{arXiv preprint arXiv:2102.08942}} (\bibinfo{year}{2021}).
\newblock


\bibitem[\protect\citeauthoryear{Jayaram~Subramanya, Devvrit, Simhadri, Krishnawamy, and Kadekodi}{Jayaram~Subramanya et~al\mbox{.}}{2019}]%
        {diskann}
\bibfield{author}{\bibinfo{person}{Suhas Jayaram~Subramanya}, \bibinfo{person}{Fnu Devvrit}, \bibinfo{person}{Harsha~Vardhan Simhadri}, \bibinfo{person}{Ravishankar Krishnawamy}, {and} \bibinfo{person}{Rohan Kadekodi}.} \bibinfo{year}{2019}\natexlab{}.
\newblock \showarticletitle{DiskANN: Fast Accurate Billion-point Nearest Neighbor Search on a Single Node}. In \bibinfo{booktitle}{\emph{NeurIPS}}, Vol.~\bibinfo{volume}{32}.
\newblock


\bibitem[\protect\citeauthoryear{Jegou, Douze, and Schmid}{Jegou et~al\mbox{.}}{2010}]%
        {jegou2010product}
\bibfield{author}{\bibinfo{person}{Herve Jegou}, \bibinfo{person}{Matthijs Douze}, {and} \bibinfo{person}{Cordelia Schmid}.} \bibinfo{year}{2010}\natexlab{}.
\newblock \showarticletitle{Product quantization for nearest neighbor search}.
\newblock \bibinfo{journal}{\emph{IEEE transactions on pattern analysis and machine intelligence}} \bibinfo{volume}{33}, \bibinfo{number}{1} (\bibinfo{year}{2010}), \bibinfo{pages}{117--128}.
\newblock


\bibitem[\protect\citeauthoryear{Johnson, Douze, and J{\'e}gou}{Johnson et~al\mbox{.}}{2019}]%
        {johnson2019billion}
\bibfield{author}{\bibinfo{person}{Jeff Johnson}, \bibinfo{person}{Matthijs Douze}, {and} \bibinfo{person}{Herv{\'e} J{\'e}gou}.} \bibinfo{year}{2019}\natexlab{}.
\newblock \showarticletitle{Billion-scale similarity search with {GPUs}}.
\newblock \bibinfo{journal}{\emph{IEEE Transactions on Big Data}} \bibinfo{volume}{7}, \bibinfo{number}{3} (\bibinfo{year}{2019}), \bibinfo{pages}{535--547}.
\newblock


\bibitem[\protect\citeauthoryear{Jégou, Douze, and Schmid}{Jégou et~al\mbox{.}}{2011}]%
        {pq}
\bibfield{author}{\bibinfo{person}{Herve Jégou}, \bibinfo{person}{Matthijs Douze}, {and} \bibinfo{person}{Cordelia Schmid}.} \bibinfo{year}{2011}\natexlab{}.
\newblock \showarticletitle{Product Quantization for Nearest Neighbor Search}.
\newblock \bibinfo{journal}{\emph{IEEE Transactions on Pattern Analysis and Machine Intelligence}} \bibinfo{volume}{33}, \bibinfo{number}{1} (\bibinfo{year}{2011}), \bibinfo{pages}{117--128}.
\newblock


\bibitem[\protect\citeauthoryear{Karnin, Lang, and Liberty}{Karnin et~al\mbox{.}}{2016}]%
        {karnin2016optimal}
\bibfield{author}{\bibinfo{person}{Zohar Karnin}, \bibinfo{person}{Kevin Lang}, {and} \bibinfo{person}{Edo Liberty}.} \bibinfo{year}{2016}\natexlab{}.
\newblock \showarticletitle{Optimal quantile approximation in streams}. In \bibinfo{booktitle}{\emph{2016 ieee 57th annual symposium on foundations of computer science (focs)}}. IEEE, \bibinfo{pages}{71--78}.
\newblock


\bibitem[\protect\citeauthoryear{Kluyver et~al\mbox{.}}{Kluyver et~al\mbox{.}}{2016}]%
        {kluyver2016jupyter}
\bibfield{author}{\bibinfo{person}{Thomas Kluyver} {et~al\mbox{.}}} \bibinfo{year}{2016}\natexlab{}.
\newblock \showarticletitle{Jupyter Notebooks--a publishing format for reproducible computational workflows}.
\newblock In \bibinfo{booktitle}{\emph{Positioning and power in academic publishing: Players, agents and agendas}}. \bibinfo{publisher}{IOS press}, \bibinfo{pages}{87--90}.
\newblock


\bibitem[\protect\citeauthoryear{Leone}{Leone}{2022}]%
        {leone2022fifa23}
\bibfield{author}{\bibinfo{person}{Stefano Leone}.} \bibinfo{year}{2022}\natexlab{}.
\newblock \bibinfo{title}{{FIFA 23 Player Dataset}}.
\newblock \bibinfo{howpublished}{\url{https://www.kaggle.com/datasets/stefanoleone992/fifa-23-complete-player-dataset}}.
\newblock
\newblock
\shownote{Accessed: 2025-07-31.}


\bibitem[\protect\citeauthoryear{Levandoski, Mokbel, and Khalefa}{Levandoski et~al\mbox{.}}{2010}]%
        {levandoski2010preference}
\bibfield{author}{\bibinfo{person}{Justin~J Levandoski}, \bibinfo{person}{Mohamed~F Mokbel}, {and} \bibinfo{person}{Mohamed~E Khalefa}.} \bibinfo{year}{2010}\natexlab{}.
\newblock \showarticletitle{Preference query evaluation over expensive attributes}. In \bibinfo{booktitle}{\emph{Proceedings of the 19th ACM international conference on Information and knowledge management}}. \bibinfo{pages}{319--328}.
\newblock


\bibitem[\protect\citeauthoryear{Li, Tian, Lee, Giles, and Chen}{Li et~al\mbox{.}}{2010}]%
        {li2010personalized}
\bibfield{author}{\bibinfo{person}{Huajing Li}, \bibinfo{person}{Yuan Tian}, \bibinfo{person}{Wang-Chien Lee}, \bibinfo{person}{C~Lee Giles}, {and} \bibinfo{person}{Meng-Chang Chen}.} \bibinfo{year}{2010}\natexlab{}.
\newblock \showarticletitle{Personalized feed recommendation service for social networks}. In \bibinfo{booktitle}{\emph{2010 IEEE Second International Conference on Social Computing}}. IEEE, \bibinfo{pages}{96--103}.
\newblock


\bibitem[\protect\citeauthoryear{Liu, Lu, Xie, Lu, and Lin}{Liu et~al\mbox{.}}{2024}]%
        {liu2024rethinking}
\bibfield{author}{\bibinfo{person}{Zhonghang Liu}, \bibinfo{person}{Panzhong Lu}, \bibinfo{person}{Guoyang Xie}, \bibinfo{person}{Zhichao Lu}, {and} \bibinfo{person}{Wen-Yan Lin}.} \bibinfo{year}{2024}\natexlab{}.
\newblock \showarticletitle{Rethinking unsupervised outlier detection via multiple thresholding}. In \bibinfo{booktitle}{\emph{European Conference on Computer Vision}}. Springer, \bibinfo{pages}{258--275}.
\newblock


\bibitem[\protect\citeauthoryear{Lv, Josephson, Wang, Charikar, and Li}{Lv et~al\mbox{.}}{2007}]%
        {mulprobe}
\bibfield{author}{\bibinfo{person}{Qin Lv}, \bibinfo{person}{William Josephson}, \bibinfo{person}{Zhe Wang}, \bibinfo{person}{Moses Charikar}, {and} \bibinfo{person}{Kai Li}.} \bibinfo{year}{2007}\natexlab{}.
\newblock \showarticletitle{Multi-probe LSH: efficient indexing for high-dimensional similarity search}. In \bibinfo{booktitle}{\emph{VLDB}}. \bibinfo{pages}{950–961}.
\newblock


\bibitem[\protect\citeauthoryear{Ma et~al\mbox{.}}{Ma et~al\mbox{.}}{2024}]%
        {ma2024spreadsheetbench}
\bibfield{author}{\bibinfo{person}{Zeyao Ma} {et~al\mbox{.}}} \bibinfo{year}{2024}\natexlab{}.
\newblock \showarticletitle{Spreadsheetbench: Towards challenging real world spreadsheet manipulation}.
\newblock \bibinfo{journal}{\emph{Advances in Neural Information Processing Systems}}  \bibinfo{volume}{37} (\bibinfo{year}{2024}), \bibinfo{pages}{94871--94908}.
\newblock


\bibitem[\protect\citeauthoryear{Malkov, Ponomarenko, Logvinov, and Krylov}{Malkov et~al\mbox{.}}{2014}]%
        {nsw}
\bibfield{author}{\bibinfo{person}{Yury Malkov}, \bibinfo{person}{Alexander Ponomarenko}, \bibinfo{person}{Andrey Logvinov}, {and} \bibinfo{person}{Vladimir Krylov}.} \bibinfo{year}{2014}\natexlab{}.
\newblock \showarticletitle{Approximate nearest neighbor algorithm based on navigable small world graphs}.
\newblock \bibinfo{journal}{\emph{Information Systems}}  \bibinfo{volume}{45} (\bibinfo{year}{2014}), \bibinfo{pages}{61--68}.
\newblock


\bibitem[\protect\citeauthoryear{Malkov and Yashunin}{Malkov and Yashunin}{2018}]%
        {malkov2018efficient}
\bibfield{author}{\bibinfo{person}{Yu~A Malkov} {and} \bibinfo{person}{Dmitry~A Yashunin}.} \bibinfo{year}{2018}\natexlab{}.
\newblock \showarticletitle{Efficient and robust approximate nearest neighbor search using hierarchical navigable small world graphs}.
\newblock \bibinfo{journal}{\emph{IEEE transactions on pattern analysis and machine intelligence}} \bibinfo{volume}{42}, \bibinfo{number}{4} (\bibinfo{year}{2018}), \bibinfo{pages}{824--836}.
\newblock


\bibitem[\protect\citeauthoryear{Matou{\v{s}}ek}{Matou{\v{s}}ek}{1991}]%
        {matouvsek1991efficient}
\bibfield{author}{\bibinfo{person}{Ji{\v{r}}{\'\i} Matou{\v{s}}ek}.} \bibinfo{year}{1991}\natexlab{}.
\newblock \showarticletitle{Efficient partition trees}. In \bibinfo{booktitle}{\emph{Proceedings of the seventh annual symposium on Computational geometry}}. \bibinfo{pages}{1--9}.
\newblock


\bibitem[\protect\citeauthoryear{Matousek}{Matousek}{1992}]%
        {matousek1992reporting}
\bibfield{author}{\bibinfo{person}{Jiri Matousek}.} \bibinfo{year}{1992}\natexlab{}.
\newblock \showarticletitle{Reporting points in halfspaces}.
\newblock \bibinfo{journal}{\emph{Computational Geometry}} \bibinfo{volume}{2}, \bibinfo{number}{3} (\bibinfo{year}{1992}), \bibinfo{pages}{169--186}.
\newblock


\bibitem[\protect\citeauthoryear{Meiser}{Meiser}{1993}]%
        {meiser1993point}
\bibfield{author}{\bibinfo{person}{Stefan Meiser}.} \bibinfo{year}{1993}\natexlab{}.
\newblock \showarticletitle{Point location in arrangements of hyperplanes}.
\newblock \bibinfo{journal}{\emph{Information and Computation}} \bibinfo{volume}{106}, \bibinfo{number}{2} (\bibinfo{year}{1993}), \bibinfo{pages}{286--303}.
\newblock


\bibitem[\protect\citeauthoryear{Meng, Arabzadeh, Askari, Aliannejadi, and De~Rijke}{Meng et~al\mbox{.}}{2024}]%
        {meng2024ranked}
\bibfield{author}{\bibinfo{person}{Chuan Meng}, \bibinfo{person}{Negar Arabzadeh}, \bibinfo{person}{Arian Askari}, \bibinfo{person}{Mohammad Aliannejadi}, {and} \bibinfo{person}{Maarten De~Rijke}.} \bibinfo{year}{2024}\natexlab{}.
\newblock \showarticletitle{Ranked list truncation for large language model-based re-ranking}. In \bibinfo{booktitle}{\emph{Proceedings of the 47th international ACM SIGIR conference on research and development in information retrieval}}. \bibinfo{pages}{141--151}.
\newblock


\bibitem[\protect\citeauthoryear{Mital}{Mital}{2020}]%
        {mital2020uscars}
\bibfield{author}{\bibinfo{person}{Ananay Mital}.} \bibinfo{year}{2020}\natexlab{}.
\newblock \bibinfo{title}{{US Used Cars Dataset}}.
\newblock \bibinfo{howpublished}{\url{https://www.kaggle.com/datasets/ananaymital/us-used-cars-dataset}}.
\newblock
\newblock
\shownote{Accessed: 2025-07-31.}


\bibitem[\protect\citeauthoryear{Montgomery, Peck, and Vining}{Montgomery et~al\mbox{.}}{2021}]%
        {montgomery2021introduction}
\bibfield{author}{\bibinfo{person}{Douglas~C Montgomery}, \bibinfo{person}{Elizabeth~A Peck}, {and} \bibinfo{person}{G~Geoffrey Vining}.} \bibinfo{year}{2021}\natexlab{}.
\newblock \bibinfo{booktitle}{\emph{Introduction to linear regression analysis}}.
\newblock \bibinfo{publisher}{John Wiley \& Sons}.
\newblock


\bibitem[\protect\citeauthoryear{Omohundro}{Omohundro}{1989}]%
        {balltree}
\bibfield{author}{\bibinfo{person}{Stephen~M Omohundro}.} \bibinfo{year}{1989}\natexlab{}.
\newblock \showarticletitle{Five balltree construction algorithms}.
\newblock  (\bibinfo{year}{1989}).
\newblock


\bibitem[\protect\citeauthoryear{Orlivskyi, Deomin, and Averianova}{Orlivskyi et~al\mbox{.}}{2021}]%
        {orlivskyi2021pagination}
\bibfield{author}{\bibinfo{person}{Serhii Orlivskyi}, \bibinfo{person}{Bohdan Deomin}, {and} \bibinfo{person}{Olga Averianova}.} \bibinfo{year}{2021}\natexlab{}.
\newblock \showarticletitle{Pagination And Its Efficient Methods For RESTful Web Services}. In \bibinfo{booktitle}{\emph{2021 IEEE 3rd Ukraine Conference on Electrical and Computer Engineering (UKRCON)}}. IEEE, \bibinfo{pages}{567--571}.
\newblock


\bibitem[\protect\citeauthoryear{Pan, Wang, and Li}{Pan et~al\mbox{.}}{2024}]%
        {pan2024survey}
\bibfield{author}{\bibinfo{person}{James~Jie Pan}, \bibinfo{person}{Jianguo Wang}, {and} \bibinfo{person}{Guoliang Li}.} \bibinfo{year}{2024}\natexlab{}.
\newblock \showarticletitle{Survey of vector database management systems}.
\newblock \bibinfo{journal}{\emph{The VLDB Journal}} \bibinfo{volume}{33}, \bibinfo{number}{5} (\bibinfo{year}{2024}), \bibinfo{pages}{1591--1615}.
\newblock


\bibitem[\protect\citeauthoryear{Pennington, Socher, and Manning}{Pennington et~al\mbox{.}}{2014}]%
        {pennington2014glove}
\bibfield{author}{\bibinfo{person}{Jeffrey Pennington}, \bibinfo{person}{Richard Socher}, {and} \bibinfo{person}{Christopher~D. Manning}.} \bibinfo{year}{2014}\natexlab{}.
\newblock \showarticletitle{GloVe: Global Vectors for Word Representation}. In \bibinfo{booktitle}{\emph{Empirical Methods in Natural Language Processing (EMNLP)}}. \bibinfo{pages}{1532--1543}.
\newblock
\urldef\tempurl%
\url{http://www.aclweb.org/anthology/D14-1162}
\showURL{%
\tempurl}


\bibitem[\protect\citeauthoryear{Preparata and Shamos}{Preparata and Shamos}{2012}]%
        {preparata2012computational}
\bibfield{author}{\bibinfo{person}{Franco~P Preparata} {and} \bibinfo{person}{Michael~I Shamos}.} \bibinfo{year}{2012}\natexlab{}.
\newblock \bibinfo{booktitle}{\emph{Computational geometry: an introduction}}.
\newblock \bibinfo{publisher}{Springer Science \& Business Media}.
\newblock


\bibitem[\protect\citeauthoryear{Rahman et~al\mbox{.}}{Rahman et~al\mbox{.}}{2020}]%
        {rahman2020benchmarking}
\bibfield{author}{\bibinfo{person}{Sajjadur Rahman} {et~al\mbox{.}}} \bibinfo{year}{2020}\natexlab{}.
\newblock \showarticletitle{Benchmarking spreadsheet systems}. In \bibinfo{booktitle}{\emph{Proceedings of the 2020 ACM SIGMOD}}. \bibinfo{pages}{1589--1599}.
\newblock


\bibitem[\protect\citeauthoryear{Rebman~Jr et~al\mbox{.}}{Rebman~Jr et~al\mbox{.}}{2023}]%
        {rebman2023industry}
\bibfield{author}{\bibinfo{person}{Carl~M Rebman~Jr} {et~al\mbox{.}}} \bibinfo{year}{2023}\natexlab{}.
\newblock \showarticletitle{An Industry Survey of Analytics Spreadsheet Tools Adoption: Microsoft Excel vs Google Sheets.}
\newblock \bibinfo{journal}{\emph{Information Systems Education Journal}} \bibinfo{volume}{21}, \bibinfo{number}{5} (\bibinfo{year}{2023}), \bibinfo{pages}{29--42}.
\newblock


\bibitem[\protect\citeauthoryear{Ricci et~al\mbox{.}}{Ricci et~al\mbox{.}}{2021}]%
        {ricci2021recommender}
\bibfield{author}{\bibinfo{person}{Francesco Ricci} {et~al\mbox{.}}} \bibinfo{year}{2021}\natexlab{}.
\newblock \showarticletitle{Recommender systems: Techniques, applications, and challenges}.
\newblock \bibinfo{journal}{\emph{Recommender systems handbook}} (\bibinfo{year}{2021}), \bibinfo{pages}{1--35}.
\newblock


\bibitem[\protect\citeauthoryear{Shafer and Vovk}{Shafer and Vovk}{2008}]%
        {shafer2008tutorial}
\bibfield{author}{\bibinfo{person}{Glenn Shafer} {and} \bibinfo{person}{Vladimir Vovk}.} \bibinfo{year}{2008}\natexlab{}.
\newblock \showarticletitle{A tutorial on conformal prediction.}
\newblock \bibinfo{journal}{\emph{Journal of Machine Learning Research}} \bibinfo{volume}{9}, \bibinfo{number}{3} (\bibinfo{year}{2008}).
\newblock


\bibitem[\protect\citeauthoryear{Shankar, Macke, Chasins, Head, and Parameswaran}{Shankar et~al\mbox{.}}{2022}]%
        {shankar2022bolt}
\bibfield{author}{\bibinfo{person}{Shreya Shankar}, \bibinfo{person}{Stephen Macke}, \bibinfo{person}{Sarah Chasins}, \bibinfo{person}{Andrew Head}, {and} \bibinfo{person}{Aditya Parameswaran}.} \bibinfo{year}{2022}\natexlab{}.
\newblock \showarticletitle{Bolt-on, compact, and rapid program slicing for notebooks}.
\newblock \bibinfo{journal}{\emph{Proceedings of the VLDB Endowment}} \bibinfo{volume}{15}, \bibinfo{number}{13} (\bibinfo{year}{2022}), \bibinfo{pages}{4038--4047}.
\newblock


\bibitem[\protect\citeauthoryear{Sharir, Smorodinsky, and Tardos}{Sharir et~al\mbox{.}}{2000}]%
        {sharir2000improved}
\bibfield{author}{\bibinfo{person}{Micha Sharir}, \bibinfo{person}{Shakhar Smorodinsky}, {and} \bibinfo{person}{G{\'a}bor Tardos}.} \bibinfo{year}{2000}\natexlab{}.
\newblock \showarticletitle{An improved bound for k-sets in three dimensions}. In \bibinfo{booktitle}{\emph{Proceedings of the sixteenth annual symposium on Computational geometry}}. \bibinfo{pages}{43--49}.
\newblock


\bibitem[\protect\citeauthoryear{Shrivastava and Li}{Shrivastava and Li}{2014}]%
        {shrivastava2014asymmetric}
\bibfield{author}{\bibinfo{person}{Anshumali Shrivastava} {and} \bibinfo{person}{Ping Li}.} \bibinfo{year}{2014}\natexlab{}.
\newblock \showarticletitle{Asymmetric LSH (ALSH) for sublinear time maximum inner product search (MIPS)}.
\newblock \bibinfo{journal}{\emph{Advances in neural information processing systems}}  \bibinfo{volume}{27} (\bibinfo{year}{2014}).
\newblock


\bibitem[\protect\citeauthoryear{Shrivastava, Buragohain, Agrawal, and Suri}{Shrivastava et~al\mbox{.}}{2004}]%
        {shrivastava2004medians}
\bibfield{author}{\bibinfo{person}{Nisheeth Shrivastava}, \bibinfo{person}{Chiranjeeb Buragohain}, \bibinfo{person}{Divyakant Agrawal}, {and} \bibinfo{person}{Subhash Suri}.} \bibinfo{year}{2004}\natexlab{}.
\newblock \showarticletitle{Medians and beyond: new aggregation techniques for sensor networks}. In \bibinfo{booktitle}{\emph{Proceedings of the 2nd international conference on Embedded networked sensor systems}}. \bibinfo{pages}{239--249}.
\newblock


\bibitem[\protect\citeauthoryear{Sudda}{Sudda}{2024}]%
        {sudda2024optimizing}
\bibfield{author}{\bibinfo{person}{Parsharam~Reddy Sudda}.} \bibinfo{year}{2024}\natexlab{}.
\newblock \showarticletitle{Optimizing PHP API calls with pagination and caching}.
\newblock  (\bibinfo{year}{2024}).
\newblock


\bibitem[\protect\citeauthoryear{Theobald, Weikum, and Schenkel}{Theobald et~al\mbox{.}}{2004}]%
        {probtopk}
\bibfield{author}{\bibinfo{person}{Martin Theobald}, \bibinfo{person}{Gerhard Weikum}, {and} \bibinfo{person}{Ralf Schenkel}.} \bibinfo{year}{2004}\natexlab{}.
\newblock \showarticletitle{Top-k query evaluation with probabilistic guarantees}. In \bibinfo{booktitle}{\emph{VLDB}}. \bibinfo{pages}{648–659}.
\newblock
\showISBNx{0120884690}


\bibitem[\protect\citeauthoryear{Toussaint}{Toussaint}{1980}]%
        {rngs}
\bibfield{author}{\bibinfo{person}{Godfried~T. Toussaint}.} \bibinfo{year}{1980}\natexlab{}.
\newblock \showarticletitle{The relative neighbourhood graph of a finite planar set}.
\newblock \bibinfo{journal}{\emph{Pattern Recognition}} \bibinfo{volume}{12}, \bibinfo{number}{4} (\bibinfo{year}{1980}), \bibinfo{pages}{261--268}.
\newblock


\bibitem[\protect\citeauthoryear{Tziavelis, Gatterbauer, and Riedewald}{Tziavelis et~al\mbox{.}}{2024}]%
        {tziavelis2024ranked}
\bibfield{author}{\bibinfo{person}{Nikolaos Tziavelis}, \bibinfo{person}{Wolfgang Gatterbauer}, {and} \bibinfo{person}{Mirek Riedewald}.} \bibinfo{year}{2024}\natexlab{}.
\newblock \showarticletitle{Ranked enumeration for database queries}.
\newblock \bibinfo{journal}{\emph{ACM SIGMOD Record}} \bibinfo{volume}{53}, \bibinfo{number}{3} (\bibinfo{year}{2024}), \bibinfo{pages}{6--19}.
\newblock


\bibitem[\protect\citeauthoryear{U.S. Department~of Transportation}{U.S. Department~of Transportation}{nd}]%
        {usdot_flight_delays}
\bibfield{author}{\bibinfo{person}{Bureau of Transportation~Statistics U.S. Department~of Transportation}.} \bibinfo{year}{n.d.}\natexlab{}.
\newblock \bibinfo{title}{Flight Delays and Cancellations}.
\newblock \bibinfo{howpublished}{\url{https://www.kaggle.com/datasets/usdot/flight-delays?select=flights.csv}}.
\newblock
\newblock
\shownote{Accessed: 2025-10-04.}


\bibitem[\protect\citeauthoryear{Vapnik and Chervonenkis}{Vapnik and Chervonenkis}{2015}]%
        {vapnik2015uniform}
\bibfield{author}{\bibinfo{person}{Vladimir~N Vapnik} {and} \bibinfo{person}{A~Ya Chervonenkis}.} \bibinfo{year}{2015}\natexlab{}.
\newblock \showarticletitle{On the uniform convergence of relative frequencies of events to their probabilities}.
\newblock In \bibinfo{booktitle}{\emph{Measures of complexity: festschrift for alexey chervonenkis}}. \bibinfo{publisher}{Springer}, \bibinfo{pages}{11--30}.
\newblock


\bibitem[\protect\citeauthoryear{Vitter}{Vitter}{1985}]%
        {vitter1985random}
\bibfield{author}{\bibinfo{person}{Jeffrey~S Vitter}.} \bibinfo{year}{1985}\natexlab{}.
\newblock \showarticletitle{Random sampling with a reservoir}.
\newblock \bibinfo{journal}{\emph{ACM Transactions on Mathematical Software (TOMS)}} \bibinfo{volume}{11}, \bibinfo{number}{1} (\bibinfo{year}{1985}), \bibinfo{pages}{37--57}.
\newblock


\bibitem[\protect\citeauthoryear{Wang, Xu, Yue, and Wang}{Wang et~al\mbox{.}}{2021}]%
        {wang2021comprehensive}
\bibfield{author}{\bibinfo{person}{Mengzhao Wang}, \bibinfo{person}{Xiaoliang Xu}, \bibinfo{person}{Qiang Yue}, {and} \bibinfo{person}{Yuxiang Wang}.} \bibinfo{year}{2021}\natexlab{}.
\newblock \showarticletitle{A comprehensive survey and experimental comparison of graph-based approximate nearest neighbor search}.
\newblock \bibinfo{journal}{\emph{arXiv preprint arXiv:2101.12631}} (\bibinfo{year}{2021}).
\newblock


\bibitem[\protect\citeauthoryear{Welzl}{Welzl}{2005}]%
        {welzl2005smallest}
\bibfield{author}{\bibinfo{person}{Emo Welzl}.} \bibinfo{year}{2005}\natexlab{}.
\newblock \showarticletitle{Smallest enclosing disks (balls and ellipsoids)}. In \bibinfo{booktitle}{\emph{New Results and New Trends in Computer Science: Graz, Austria, June 20--21, 1991 Proceedings}}. Springer, \bibinfo{pages}{359--370}.
\newblock


\bibitem[\protect\citeauthoryear{Xiao, Rasul, and Vollgraf}{Xiao et~al\mbox{.}}{2017}]%
        {xiao2017online}
\bibfield{author}{\bibinfo{person}{Han Xiao}, \bibinfo{person}{Kashif Rasul}, {and} \bibinfo{person}{Roland Vollgraf}.} \bibinfo{year}{2017}\natexlab{}.
\newblock \bibinfo{booktitle}{\emph{Fashion-MNIST: a Novel Image Dataset for Benchmarking Machine Learning Algorithms}}.
\newblock
\showeprint[arXiv]{cs.LG/1708.07747}~[cs.LG]


\bibitem[\protect\citeauthoryear{Zhao et~al\mbox{.}}{Zhao et~al\mbox{.}}{2023}]%
        {zhao2023data}
\bibfield{author}{\bibinfo{person}{Jinjin Zhao} {et~al\mbox{.}}} \bibinfo{year}{2023}\natexlab{}.
\newblock \showarticletitle{Data Makes Better Data Scientists}. In \bibinfo{booktitle}{\emph{Proceedings of the Workshop on Human-In-the-Loop Data Analytics}}. \bibinfo{pages}{1--3}.
\newblock


\bibitem[\protect\citeauthoryear{Zhu, Chamberlain, Wu, Starobinski, and Liu}{Zhu et~al\mbox{.}}{2025}]%
        {zhu2025approximation}
\bibfield{author}{\bibinfo{person}{Zeying Zhu}, \bibinfo{person}{Jonathan Chamberlain}, \bibinfo{person}{Kenny Wu}, \bibinfo{person}{David Starobinski}, {and} \bibinfo{person}{Zaoxing Liu}.} \bibinfo{year}{2025}\natexlab{}.
\newblock \showarticletitle{Approximation-First Timeseries Monitoring Query At Scale}.
\newblock \bibinfo{journal}{\emph{arXiv preprint arXiv:2505.10560}} (\bibinfo{year}{2025}).
\newblock


\end{thebibliography}

\clearpage
\appendix
\section*{Appendix}
\subsection*{Table of Content}

\noindent
\makebox[\linewidth][l]{1.\quad More on the Application Examples \dotfill\ \ref{sec:app:applications}}\\
\makebox[\linewidth][l]{2.\quad Related Work \dotfill\ \ref{sec:app:related}}\\
\makebox[\linewidth][l]{3.\quad Extension to Non-linear Scoring Functions \dotfill\ \ref{app:nonlinear}}\\
\makebox[\linewidth][l]{4.\quad Missing Details for Reducing RAS to RAR \dotfill\ \ref{app:reduction}}\\
\makebox[\linewidth][l]{5.\quad Our Algorithm Based on $k$-th Level of Arrangement \dotfill\ \ref{sec:app:kthlevel}}\\
\makebox[\linewidth][l]{6.\quad Analysis of the \textsc{Eps2D} Algorithm \dotfill\ \ref{sec:app:eps2d}}\\
\makebox[\linewidth][l]{7.\quad Missing Proofs \dotfill\ \ref{sec:app:proofs}}\\
\makebox[\linewidth][l]{8.\quad Dynamic Setting \dotfill\ \ref{sec:app:dynamic}}\\
\makebox[\linewidth][l]{9.\quad Analysis of the \textsc{EpsRange} Algorithm \dotfill\ \ref{sec:app:epsrange:analysis}}\\
\makebox[\linewidth][l]{10.\quad Analysis of Hierarchical Sampling \dotfill\ \ref{sec:app:hierarchy}}\\
\makebox[\linewidth][l]{11.\quad Pseudo-codes \dotfill\ \ref{sec:app:pseudo}}\\
\makebox[\linewidth][l]{12.\quad Lower Bounds and Optimality of Our Algorithms \dotfill\ \ref{sec:app:opt}}\\
\makebox[\linewidth][l]{13.\quad Extended Experiments \dotfill\ \ref{sec:app:exp}}
% \vspace{-4mm}
\section{More on the Application Examples}\label{sec:app:applications} %of Random-Access Retrieval}
Examples~\ref{eg-1} and~\ref{eg-3} highlight two motivations for random-access ranked retrieval for e-commerce and web applications.
Such examples can be found across a wider range of domains.
In this section, we highlight additional application examples of $\ith$ and its variations across a diverse set of domains.

\stitle{Extending random-access to ranked retrieval}
Random access to arbitrary positions in an array or a sorted list is a fundamental operation underlying many of the most efficient algorithms in computer science. Classic divide-and-conquer techniques, such as {\bf binary search} and {\bf quick select}, critically rely on the ability to randomly access elements at arbitrary positions in order to efficiently prune the search space.

However, in ranked retrieval systems, this basic capability is largely missing. Although items are conceptually ordered by a scoring or ranking function, accessing the element at a given rank typically requires enumerating all higher-ranked items, effectively reducing random access to sequential scan. This limitation prevents the direct application of well-established algorithmic techniques in settings where ordering is induced by a query-dependent ranking rather than a static key.

The random-access ranked retrieval (\ith) problem addresses this gap by providing efficient access to arbitrary rank positions under a ranking function without materializing the entire result prefix. \ith extends the power of random-access–based algorithms to settings where the values are scores, determined by a ranking function or by similarity to a specific point.
This capability opens the door to adapting a broad class of classical algorithms, previously confined to static (sorted) lists, to modern data retrieval workloads where rankings are query-specific (dynamic) and high-dimensional.

\stitle{Analytical Insights for Business Decision Support}
In many business and decision-support applications, stakeholders are interested 
in advanced analytics (e.g., income distribution, demographic distribution, shopping behavior) for 
not only in the top-ranked customers, or the ones in their close vicinity, but in understanding how key analytics look across different rank positions or percentiles of a scoring function~\cite{cormode2021relative, shrivastava2004medians, zhu2025approximation, greenwald2001space}.
Examples include analyzing income or demographic distributions across distance bands from a retail location, studying customer engagement at different relevance thresholds, or evaluating how small changes in targeting criteria affect the selected population.

Consider a targeted advertising scenario in which a business defines its audience using a distance-based or machine-generated ranking function -- for instance, customers ranked by proximity to a store or by predicted likelihood of conversion. While traditional systems efficiently retrieve the top-k customers, business decisions often hinge on understanding boundary cases: customers near a cutoff who are just included or excluded under a given targeting policy. These boundary regions are especially important for evaluating trade-offs, fairness, and return on investment.

Supporting such analyses requires efficient access to records at arbitrary quantiles or rank positions (e.g., “the 90th-percentile customer according to this scoring function”), as well as the ability to explore narrow rank intervals around these positions. Naïve approaches based on sampling or top-k retrieval are ineffective, since points at the same rank or distance may be widely dispersed in the feature space and cannot be approximated by local neighborhoods.

Random-access ranked retrieval enables this form of fine-grained analytical exploration by allowing analysts to directly probe specific rank positions and score bands without exhaustively scanning the dataset. As a result, it provides a powerful primitive for interactive business analytics, sensitivity analysis, and decision support in large-scale, high-dimensional settings.

\stitle{Efficient Database Queries over Derived Attributes}
Modern database systems routinely support queries over {\em derived attributes}, i.e., attributes that are not explicitly stored in the database but are computed on-the-fly as part of a query~\cite{levandoski2010preference, ghosh2022jenner}. Such derived attributes are commonly expressed as functions over existing columns and are treated as first-class citizens in query predicates, projections, and orderings.
For example, consider the following SQL query:

\vspace{2mm}
\begin{verbatim}
SELECT Name, YEAR(Today_Date) - YEAR(Birth_Date) AS Age
FROM Employees
WHERE Age BETWEEN 25 AND 30;
\end{verbatim}\vspace{-4mm}
\vspace{2mm}

Here, {\tt Age} is not a stored attribute but is defined by the query itself as a function of {\tt Birth\_Date}. Conceptually, this corresponds to introducing a new attribute whose value is computed per tuple and then used for filtering or ranking.

From a ranked-retrieval perspective, derived attributes naturally induce query-dependent scoring functions that combine existing attributes through arithmetic or functional transformations. 
Supporting efficient access to tuples based on such derived attributes poses a fundamental challenge. Since derived attributes do not exist at indexing time, traditional indexing techniques cannot be applied directly, and a naïve solution requires computing the derived value for all tuples. This is particularly expensive when users wish to explore arbitrary rank positions or specific rank ranges. 
Also in spreadsheet applications~\cite{rebman2023industry,ma2024spreadsheetbench}, data scientists frequently create new attributes as weighted sums of existing ones. Subsequently, they may {\em ``jump''} to tuples at specific rank positions without explicitly enumerating preceding entries. 

By treating derived attributes as scoring functions defined at query time, \ith\ enables efficient access to arbitrary rank positions without fully materializing or sorting the entire result set. This makes \ith\ a natural foundation for supporting interactive and scalable database queries over derived attributes, extending classical database functionality with modern rank-aware access patterns.

\stitle{Individually Fair Decision Boundaries}
In many decision-making scenarios, the objective is not merely to retrieve items at a fixed rank, but to identify the cut-off points in a ranked list as the decision boundaries~\cite{ahn2019fairsight, liu2024rethinking, balciouglu2022notion, meng2024ranked}. 
To minimize individual unfairness (i.e., to produce different outputs for similar inputs), the cut-off points should indicate meaningful transitions between the classes. In other words, the gap at the boundary should be wide.
A tangible example arises in educational assessment, where instructors must determine grade boundaries (e.g., between an A and a B) based on student scores. Rather than enforcing a predefined percentile or absolute threshold, instructors often examine the score distribution around a target rank (e.g., the 90th percentile) to identify natural breaks that separate performance tiers. 
Similarly, drawing fair decision boundaries in score-based data-driven systems (e.g., for loan applications) requires exploring a band (range) of valid cut-off points to maximize the decision-boundary gap~\cite{ahn2019fairsight}.

Random-access ranked retrieval enables principled gap discovery by supporting direct, localized inspection of ranked items around candidate cut-off points. This capability allows decision makers to efficiently retrieve entities within the decision boundary candidates' range and to set the cut-off point responsibly.

% \vspace{-2mm}

\section{Related Work}\label{sec:app:related}
% \vspace{-1mm}
\stitle{Top-$k$ Retrieval}
Top-$k$ query processing has been extensively studied across various settings. The foundational  Threshold Algorithm (TA)~\cite{fagin} combines sorted and random access to efficiently identify top-$k$ results and has inspired numerous extensions, including cost-based optimizations and early stopping strategies~\cite{taadaptbruno}. Probabilistic top-$k$ methods~\cite{probtopk} adapt this problem to uncertain data, ranking results based on expected scores or statistical confidence. Index-based approaches such as the onion layer~\cite{onion} and view-based techniques~\cite{lpta,rankviews} have been proposed to accelerate queries through structural or materialized reuse. For comprehensive surveys, see~\cite{ilyasrank,gautopk}. 
\underline{\em In top-$k$ retrieval, $k$ is a small constant}, while in our setting, we aim to find the tuple at rank position $i$, where $i$ is $O(n)$. As a result, the algorithms proposed for top-$k$ are inefficient for our setting, as shown in the experiments.

\stitle{Direct Access Queries in Databases}
Recent work in database theory has explored the notion of \emph{direct access} to query answers, aiming to support efficient access to the $i$-th ranked tuple without materializing the full result~\cite{tziavelis2024ranked, eldar2023direct, carmeli2023tractable, bringmann2025tight, bagan2008computing}. In these works, the ordering is generally assumed to be fixed, lexicographically, on some attributes. Some work extends this to conjunctive queries with aggregation, designing algorithms that enable access to top-ranked answers without fully evaluating the query~\cite{eldar2023direct}. 
In our paper, we discuss any ranking function defined as as the weights on the features. Specifically, rather than assuming a predefined ordering, {\em we consider a ranking function as part of} {\em the user query}. 
As a result, the proposed algorithms for direct-access queries cannot be adapted to our setting.

\stitle{Range Searching Indexes}
Range searching, specifically for simplex ranges, is related to our work in solving the intermediate \srs~problem. Some approaches achieve logarithmic query time at the cost of exponential space~\cite{aggarwal1990solving, dobkin1990implicitly}, while others aim for linear space and sublinear query time~\cite{chazelle1985power, haussler1986epsilon, matousek1992reporting, matouvsek1991efficient}. Specialized variants have also been studied, including half-space range searching~\cite{chazelle1985power} and axis-aligned queries~\cite{chazelle1986fractional}. For a comprehensive survey, see~\cite{agarwal2017range}.
To address the \srs~problem in practice, we develop a hierarchical sampling approach inspired by data structures commonly used for similarity search, like HNSW~\cite{malkov2018efficient}. 
As an intermediate problem for solving \ith, \underline{\em our solutions are tailored for narrow ranges}.
While our methods do not offer formal theoretical guarantees, they performed efficiently in practical settings.
Another indexing technique related to our algorithm is the \emph{Ball Tree} structure~\cite{balltree}. 
The key difference between Ball Trees and our approach lies in the construction and structure. 
In our method, each layer is generated by randomly sampling from the previous one, and each layer serves as a set of centroids for the preceding layer. In contrast, Ball Trees are typically binary trees built in a top-down manner. 

Other related topic is 
\textit{Max Inner Product Search (MIPS)}~\cite{shrivastava2014asymmetric, bachrach2014speeding} which focuses on retrieving the item with the highest inner product with a query vector. Techniques, such as asymmetric LSH~\cite{shrivastava2014asymmetric} and tree- or graph-based indices~\cite{bachrach2014speeding, auvolat2015clustering}, have been proposed to accelerate MIPS. In parallel, \textit{Quantile Sketches}~\cite{greenwald2001space, karnin2016optimal} provide compact data summaries that approximate rank-based statistics, enabling efficient quantile estimation in streaming or large-scale settings. Unlike these methods, our work on \textit{Random-Access Retrieval} aims to directly access the $i$-th element according to an arbitrary scoring function.

\stitle{Nearest Neighbor Search}
A closely related problem to random-access similarity search (Problem~\ref{prob:similarity}) is \emph{Approximate Nearest Neighbor} (ANN) search, which aims to retrieve the top-$k$ most similar points to a query $q$. ANN has been extensively studied in the context of vector databases~\cite{ann1,ann2}, and existing methods broadly fall into three categories: (a) graph-based indexing and traversal methods, (b) quantization-based approaches that compress the search space, and (c) hashing techniques inspired by locality-sensitive hashing.

Graph-based methods construct a proximity graph over the dataset and perform greedy graph traversal at query time~\cite{rngs, preparata2012computational}. Prominent examples include Navigable Small-World (NSW) graphs and their hierarchical variant HNSW~\cite{nsw,malkov2018efficient}, as well as Navigating Spread-out Graphs (NSG)~\cite{nsg} and DiskANN~\cite{diskann}.

Quantization-based techniques aim to accelerate kNN search by compressing high-dimensional vectors. Widely used approaches include Product Quantization (PQ)~\cite{pq} and Inverted File (IVF) indexing~\cite{douze2024faiss,johnson2019billion}. PQ and its variants decompose vectors into multiple low-dimensional subspaces and quantize each independently, enabling fast approximate distance computation via lookup tables. For large-scale datasets, PQ is commonly combined with coarse partitioning schemes such as IVF, which clusters the dataset into centroid-based cells approximating Voronoi regions.

The third category consists of hash-based methods such as locality-sensitive hashing, which provide theoretical guarantees~\cite{guarantee, eucli, mulprobe} but are known to degrade in high-dimensional real-world settings~\cite{jafari2021survey}.

While these methods are designed to solve similarity search under distance metrics such as $\ell_2$ or cosine similarity, they primarily target regimes where $k$ is a small constant. As demonstrated in our experiments, their performance degrades substantially when applied to random-access retrieval at an arbitrary rank~$i$. In contrast, our algorithm is explicitly designed to address this setting.

\color{black}
% \color{blue}
\section{Extension to Non-linear Scoring Functions}~\label{app:nonlinear}
In Section~\ref{sec:def}, we defined ranked retrieval with respect to \emph{linear scoring functions}, where the score of a point $\point \in \Reals^d$ is computed as a weighted sum of its attributes, i.e.,
\[
\score_{\ef}(\point) = \ef^\top \point = \sum_{j=1}^{d} \ef[j]\point[j],
\]
for a weight vector $\ef \in \Reals^d$.

While our formal development focuses on linear scoring functions, the scope of our framework naturally extends to a broad class of non-linear scoring functions that can be expressed as a {\em linear combination of derived terms}. Such representations are standard in machine learning and data analysis, including linear regression models with feature transformations~\cite{montgomery2021introduction} and monotonic transformations commonly used in ranking systems~\cite{asudeh2018obtaining}.

The key idea is to introduce non-linear terms as additional \emph{derived attributes}, thereby lifting the original feature space into a higher-dimensional space in which the scoring function becomes linear. All results developed for linear scoring functions are then applied directly in the expanded space.

\paragraph{Illustrative Example.}
Consider the following non-linear scoring function over two attributes $\{x, y\}$:
\[
\score(\point) = 2x^2 + y^2 - 3x/y + 5xy - x + 2y.
\]
This function can be equivalently expressed as a linear scoring function over the expanded attribute set
\[
\{x_1=x^2, x_2=y^2, x_3=x/y, x_4=xy, x_5x, x_6=y\}.
\]
Then, using the new attributes, the function can be written as 
\[
\score(\point) = 2x_1 + x_2 - 3x_3 + 5x_4 - x_5 + 2x_6.
\]

\paragraph{Real-World Example.}
As a concrete real-world instance, consider the ranking function used by CSMetrics~\cite{csmetrics} to rank computer science research institutions. The ranking combines two attributes for each institution: the number of \emph{measured} citations ($M$) and an estimate of \emph{predicted} future citations ($P$). Given a parameter $\alpha \in [0,1]$, the scoring function is defined as
\[
\score_{\alpha}(\point) = (M)^{\alpha}(P)^{1-\alpha}.
\]

Although this function is multiplicative and non-linear in the original attributes, it can be transformed into a linear form by introducing the derived attributes~\cite{asudeh2018obtaining}
\[
x_1 = \log(M), \qquad x_2 = \log(P),
\]
Using these attributes, the scoring function can be rewritten as
\[
\score^{\ell}_{\alpha}(\point) = \alpha x_1 + (1-\alpha)x_2.
\]

Since the logarithm is a monotonic transformation, the original function $\score_{\alpha}(\cdot)$ and its linearized form $\score^{\ell}_{\alpha}(\cdot)$ induce identical rankings. Consequently, all algorithms and guarantees developed in this paper apply unchanged to such non-linear scoring functions after feature transformation.

\color{black}
\section{Missing Details for Reducing RAS to RAR}\label{app:reduction}

\subsection{Euclidean Distance: Proof of Lemma~\ref{lem:reduction-1}}\label{app:reduction-1}

\begin{proof}
Consider the Euclidean distance function 
\[
\dist(p,q)=\|p - q\|_2 =\sqrt{\sum_{j=1}^d(p[j]-q[j])^2}.
\]
We first note that, instead of using $\dist(p,q)$, one could use $\dist(p,q)^2$ as the function without loss of generality.

The expansion of the term $\dist(p,q)^2$ provides the basis of a transformation that connects \similarity\ and \css\ problems. This expansion is  given by,
\begin{equation}\label{eqn:l2-norm-nn}
    \|p-q\|_2^2=\|p\|^2_2-2q^\top p+\|q\|^2_2
\end{equation}
The term $q^T\cdot p$ is akin to computing the score based on the dot product (\emph{i.e.}, a linear function), where $q$ is the weight vector.

% \smallskip\noindent\textbf{Connection between \similarity\ and \ith}: 
\vspace{2mm}
The $d$ dimensional \similarity\ problem can be reduced to the $d+1$ dimensional \ith problem using a transformation. The transformation relies on the notion of Paraboloid lifting~\cite{har2011geometric} given by $\phi:\mathbb{R}^d\rightarrow\mathbb{R}^{d+1}$. The function $\phi$ transforms a $d$ dimensional point to a $d+1$ dimensional point in a new transformed space. Following literature, the paraboloid lifting is defined as below,
\begin{equation}\label{eqn:paraboloid-transform}
    \phi(p)=(p[1], p[2], \ldots, p[d], \|p\|^2_2)
\end{equation}

As a first step, during the pre-processing stage, the paraboloid lifting function $\phi$ is applied to every point in the dataset $\mathcal{D}$ to obtain $d+1$ dimensional dataset. When an \similarity\ query $q$ arrives, we transform the query $q\in \mathbb{R}^d$ to a $d+1$ dimensional \ith\ scoring function $f=(-2q[1], -2q[2],\ldots, -2q[d],1)$. Applying the scoring function $f$ to any point $p$ in the dataset, we obtain the term 
\[\score_f(p) = f^\top p =  \|p\|^2_2-2q^\top p = \|p-q\|_2^2-\|q\|_2^2\]
Note that the term $\|q\|^2_2$ is still needed to obtain the exact same value as the $\dist(p,q)^2$ term in Equation~\ref{eqn:l2-norm-nn}. But, as the term $\|q\|^2_2$ is independent of $p$, and is missing for all  $p_i\in\mathcal{D}$ as part of the computation of $f$, there is no change in ranking. That is, while the scores for all points in $\data$ are offset by $\|q\|_2^2$, the ranking between points does not change. %, and hence, top-$k$ nearest neighbors do not change either. 
Hence, the transformation suffices to answer any $\ell_2$ norm based \similarity\ query using a transformed \ith\ query with rank set to $1$. A visual representation of this lifting transformation is shown in Figure~\ref{fig:lifting}.
\end{proof}

\subsection{Cosine Distance: Proof of Lemma~\ref{lem:reduction-2}}\label{app:reduction-2}
\begin{proof}
For a query $q$, the cosine distance is given by,
\begin{equation}\label{eq:cosine}
    \dist(p,q)=1-\frac{q^\top p}{\|q\|_2\cdot\|p\|_2}.
\end{equation}

The numerator of the Equation~\ref{eq:cosine}, $q^\top p$, can be interpreted as a dot product between the linear function $q$ and $p$. Therefore, 
\begin{align}
    \nonumber \dist(p,q)
    &=1-\left(\frac{q}{\|q\|_2}\right)^\top\frac{p}{\|p\|_2}\\
    \nonumber
    &= 1 + \sum_{j=1}^d \frac{-q[j]}{\|q\|_2} \frac{p}{\|p\|_2}
\end{align}

Removing the constant term from the scoring function, the ranking generated by Equation~\ref{eq:cosine2} is the same as Equation~\ref{eq:cosine}.

\begin{equation}\label{eq:cosine2}
    \score(p) = \sum_{j=1}^d \frac{-q[j]}{\|q\|_2} \frac{p}{\|p\|_2}
\end{equation}

Therefore, transforming each data point $p$ to \(\frac{p}{\|p\|_2}\) in the transformed space,
given a query $q\in\Reals^d$, the corresponding weight vector of the scoring function is \(f=(\frac{q[1]}{\|q\|_2},\ldots, \frac{q[d]}{\|q\|_2}).\)

% To transform the cosine distance-based \similarity\ to \ith\, we transform the $d$ dimensional \similarity\ problem to the $d$ dimensional \ith\ problem. Instead of explicitly dividing by $\|q\|_2\cdot \|p\|_2$ after applying the linear function, we follow the standard approach of normalizing both the query $q$ and data points $p$ to unit length. This normalization does not change the cosine distance $d(p,q)$; rather, it makes $\dist(p,q)$ equivalent to the inner product $q^\top p$. Accordingly, our transformation normalizes every data point $p \in \mathcal{D}$ to unit length during pre-processing and normalizes the query $q$ to unit length at query time.

% \color{red}
% {Mohsen: Why did I mention in the 'Formal Definition' that the function should be normalized? Should we remove that footnote? to think}
% \color{blue}
\end{proof}
% \vspace{-1mm}
\section{Our Algorithm Based on $k$-th Level of Arrangement}\label{sec:app:kthlevel}

In this section, we describe the details of \textsc{KthLevel} algorithm. We begin by introducing key concepts from computational geometry, and then describe our algorithm in detail.

% \vspace{-3mm}
\subsection{Preliminaries}\label{sec:kth:prelim}
% \vspace{-1mm}
\stitle{Duality}
We use the classic dual space transformation that maps points to hyperplanes and vice versa. This transformation is well-defined and preserves intersection relationships~\cite{edelsbrunner1987algorithms, chazelle1985power}. Formally, the dual of a point \(\point \in \mathbb{R}^d\) is defined as:
\[
    \dual(\point) := \left\{ \mathbf{x} \in \mathbb{R}^d \mid \mathbf{x}^\top \point = 1 \right\}.
\]
In other words, the transformation \(\dual(\cdot)\) maps a point in the \emph{primal space} to a hyperplane in the \emph{dual space}. This hyperplane is represented by the equation:
\[
    h: \sum_{i=1}^d p[i] \cdot x_i = 1.
\]

\begin{figure*}[t]
    \centering
    \includegraphics[width=0.8\linewidth]{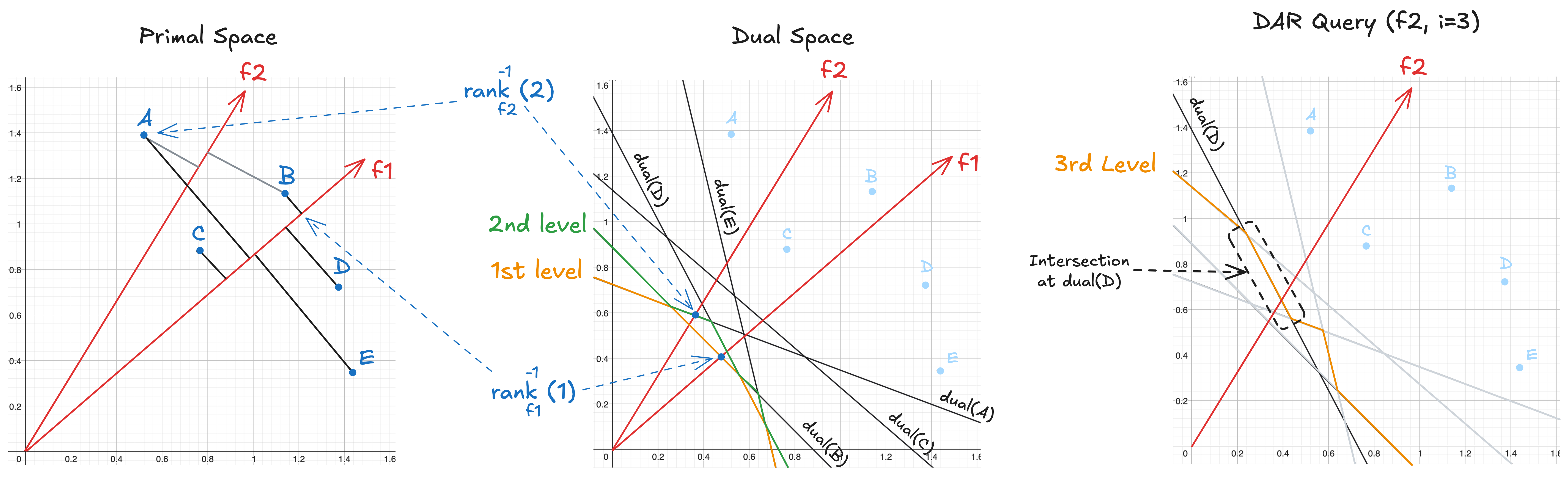}
    \vspace{-4mm}
    \caption{A 2D representation of dual space and the $k$-th level of arrangements (For simplicity, only the first quadrant is presented). The right figure shows an example of running \textbf{KthLevel} for $i = 3$. The result of this query is the point $D$.}
    \label{fig:kthlevel}
\end{figure*}

We refer to the original input domain as the \emph{primal space}, and the space of dual hyperplanes as the \emph{dual space}. Figure~\ref{fig:kthlevel} provides a visual illustration of this transformation. The left figure shows the primal space, while the middle figure shows the dual hyperplanes in the dual space.
The following key observation follows from this duality. A formal proof can be found in~\cite{edelsbrunner1987algorithms}.

\begin{lemma}\label{lm:kthlevel}
Let \(\ef\) be a scoring function (i.e., a direction vector), and let \(\point^{(1)}, \ldots, \point^{(n)}\) be the points sorted in descending order according to their score under \(\ef\), so that \(\ef^\top \point^{(1)} > \ef^\top \point^{(2)} > \cdots\). Then, for any \(a < b\), the intersection point of \(\dual(\point^{(b)})\) with the ray in direction \(\ef\) lies closer to the origin than that of \(\dual(\point^{(a)})\). \hfill\qed
\end{lemma}
For instance, in Figure~\ref{fig:kthlevel}, the point \(B\) has the highest score under \(\ef_1\), and thus its dual \(\dual(B)\) intersects the ray \(\ef_1\) in the dual space earlier (closer to origin) than any other point's dual.
\subsubsection{$k$-th Level of Arrangement}
Consider the set of dual hyperplanes \(\mathcal{H} = \{\dual(\point) \mid \point \in \data\}\).
$\mathcal{H}$ defines a dissection of $\Reals^d$ into
connected convex cells, called the {\em arrangement} of the hyperplanes~\cite{edelsbrunner1987algorithms}.

To simplify the explanations, let $d=2$.
Sweep a ray \(\ef\) counterclockwise from the \(x\)-axis to the \(y\)-axis. At each orientation of \(\ef\), the ray intersects all dual hyperplanes (dual lines) in \(\mathcal{H}\). These intersection points can be sorted by their distance from the origin. The \emph{$k$-th level} is the locus of points that lie on the $k$-th closest dual line intersected by \(\ef\), as the direction of \(\ef\) varies continuously~\cite{edelsbrunner1987algorithms}. 

In general \(d\)-dimensional space, consider a ray \(\ef\) originating at the origin and sweeping over all directions on the unit sphere \(\mathbb{S}^{d-1}\). For each direction \(\ef\), the ray intersects the hyperplanes in \(\mathcal{H}\) in up to \(n\) distinct points, which can again be ordered by increasing distance from the origin. The \emph{$k$-th level} is the set of all such $k$-th intersection points across all directions \(\ef \in \mathbb{S}^{d-1}\).

Figure~\ref{fig:kthlevel} illustrates the first and second levels of the arrangement in \(\mathbb{R}^2\), shown in orange and green, respectively.

% \vspace{-3mm}
\subsection{\textsc{KthLevel} Algorithm}
% \vspace{-1mm}
In this subsection, we formally present our algorithm, \textsc{KthLevel}. We begin by describing the \emph{preprocessing} phase, followed by a discussion of the \emph{query phase}, and conclude with a theoretical analysis of the algorithm's performance. The overall structure of the algorithm applies uniformly across dimensions \(d \geq 2\); however, for the purpose of analysis, we distinguish between the planar case (\(d = 2\)) and the general higher-dimensional case (\(d > 2\)).

\stitle{Preprocessing}
During the preprocessing phase, we construct the
arrangement of the dual hyperplanes induced by the point set \(\data\), and the levels of the arrangement for all \(1 \leq k \leq n\). We assume access to an oracle \(\mathcal{K}\) that, for a given \(i\), returns an efficient data structure representing the $i$-th level of the arrangement~\cite{edelsbrunner1987algorithms}. 
By calling this oracle for each \(i \in [n]\), we obtain the complete set of structures needed for query processing. 
Each such structure allows constant-time access to the hyperplanes intersecting on each face of the arrangement, as well as the points in the primal space corresponding to those hyperplanes~\cite{edelsbrunner1987algorithms}.

\stitle{Query Phase}
In the query phase, we are given a scoring function \(\ef\) and a rank \(i\), and the goal is to retrieve the point \(\point^{(i)}\), which is the \(i\)-th highest-ranked point with respect to \(\ef\).
By Lemma~\ref{lm:kthlevel}, the 
intersection of the \(i\)-th level of arrangement of dual hyper-planes with \(\ef\) corresponds with the desired point \(\point^{(i)}\).
Algorithm~\ref{alg:kthlevel} shows the pseudo-code of our process for finding this intersection.

Consider the 3rd level of the arrangement highlighted (in orange) in the Figure~\ref{fig:kthlevel} (right figure).
In 2D, the $k$-th level is a non-convex chain of line segments, each belonging to a dual line.
Now, let the query function be $f_2$, while $i=3$. Our objective is to find $p^{(3)} = \rank^{-1}_{\ef_2}(3)$.
The line segment in the 3rd level that intersects with the vector of $f_2$ is $\mathsf{dual}(D)$. Hence, the point $p^{(3)} =D$ in this example.
Note that applying a binary search on the endpoints of the line segments of the 3rd level, one can find the intersection of any function $f$ with any level $i$ in a time logarithmic to the number of line segments.

From the above example, observe that the $k$-th level is a chain of line segments. Similarly, for a general $d\geq 2$, the $k$-th level is a chain of \((d-1)\)-faces.
A \(d'\)-face of an arrangement is a \(d'\)-dimensional convex cell, formed by the intersection of \(d - d'\) hyperplanes. For example, 0-faces are vertices (points), 1-faces are edges (segments or lines), and so on.

Each function $f$, in such cases, is an origin-anchored vector that intersects one of the faces of each level $i, \forall i\in [n]$.
Projecting the endpoints of these faces on the surface of the unit sphere \(\mathbb{S}^{d-1}\) forms an arrangement in the $(d-1)$-dimensional space. For example, when $d=3$, each $k$-th level is a chain of plane segments, and their projection onto the unit ball (centered at the origin) is an arrangement of lines.

Now, consider the intersection point of the query function $f$ with the unit sphere \(\mathbb{S}^{d-1}\).
Finding the face segment of a level $i$ with a function $f$ is equivalent to finding the cell of the arrangement on the surface of the sphere that intersects with $f$.
This problem is known as {\em point location} on the arrangement of hyperplanes~\cite{meiser1993point}.
Line~\ref{alg1:line2} of Algorithm~\ref{alg:kthlevel} uses a state-of-art 
point-location algorithm for finding the intersection of $f$ with the $i$-th level.

\begin{algorithm}[t]
\caption{\textsc{KthLevel-Query}$(\ef, i)$}
\label{alg:kthlevel}
\begin{algorithmic}[1]
\Require The scoring function $\ef$ and the order $i$
\Ensure The $i$-th highest scored point according to $\ef$: $\point^{(i)}$

\State $Level_i \gets $ preprocessed data structure\Comment{Get the $i$-th level}

\State $face \gets \mathsf{Intersection(\ef, Level_i)}$\label{alg1:line2} \Comment{Get the intersection}

\State $h \gets $ The hyperplane corresponding to $face$ in arrangement

\State $\point^{(i)} \gets \dual^{-1}(h)$

\State \Return $\point^{(i)}$

\end{algorithmic}
\end{algorithm}

% \vspace{-2mm}
\subsection{Analysis of \textbf{KthLevel} Algorithm}
% \vspace{-1mm}
\stitle{Preprocessing}
In 2D, the complexity of the $k$-th level in an arrangement of $n$ lines in the plane is \(O(nk^{1/3})\)~\cite{dey1998improved}. 
Hence, considering an arbitrary level $i=O(n)$, the complexity of the $i$-th level is 
\(O(n^{4/3})\). 
The total preprocessing time to construct all levels of the arrangement is \(O(n^2)\), where each individual level can be built in \(O(n \log n + nk)\) time~\cite{everett1993optimal}.
In 3D, the complexity of the $k$-th level is \(O(nk^{3/2})\)~\cite{sharir2000improved}. As a result, 
the complexity of each level $i\in [n]$ is bounded by \(O(n^{5/2})\).
The complexity of the arrangement in the general $d$-dimensional setting is $O(n^d)$, and the total time required to construct it is also \(O(n^d)\)~\cite{edelsbrunner1987algorithms,agarwal1998constructing}.

\stitle{Query Time}
During the query time, Algorithm~\ref{alg:kthlevel} finds the intersection of the $i$-th level with the vector of the function $f$. Given that the total number of faces in each level $i\in [n]$ is bounded by $O(n^d)$, identifying this intersection using a state-of-the-art algorithm for point location on the arrangement of hyperplanes is in $O(\log n)$, ignoring the constant factors that depend on $d$~\cite{meiser1993point}.
\section{Analysis of the \textsc{Eps2D} Algorithm}\label{sec:app:eps2d}
We analyze the time and space complexity of the \textsc{Eps2D} algorithm (discussed in section~\ref{sec:eps2d}) step-by-step.

\emph{(Step i)}   
Finding the $i_\ell$-th and $i_u$-th point in $\epssample_\eps$ of size $m$ takes $O(m)$ using the extension of the median finding algorithm.

\emph{(Step ii)} 
This step involves solving a simplex range searching query in 2D (with stripe ranges). Using the Matoušek’s efficient range searching technique~\cite{matouvsek1991efficient}, this can be done in \( O(\sqrt{n} + |\result|) \) time with \( O(n) \) space.

\emph{(Step iii)} 
We perform a half-space range counting query. Again using Matoušek’s partition tree~\cite{matouvsek1991efficient}, this takes \( O(\sqrt{n}) \) time and \( O(n) \) space.  
(Step iv) 
Finally, we sort the set \( \result \), which requires \( O(|\result| \log |\result|) \) time.

\stitle{Space Complexity}  
All operations use data structures with linear space requirements. Thus, the total space usage is linear in input.

\stitle{Query Time Complexity}  
To analyze the runtime precisely, we bound the size of the result set \( \result \) as follows:

\begin{lemma}\label{lm:output-size}
    The size of the result set \( \result \) is \( O(\eps n) \).
    \hfill\qed
\end{lemma}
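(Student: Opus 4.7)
The plan is to directly exploit the $\eps$-sample property, this time applied to the stripe range $\Stripe_{\ef, \ell, u}$ itself (note that stripe ranges form a range space of bounded VC-dimension $\delta = d+1$, as mentioned in the footnote after Lemma~\ref{lm:boundaries}'s context, so $\epssample_\eps$ serves as an $\eps$-sample for stripes as well). The crucial observation is that, by the very way $\ell$ and $u$ were chosen in Equation~\ref{eq:threshold}, the points of the $\eps$-sample falling inside the stripe are precisely those whose rank in $\epssample_\eps$ lies between $i_\ell$ and $i_u$, and equation~\ref{eq:klku} controls this count.

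First I would observe that
\[
|\Stripe_{\ef,\ell,u} \cap \epssample_\eps| = i_u - i_\ell + 1 \leq 2\eps m + O(1),
\]
directly from the definitions of $i_\ell$ and $i_u$ in Equation~\ref{eq:klku}. Next, I would invoke the defining property of an $\eps$-sample applied to the stripe range $\Stripe_{\ef,\ell,u}$:
\[
\left| \frac{|\Stripe_{\ef,\ell,u} \cap \data|}{n} - \frac{|\Stripe_{\ef,\ell,u} \cap \epssample_\eps|}{m} \right| \leq \eps.
\]
Rearranging and substituting the bound from the previous step yields
\[
|\result| = |\Stripe_{\ef,\ell,u} \cap \data| \leq n\left(\frac{2\eps m + O(1)}{m} + \eps\right) = 3\eps n + O(n/m).
\]
Since $m = \Omega(1/\eps^2 \cdot \log(1/\eps))$ by Theorem~\ref{thm:epssample}, the additive term $O(n/m)$ is dominated by $\eps n$, giving $|\result| = O(\eps n)$.

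The main obstacle here is really just the bookkeeping: making sure the rounding in the floor/ceiling of $i_\ell$ and $i_u$ contributes only a lower-order additive term, and confirming that stripe ranges form a range space with bounded VC-dimension so that the $\eps$-sample guarantee transfers from half-space ranges (used in Lemma~\ref{lm:boundaries}) to stripe ranges (used here). Both are standard; the former is absorbed into $O(1)$, and the latter follows from the fact that stripes are intersections of two half-spaces, whose VC-dimension grows only by a constant factor.
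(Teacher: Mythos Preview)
Your proposal is correct and follows essentially the same approach as the paper: both apply the $\eps$-sample guarantee to the stripe $\Stripe_{\ef,\ell,u}$, bound $|\Stripe_{\ef,\ell,u}\cap\epssample_\eps|$ by $2\eps m + O(1)$ using Equation~\ref{eq:klku}, and rearrange to obtain $|\result|=O(\eps n)$. Your handling of the $O(n/m)$ residual and the remark that stripes have bounded VC-dimension are slightly more explicit than the paper's version, but the argument is the same.
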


% \subsection{Proof of Lemma~\ref{lm:output-size}}
\begin{proof}
    The result set \( \result \) consists of the points in the dataset \( \data \) that lie between the \( i_\ell \)-th and \( i_u \)-th ranked points in the sample \( \epssample_\eps \). Define the corresponding score interval as the stripe range:
    \[
        S = \{x \in \Reals^d \mid \ef^\top q_\ell \leq \ef^\top x \leq \ef^\top q_u\}.
    \]
    Then, \( |\result| = |S \cap \data| \). Since \( \epssample_\eps \) is an \(\eps\)-sample of \( \data \) for stripe ranges, we have (with high probability):
    \(
        \left| \frac{|S \cap \epssample_\eps|}{|\epssample_\eps|} - \frac{|\result|}{|\data|} \right| \leq \eps.
    \)
    From Equation~\ref{eq:klku}, we know that \( |S \cap \epssample_\eps| = |i_u - i_\ell| \leq 2m\eps + 2 \). Hence:
    \begin{align*}
        \left| \frac{|S \cap \epssample_\eps|}{m} - \frac{|\result|}{n} \right| \leq \eps 
        \Longrightarrow \frac{|\result|}{n} \leq \frac{2m\eps + 2}{m} + \eps 
        % \\
        \Longrightarrow |\result| = O(\eps n).
    \end{align*}
\end{proof}

In 2D, the size of the \(\eps\)-sample satisfies (see Section~\ref{sec:background}):
\(
    m = O\left( \frac{1}{\eps^2} \log \frac{1}{\eps} \right).
\)
% \paragraph{Total Runtime.}  
Combining all steps, the total runtime is:
% \begin{small}
% \vspace{-2mm}
\[
    T(\textsc{Eps2D}) = O\left(m + \sqrt{n} + |\result| \log |\result|\right).
\]
% \vspace{-1mm}
% \end{small}
Using \( |\result| = O(\eps n) \) and choosing \( \eps = n^{-1/3} \) (which minimizes the overall runtime), we get:
% \begin{small}
% \vspace{-2mm}
\[
    T(\textsc{Eps2D}) = O\left(n^{2/3} \log n\right).
\]
% \vspace{-3mm}
% \end{small}
% \vspace{-4mm}
\vspace{-2mm}
\section{Missing Proofs}\label{sec:app:proofs}
\subsection{Proof of Lemma~\ref{lm:boundaries}}
\begin{proof}
    Let \(\mathcal{H}\) be the half-space defined by:
    $
        \mathcal{H} = \{x \in \Reals^d \mid \ef^\top x \geq \ef^\top \point^{(i)}\}.
    $
    This half-space is perpendicular to the direction vector \(\ef\) and contains exactly the top \(i\) points of \(\data\) ranked by \(\ef\), including \(\point^{(i)}\). In other words,
    $
        |\mathcal{H} \cap \data| = i.
    $
    
    Since \(\epssample_\eps\) is an \(\eps\)-sample of \(\data\) for half-space (and slab) ranges, it preserves the proportion of points in any half-space up to additive error \(\eps\). Therefore, 
    % \begin{small}
    \[
        \left| \frac{|\mathcal{H} \cap \epssample_\eps|}{|\epssample_\eps|} - \frac{|\mathcal{H} \cap \data|}{|\data|} \right| \leq \eps.
    \]
    % \end{small}
    Substituting \(|\mathcal{H} \cap \data| = i\), and denoting \(n = |\data|\) and \(m = |\epssample_\eps|\), we obtain:
    % \begin{small}
    \[
        \left| \frac{|\mathcal{H} \cap \epssample_\eps|}{m} - \frac{i}{n} \right| \leq \eps,
    \]
    % \end{small}
    which implies:
    % \begin{small}
    \[
        m\left( \frac{i}{n} - \eps \right) \leq |\mathcal{H} \cap \epssample_\eps| \leq m\left( \frac{i}{n} + \eps \right).
    \]
    % \end{small}
    Rounding to integers, define:
    % \begin{small}
    \[
        i_\ell = \left\lfloor m\left( \frac{i}{n} - \eps \right) \right\rfloor, \quad
        i_u = \left\lceil m\left( \frac{i}{n} + \eps \right) \right\rceil.
    \]
    % \end{small}
    These bounds indicate that the number of points in \(\epssample_\eps\) with score at least \(\score_\ef(\point^{(i)})\) lies between \(i_\ell\) and \(i_u\). Equivalently, the score \(\score_\ef(\point^{(i)})\) lies between the scores of the \(i_u\)-th and \(i_\ell\)-th ranked points in \(\epssample_\eps\). 
    We conclude that:
    % \begin{small}
    \[
        \score_\ef(q_\ell) \leq \score_\ef(\point^{(i)}) \leq \score_\ef(q_u),
    \]
    % \end{small}
    as desired.
\end{proof}
% \vspace{-3mm}
\section{Dynamic Setting}\label{sec:app:dynamic}
In this section, we discuss the dynamic setting. We consider two types of dynamic operations: \texttt{Insert(p)} and \texttt{Delete(p)}. For the \textsc{KthLevel} algorithm, these operations are supported following the literature on dynamically maintaining levels of arrangements~\cite{chan2020dynamic, agarwal2019dynamic}. Below, we focus on the \textsc{EpsHier} algorithm; analogous results can be shown for \textsc{EpsRange}.

\stitle{\texttt{Insert(p)}}
This operation adds a new point $p$ to the dataset $\data$.  
For \textsc{EpsHier}, we must update both the $\eps$-sample $\epssample_\eps$ and the hierarchical sampling structure described in Section~\ref{sec:hierarchical}.  
Following Theorem~\ref{thm:epssample}, we want to maintain an unbiased random subset of size $|\epssample_\eps|$ of $\data$ as an $\eps$-sample.
% According to Theorem~\ref{thm:epssample}, a random subset of $\data$ of size $|\epssample_\eps|$ is an $\eps$-sample with high probability. 
Our goal is to maintain such a random subset without resampling the entire dataset.  
This can be efficiently achieved using \emph{reservoir sampling}~\cite{vitter1985random}, which allows us to maintain a uniform random sample as new elements arrive.  
To update the hierarchical structure used for the \srs problem, we apply this reservoir sampling procedure bottom-up, from the base layer to higher layers. At each layer, we decide whether to include the new point, as each layer is a random sample of the one below it. This process takes $O(\log n)$ time since the index contains $\log n$ layers.

\stitle{\texttt{Delete(p)}}
This operation removes $p$ from $\data$.  
Under the \emph{backing sample} method~\cite{gibbons2002fast}, if $p \in \epssample_\eps$, we delete it from the sample.  
If the sample size drops below a predefined lower bound $L$, we \emph{rebuild} $\epssample_\eps$ by resampling all the points. The amortized time for this operation (based on the definition of $L$) is $O(1)$~\cite{gibbons2002fast}.
For the hierarchical structure, each layer $\ell$ is maintained similarly, with its own bound $L_\ell$, and each layer $\ell$ is the sample maintained at level $\ell\!-\!1$.  
Hence, a rebuild at level $\ell$ only scans the immediately lower level, not the entire dataset, keeping the amortized cost low. In other words, this rescanning at layer $\ell$ takes $O(|\mathcal{L}_{\ell - 1}|) = O(\frac{n}{2^{\ell-1}})$, but the amortized is still $O(1)$ for this layer and it takes $O(\log n)$ to resample all the top layers. As a result,
the total amortized time for this operation is $O(\log n)$ for the hierarchical structure.

% \vspace{-3mm}
\section{Analysis of the \textsc{EpsRange} Algorithm}\label{sec:app:epsrange:analysis}
The analysis is similar to \textsc{Eps2D}, however, the time complexity of the algorithm depends on the specific range counting and simplex range searching we use at steps ii and iii.

Step i takes $O(m)$ for finding the thresholds in the $\eps$-sample. Step ii requires solving the \srs~ problem. Following the simplex range searching algorithms, we can solve this problem in $O(n^{1 - 1/d} + |\result|)$ with $O(n)$ space~\cite{matousek1992reporting}.
To report the exact response to \ith, i.e., $\point^{(i)}$, we proceed to steps iii and iv. In step iii, we solve an instance of range counting in $d$ dimensions, which takes $O(n^{1 - 1/d})$~\cite{matousek1992reporting} time and linear space. Finally, in step iv, we sort $\result$, which takes $O(|\result| \log |\result|)$.

\stitle{Space Complexity} The space complexity of this algorithm is linear to the input size, since all structures require linear space.

\stitle{Time Complexity for the $\kappa$-\ith\ problem}
The total query time of the algorithm for finding the conformal set is:
% \begin{small}
% \vspace{-1mm}
\[
    T(\textsc{EpsRange}_{\textit{$\kappa$-\ith}}) = O(m + n^{1 - 1/d})
\]
% \vspace{-1mm}
% \end{small}
The size of $\eps$-sample $m$ is $O(\frac{d}{\eps^2}\log \frac{d}{\eps})$, since the VC-dimension of stripe range is $\delta=d+1$~\cite{har2011geometric, haussler1986epsilon}. This gives us the following query time: 
% \begin{small}
% \vspace{-1mm}
\[
    T(\textsc{EpsRange}_{\textit{$\kappa$-\ith}}) = O(\max\{n^{1 - 1/d}, \frac{d}{\eps^2}\log \frac{d}{\eps}\}) = \Omega(n^{1 - 1/d}).
\]
% \vspace{-1mm}
% \end{small}
The output is a set $\result$ with size $O(\eps n)$, guaranteed to contain $\point^{(i)}$ (see Lemma~\ref{lm:output-size}). Note that the final time also depends on the choice of parameter $\eps$, which determines the size of output $\result$.

\stitle{Time Complexity for the \ith problem}
The exact version also includes the steps iii and iv:
\[
    T(\textsc{EpsRange}_{Exact}) = O(m + n^{1 - 1/d} + |\result| \log |\result|).
\]
Based on Lemma~\ref{lm:output-size}, we have $|\result| = O(\eps n)$. As a result, by choosing $\eps = n^{-1/3}$, we get the following runtime:
\[
    T(\textsc{EpsRange}_{Exact}) = O(\max\{n^{1-1/d}, dn^{2/3}\log n\})
\]
\begin{theorem}
    The \textsc{EpsRange} algorithm solves the relaxed version of \ith~problem with linear space usage and $O(\max\{n^{1-1/d}, \frac{d}{\eps^2}\log \frac{d}{\eps}\})$ time. The exact version can also be solved with the same space usage and $O(\max\{n^{1 - 1/d}, dn^{2/3}\log n\})$ time.
\end{theorem}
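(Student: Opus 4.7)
The plan is to derive both bounds by walking through the four steps of \textsc{EpsRange} and assembling their individual costs, using the tools developed in the preceding analysis of \textsc{Eps2D}.

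First I would verify correctness. For the conformal variant, I need that $\point^{(i)} \in \result$, which follows from Lemma~\ref{lm:boundaries}: its proof only invokes the $\eps$-sample property on half-spaces (and thus slabs), and since these range spaces have VC-dimension $O(d)$, the same argument transfers verbatim to general $d$. For the exact variant, correctness additionally requires Step iii: the half-space range count $|H_u|$ exactly determines the global rank offset, so the point of rank $i$ in $\data$ is the $(i - |H_u|)$-th entry of $\result$ when sorted by $\score_\ef$.

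Next I would bound space. Each ingredient uses $O(n)$ space: the $\eps$-sample fits in $O(m) \subseteq O(n)$, and Matou\v{s}ek's partition tree supports both simplex range reporting and half-space range counting in linear space~\cite{matouvsek1991efficient, matousek1992reporting}. Step iv sorts in place. So the overall space is $O(n)$.

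For the time bounds I would sum the per-step costs. Step i runs linear-time selection on $\epssample_\eps$ in $O(m)$. Step ii solves an \srs~instance via simplex range reporting in $O(n^{1-1/d} + |\result|)$. Step iii is a half-space range count in $O(n^{1-1/d})$, and Step iv sorts $\result$ in $O(|\result|\log|\result|)$. Theorem~\ref{thm:epssample} with VC-dimension $\delta = d+1$ gives $m = O\!\left(\frac{d}{\eps^2}\log\frac{d}{\eps}\right)$, and Lemma~\ref{lm:output-size} gives $|\result| = O(\eps n)$. For the conformal variant the total is $O(m + n^{1-1/d} + \eps n)$, which collapses to the claimed $O(\max\{n^{1-1/d}, \frac{d}{\eps^2}\log\frac{d}{\eps}\})$ in the intended regime where the reporting term $\eps n$ is absorbed into $n^{1-1/d}$. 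For the exact variant the total becomes $O(m + n^{1-1/d} + \eps n \log n)$; setting $\eps = n^{-1/3}$ balances the sorting term against the sample-size term and yields $\eps n \log n = n^{2/3}\log n$ together with $m = O(dn^{2/3}\log n)$, delivering $O(\max\{n^{1-1/d}, dn^{2/3}\log n\})$. The main obstacle I anticipate is tracking when the output term $|\result|$ stops being dominated by $n^{1-1/d}$; I would handle it by treating the theorem as parameterized by $\eps$ (which the user selects for the conformal variant and which is pinned to $n^{-1/3}$ for the exact variant), and verifying that within this operating range the bookkeeping collapses cleanly into the stated maxima.
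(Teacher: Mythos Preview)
Your proposal is correct and follows essentially the same route as the paper: a step-by-step accounting of the four phases, plugging in $m=O(\frac{d}{\eps^2}\log\frac{d}{\eps})$ from the $\eps$-sample theorem with VC-dimension $d+1$, $|\result|=O(\eps n)$ from Lemma~\ref{lm:output-size}, and the choice $\eps=n^{-1/3}$ for the exact variant. If anything you are slightly more careful than the paper, which silently drops the $|\result|$ reporting term from the \conformal bound while you explicitly flag that it must be absorbed into $n^{1-1/d}$ in the intended parameter regime.
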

\vspace{-3mm}
\section{Analysis of Hierarchical Sampling}\label{sec:app:hierarchy}
The total space usage for storing the hierarchical layers is linear to the input, since the size of each layer decreases exponentially:
\begin{small}
% \vspace{-1mm}
\[
    \sum_{\ell \leq L} |\mathcal{L}_\ell| = \sum_{\ell \leq L} \frac{n}{r^\ell} = O(n),
\]
% \vspace{-1mm}
\end{small}
assuming $r = O(1)$.
The runtime of Algorithm~\ref{alg:hierarchical-preprocess} is dominated by the \textbf{for} loop at line 6, where the layers are constructed, and edges and area sets are updated. At each layer $\ell$, we sample $|\mathcal{L}_\ell|$ points. Then, for each point in $\mathcal{L}_{\ell - 1}$, we compute its nearest neighbor in $\mathcal{L}_\ell$. The time complexity at the layer $\ell$ is:
\begin{small}\[
    T_\ell = O\left(d\cdot |\mathcal{L}_\ell| \cdot |\mathcal{L}_{\ell - 1}|\right).
\]
\end{small}
Summing over all layers, the total runtime becomes:
\begin{small}
\[
    \sum_{\ell} T_\ell = \sum_{\ell} d\cdot \frac{n}{r^\ell} \cdot \frac{n}{r^{\ell - 1}} 
    = \sum_{\ell} \frac{dn^2}{r^{2\ell - 1}} = O(dn^2),
\]
\end{small}
again assuming $r = O(1)$.
For computing the smallest enclosing ball of each area, we use Welzl's algorithm, which runs in linear time with respect to the number of points~\cite{welzl2005smallest}.

\begin{theorem}
    The preprocessing phase of the hierarchical sampling algorithm takes $O(dn^2)$ time and uses linear space.
\end{theorem}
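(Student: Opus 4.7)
The plan is to bound the space usage and the running time separately, decomposing each cost across the three stored components of the hierarchy: the layer point sets $\mathcal{L}_\ell$, the neighbor/edge lists $N_\ell$, and the enclosing balls $\Ball_\ell$. For space, the key observation is that $|\mathcal{L}_\ell| = n/r^\ell$ forms a geometric sequence, so $\sum_\ell |\mathcal{L}_\ell| = O(n)$ whenever $r > 1$ is a constant. Since each point at level $\ell - 1$ contributes exactly one outgoing edge to its nearest centroid in $\mathcal{L}_\ell$, the edge storage telescopes to the same bound. Each enclosing ball is represented by $O(d)$ numbers (center plus radius), giving $O(dn)$ in total, which is linear in the input size $nd$.

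For the running time, I would walk through Algorithm~\ref{alg:hierarchical-preprocess} layer by layer. The dominant cost at layer $\ell$ is the nearest-neighbor assignment loop: for each of the $|\mathcal{L}_{\ell - 1}|$ points we perform a brute-force scan over $|\mathcal{L}_\ell|$ candidates at $O(d)$ cost per comparison, for a per-layer cost of $O(d \cdot |\mathcal{L}_\ell|\cdot |\mathcal{L}_{\ell - 1}|)$. Substituting $|\mathcal{L}_\ell| = n/r^\ell$ yields a geometric series whose $\ell = 1$ term dominates, so the sum collapses to $O(dn^2)$ as required. The area-propagation step is cheaper because each base point belongs to the area of exactly one ancestor per layer, giving $O(n \log_r n)$ total updates, which is absorbed.

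The remaining ingredient, and the place where I expect to have to argue most carefully, is the enclosing-ball construction. I would invoke Welzl's algorithm, whose expected running time on a set $S$ is $O(d|S|)$. A naive per-node bound risks inflating the total beyond $O(dn^2)$, so the crux is an amortization argument: since each base point lies in the area of at most $L + 1 = O(\log_r n)$ ancestors, the summed input sizes across all ball computations are $O(n \log n)$, giving an aggregate cost of $O(dn \log n)$ that is comfortably subsumed by the $O(dn^2)$ cost of the nearest-neighbor step. Combining the three contributions yields the claimed $O(dn^2)$ preprocessing time and $O(nd)$ linear space.
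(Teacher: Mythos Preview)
Your proposal is correct and follows essentially the same approach as the paper: the geometric-series bound on $\sum_\ell |\mathcal{L}_\ell|$ for space, the per-layer nearest-neighbor cost $O(d\,|\mathcal{L}_\ell|\,|\mathcal{L}_{\ell-1}|)$ summed to $O(dn^2)$ for time, and Welzl's algorithm for the enclosing balls. If anything, your treatment is more careful than the paper's---you explicitly account for edge storage, ball storage, the area-propagation updates, and you amortize the total input size to Welzl across all nodes via the ``each base point has $O(\log_r n)$ ancestors'' argument, whereas the paper simply asserts that Welzl runs in linear time per call without summing.
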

\section{Additional Pseudo-codes}\label{sec:app:pseudo}
\begin{algorithm}[H]
\caption{\textsc{Eps2D-Query}$(\epssample_\eps, \data, \ef, i)$}
\label{alg:eps2d-query}
\begin{algorithmic}[1]
\Require Precomputed $\eps$-sample $\epssample_\eps$ of size $m$, original dataset $\data$, scoring vector $\ef$, target rank $i$
\Ensure The exact $k$-th ranked point $\point^{(i)}$ in $\data$ under $\score_\ef$

\State Calculate $i_\ell$ and $i_u$\Comment{equation~\ref{eq:klku}}

% \State Find $i_\ell$-th and $i_u$-th point in $\epssample_\eps$ according by $\score_\ef$

\State $q_\ell \gets \rank^{-1}_{\epssample_\eps, \ef}(i_\ell)$,\quad
$q_u \gets \rank^{-1}_{\epssample_\eps, \ef}(i_u)$\Comment{Median of medians algorithm}

\State $\ell \gets \score_\ef(q_\ell)$,\quad
$u \gets \score_\ef(q_u)$\Comment{end of step i}

% \State $\Stripe_{\ef, \ell, u} \gets \{ \point \in \Reals^2 \mid \ell \leq \ef^\top \point \leq u \}$

\State $\result \gets \Stripe_{\ef, \ell, u} \cap \data$ \Comment{via \textsc{SRS} query; step ii}

\State $|H_u| \gets \left| \{ \point \in \data \mid \ef^\top \point \geq u \} \right|$ \Comment{via range counting; step iii}

\State Sort $\result$ ascending by $\score_\ef$

\State \Return the $(i - |H_u|)$-th point in the sorted list\Comment{step iv}

\end{algorithmic}
\end{algorithm}

\begin{algorithm}[H]
\caption{\textsc{Hierarchical-Sampling-Preprocess}$(\data, r)$}
\label{alg:hierarchical-preprocess}
\begin{algorithmic}[1]
\Require The dataset $\data$ and the exponential decay rate $r$
\Ensure The hierarchical graph $\mathcal{G}(\data)$ which contains a list of layers $\mathcal{L}$, neighboring relations $N$, and enclosing balls of nodes $\mathcal{B}$.

\State $L \gets \lfloor\log_r n\rfloor$\Comment{Number of layers}

\State $\mathcal{L}_\ell \gets [~],\quad\forall \ell \leq L$\Comment{Placeholder for the layers}

\State $\mathcal{L}_0 \gets \data$\Comment{Base layer}

\State $\mathcal{A}_0(p) = \{\point\},\quad\forall p\in \mathcal{L}_0$\Comment{Area of nodes}

\State $N_0(p) \gets \varnothing,\quad\forall p\in\mathcal{L}_0$\Comment{Neighbors}

\For{layer $\ell = 1$ to $L$}
    \State $\mathcal{L}_{\ell} \gets \textit{Random sample of size $\frac{|\mathcal{L}_{\ell - 1}|}{r}$ from $\mathcal{L}_{\ell-1}$}$

    \State $\mathcal{A}_\ell(p)\gets \varnothing,\quad\forall p\in\mathcal{L}_\ell$
    \State $N_\ell(p) \gets \varnothing,\quad\forall p\in\mathcal{L}_\ell$

    \For{point $\point \in \mathcal{L}_{\ell-1}$}
        \State $c^* \gets \argmin_{c \in \mathcal{L}_{\ell}} Dist(p, c)$\Comment{Equation~\ref{eq:edges}}

        \State $N_\ell(c^*) \gets N_\ell(c^*) \cup \{p\}$\Comment{Add neighbor}

        \State $\mathcal{A}_{\ell}(c^*) \gets \mathcal{A}_{\ell}(c^*) \cup \mathcal{A}_{\ell - 1}(\point)$\Comment{Update area}
    \EndFor
    
\EndFor

\State $\Ball_\ell(p) \gets \mathsf{EnclosedBall}(\mathcal{A}_\ell(p)),\quad\forall \ell \leq L, p \in \mathcal{L}_\ell$

\State $\mathcal{L} \gets \{\mathcal{L}_\ell \mid \ell \leq L\}$
\State $N \gets \{N_\ell(p) \mid \forall \ell \leq L, p\in \mathcal{L}_\ell\}$
\State $\mathcal{B} \gets \{\Ball_\ell(p) \mid \forall \ell \leq L, p \in \mathcal{L}_\ell\}$

\State \Return $(\mathcal{L}, N, \mathcal{B})$\Comment{layers, edges, and enclosing balls}

\end{algorithmic}
\end{algorithm}

\begin{algorithm}[H]
\caption{\textsc{Hierarchical-Sampling-Query}$(\mathcal{G}(\data), \Stripe_{\ef, \ell, u})$}
\label{alg:hierarchical-query}
\begin{algorithmic}[1]
\Require The preprocessed graph $\mathcal{G}(\data)$ and the query stripe range $\Stripe_{\ef, \ell, u}$.
\Ensure The set of points $\mathcal{D}_o = \Stripe_{\ef, \ell, u} \cap \data$.

\State $\mathcal{D}_o \gets \varnothing$\Comment{Placeholder for result}

\State $candidates \gets \mathcal{L}_L$

\For{$curr\_layer$ from $L$ to $1$}\Comment{top-down}
    \State $tmp \gets \varnothing$\Comment{Placeholder for new candidates}
    
    \For{each point $\point \in Candidates$}
        \If{$\ball_{curr\_layer}(\point) \cap \Stripe_{\ef, \ell, u} \neq \varnothing$}
            \State $tmp \gets tmp \cup N_{curr\_layer}(\point)$\Comment{Explore the area}
        \EndIf
    \EndFor
    
    \State $candidates \gets tmp$
\EndFor

\For{$\point \in candidates$}
    \If{$\point \in \Stripe_{\ef, \ell, u}$}
        \State $\mathcal{D}_o \gets \mathcal{D}_o \cup \{\point\}$
    \EndIf
\EndFor

\State \Return $\mathcal{D}_o$

\end{algorithmic}
\end{algorithm}

% \vspace{-4mm}
\clearpage
\section{Lower Bounds and Optimality of Our Algorithms}\label{sec:app:opt}

In this section, we analyze the complexity of the \ith~problem and show the near-optimality of our proposed algorithms. Specifically, the well-studied \emph{half-space range counting} problem~\cite{chazelle1989lower, agarwal2017range} reduces to our problem. Hence, the complexity of the state-of-the-art algorithms for half-space range counting provides a lower bound for the \ith~problem.

% we prove that solving the \ith~problem efficiently implies solving the well-studied \emph{half-space range counting} problem, which is known to be computationally hard in high dimensions~\cite{chazelle1989lower, agarwal2017range}.
% \suraj{It is easy for someone to misunderstand harness as NP hardness. Instead, it would be better to be explicit that it is ``\emph{hard}" to do this in better than linear amount of time.}

The half-space range counting problem is defined as follows: given a point set in \(\Reals^d\) and a query half-space \(H\), determine the number of points lying within \(H\).

\begin{lemma}\label{lm:reduction}
Let \(\mathcal{A}\) be an algorithm that solves the \ith~problem with query time \(T(\mathcal{A})\) and space complexity \(S(\mathcal{A})\). Then, using $\mathcal{A}$, one can solve an instance of the half-space range counting problem using space \(S(\mathcal{A})\) and query time $\Theta(\log n \cdot T(\mathcal{A}))$.
% \hfill\qed
\end{lemma}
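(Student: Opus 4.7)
The plan is to reduce half-space range counting to the \ith{} problem via binary search on the rank. Given a query half-space $H = \{x \in \Reals^d \mid \ef^\top x \geq c\}$ (we may assume $\|\ef\|=1$ after normalization, absorbing the scalar into $c$), the goal is to compute $|H \cap \data|$. The key observation is that if we rank $\data$ in descending order according to the scoring function $\ef$, then $H \cap \data$ consists exactly of the top-ranked prefix of points whose score is at least $c$. Thus, $|H \cap \data|$ is precisely the largest index $i$ such that $\score_\ef(\point^{(i)}) \geq c$, or zero if no such point exists.

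\textbf{Construction.} First, at preprocessing time, build the data structure of $\mathcal{A}$ on $\data$, which uses $S(\mathcal{A})$ space. At query time, given $H$, extract $\ef$ and $c$ as above and then perform a binary search over $i \in \{1, 2, \ldots, n\}$. At each step of the binary search, invoke $\mathcal{A}$ with inputs $(\ef, i)$ to obtain $\point^{(i)}$ in time $T(\mathcal{A})$, compute $\score_\ef(\point^{(i)})$ in $O(d)$ time, and compare it against $c$: if $\score_\ef(\point^{(i)}) \geq c$, recurse on the upper half $\{i+1, \ldots\}$; otherwise recurse on the lower half. Return the largest index passing the test (or $0$ if none does). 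Correctness follows because $\score_\ef(\point^{(1)}) \geq \score_\ef(\point^{(2)}) \geq \cdots \geq \score_\ef(\point^{(n)})$, so the predicate ``$\score_\ef(\point^{(i)}) \geq c$'' is monotone in $i$, which is precisely the setting where binary search is valid.

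\textbf{Complexity.} The space remains $S(\mathcal{A})$, since we only reuse the index built by $\mathcal{A}$. The binary search performs $O(\log n)$ iterations, each dominated by one call to $\mathcal{A}$ (the score comparison is lower-order assuming $d$ is treated as a constant, consistent with the paper's convention), giving total query time $O(\log n \cdot T(\mathcal{A}))$. Conversely, $\Omega(\log n \cdot T(\mathcal{A}))$ is trivial as a lower bound on this specific reduction, yielding the claimed $\Theta(\log n \cdot T(\mathcal{A}))$.

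\textbf{Main Obstacle.} There is no deep obstacle here; the reduction is essentially a monotone binary search and the argument is short. The only subtlety to handle cleanly is the edge cases where $c$ lies outside the range of scores (so the answer is $0$ or $n$), and the normalization of $\ef$ so that it satisfies the unit-norm assumption imposed on scoring functions in Section~\ref{sec:def}; both are handled by adjusting $c$ accordingly and adding a sentinel check before the search.
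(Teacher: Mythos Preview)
Your proposal is correct and follows essentially the same approach as the paper: both reduce half-space range counting to \ith{} via a binary search on the rank $i$, using the monotonicity of the predicate ``$\point^{(i)} \in H$'' (equivalently, ``$\score_\ef(\point^{(i)}) \ge c$''). Your write-up is slightly more careful about edge cases and the normalization of $\ef$, but the argument is identical in substance.
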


\begin{proof}
To count the number of points within a given half-space \(H\), we can perform a binary search over \(k \in \{1, 2, \dots, n\}\). In each iteration, we query \(\mathcal{A}\) to retrieve the \(k\)-th ranked point \(\point^{(k)}\). If \(\point^{(k)} \in H\), then the count of points in \(H\) is at least \(k\); otherwise, it is less. This process requires \(\Theta(\log n)\) queries to \(\mathcal{A}\), resulting in total query time \(\Theta(\log n \cdot T(\mathcal{A}))\).
\end{proof}

It is known that the query time for half-space range counting in \(\Reals^d\) with \(m\) units of space satisfies the following lower bound~\cite{chazelle1989lower}:
\begin{align}\label{eq:lowerbound}
    \Omega\left(\frac{n}{m^{1/d} \log n}\right).
\end{align}

Applying Lemma~\ref{lm:reduction}, this implies a lower bound for the \ith~problem:
\[
T(\mathcal{A}) \cdot \log n \geq \frac{n}{m^{1/d} \log n}
\quad \Longrightarrow \quad
T(\mathcal{A}) \geq \frac{n}{m^{1/d} \log^2 n}.
\]

Now, these are the two important observations:

\paragraph{Observation 1.} If the algorithm uses linear space, i.e., \(m = O(n)\), then the query time cannot be faster than \(O(n^{1 - \frac{1}{d}})\) up to logarithmic factors.

\paragraph{Observation 2.} If the algorithm uses exponential space, i.e., \(m = 2^{\Theta(d)}\), then the lower bound becomes \(\Omega(\log n)\). In fact, our exact algorithm \textsc{KthLevel} has logarithmic query time but requires exponential space, which matches this bound. Note that querying \textsc{KthLevel} \(n\) times is equivalent to sorting all \(n\) elements, which takes \(\Omega(n \log n)\) time.

% \vspace{0.5em}
% \newpage
% \noindent
\textbf{Conclusion.} These results demonstrate that our algorithms for the \ith~problem are optimal up to logarithmic factors, given the space-time trade-offs imposed by known lower bounds.  

\section{Extended Experiments}\label{sec:app:exp}
\subsection{Dataset Details}\label{sec:app:exp:datasets}
We use both synthetic and real-world datasets for our evaluation. The synthetic datasets are generated using the Zipfian distribution, with variations in dimensionality, dataset size, and distribution parameters. The Zipfian distribution is particularly challenging for range searching due to its inherent skewness and is commonly observed in many real-world scenarios. It allows us to evaluate the algorithms under non-uniform workloads.

For our real-world experiments, we use three datasets: the \textsf{US Used Cars} dataset~\cite{mital2020uscars}, the \textsf{FIFA 2023} dataset~\cite{leone2022fifa23}, and the \textsf{US Flights} dataset~\cite{usdot_flight_delays}. The \textsf{US Used Cars} dataset contains approximately 3 million entries with 66 attributes describing vehicle specifications, market prices, and regional features, making it suitable for evaluating ranking and filtering performance over high-dimensional numeric attributes. The \textsf{FIFA 2023} dataset comprises around 300{,}000 records with 54 player and team attributes, including overall ratings, skills, and physical characteristics, and provides a structured benchmark for similarity and top-$k$ retrieval tasks. Finally, the \textsf{US Flights} dataset contains over 5 million flight records with 31 attributes collected from the U.S. Department of Transportation, including flight times, delays, cancellations, and carrier information, offering a real-world testbed with temporal and categorical dimensions for large-scale ranked retrieval evaluation.

\paragraph{Scoring Functions}
In our experiments, we generate a uniformly random unit vector $f$ as the ranking function on all the above datasets.

\stitle{Similarity Search Datasets}
For the similarity search task, we evaluate our methods on five standard benchmark datasets from
\href{https://ann-benchmarks.com}{ann-benchmarks.com}, covering both vision and language domains:
\begin{itemize}
    \item {\textsc{SIFT-128-Euclidean}~\cite{jegou2010product}:}
    Contains 1M vectors in a 128-dimensional space with $\ell_2$ (Euclidean) distance, derived from local image descriptors and widely used as a canonical benchmark for large-scale visual similarity search.
    
    \item {\textsc{Fashion-MNIST-784-Euclidean}~\cite{xiao2017online}:}
    Contains 60K vectors in a 784-dimensional space with $\ell_2$ distance, obtained from grayscale images of fashion items and representing a moderate-scale, high-dimensional vision dataset.
    
    \item {\textsc{GIST-960-Euclidean}~\cite{jegou2010product}:}
    Contains 1M vectors in a 960-dimensional space with $\ell_2$ distance, based on global image descriptors that capture coarse scene-level structure, making it significantly higher-dimensional than SIFT.
    
    \item {\textsc{GloVe-100-Angular} and \textsc{GloVe-25-Angular}:}
    Contain 1M word embedding vectors in 100- and 25-dimensional spaces, respectively, using cosine similarity, and serve as standard benchmarks for semantic similarity search in natural language processing.
\end{itemize}

\subsection{More on Stripe Range Retrieval}\label{sec:app:exp:srs}

\stitle{Preprocessing (Indexing) Phase}
During the preprocessing phase in all the baselines, an index structure is constructed. Figure~\ref{fig:srs:prep} (left) shows the memory usage of these indices as a function of the dataset size $n$, while the right plot illustrates the corresponding preprocessing time. As shown, the space overhead of \textsf{Hierarchical Sampling} is comparable to that of \textsf{KD-Tree} and \textsf{R-Tree}, both of which are commonly used in existing systems. Moreover, the space usage grows linearly with the dataset size $n$. The time required to build the \textsf{Hierarchical Sampling} index is also on par with that of the \textsf{Partition Tree} and \textsf{R-Tree}.

\begin{figure}[ht]
    \centering
    \includegraphics[width=0.49\linewidth]{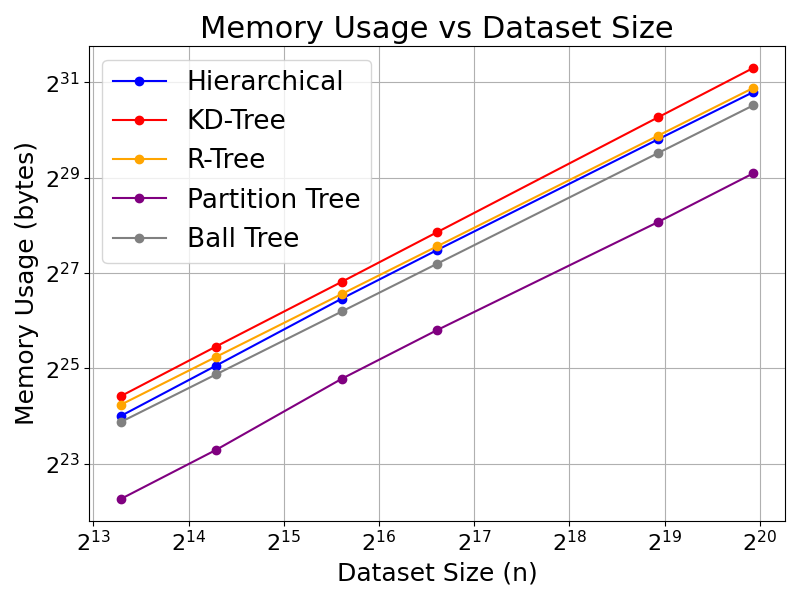}
    \includegraphics[width=0.49\linewidth]{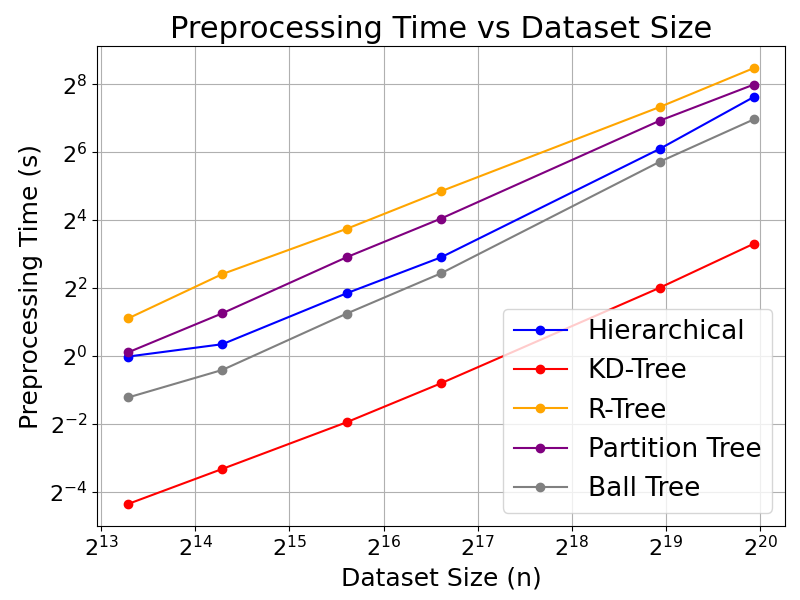}
    \vspace{-4mm}
    \caption{Comparison of \srs indexing time and size with respect to the dataset size for $d = 8$.}
    \label{fig:srs:prep}
    % \vspace{-5mm}
\end{figure}

\stitle{More Comparison Reports}
Figure~\ref{fig:srs:time-width-more} shows the query time of different methods, when the value of $d$ is $4$ and $32$.

\begin{figure}[ht]
    \centering
    \includegraphics[width=0.49\linewidth]{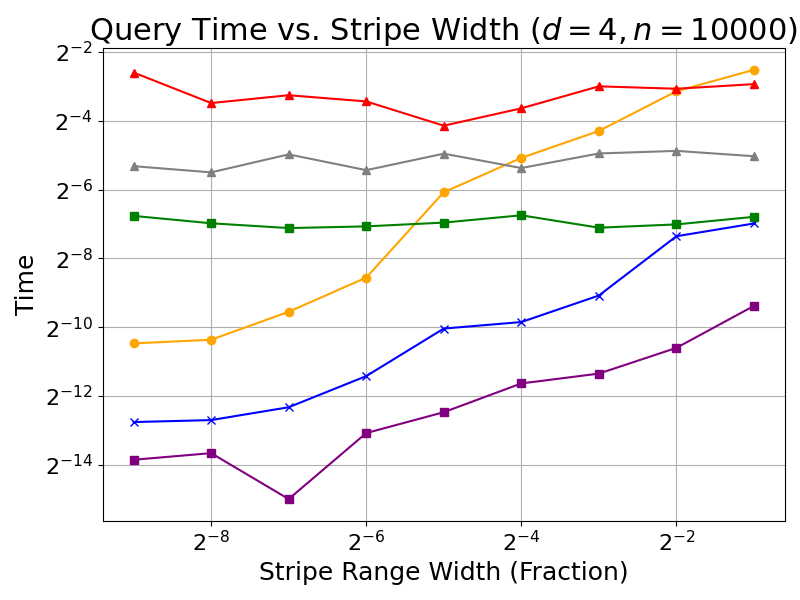}
    \includegraphics[width=0.49\linewidth]{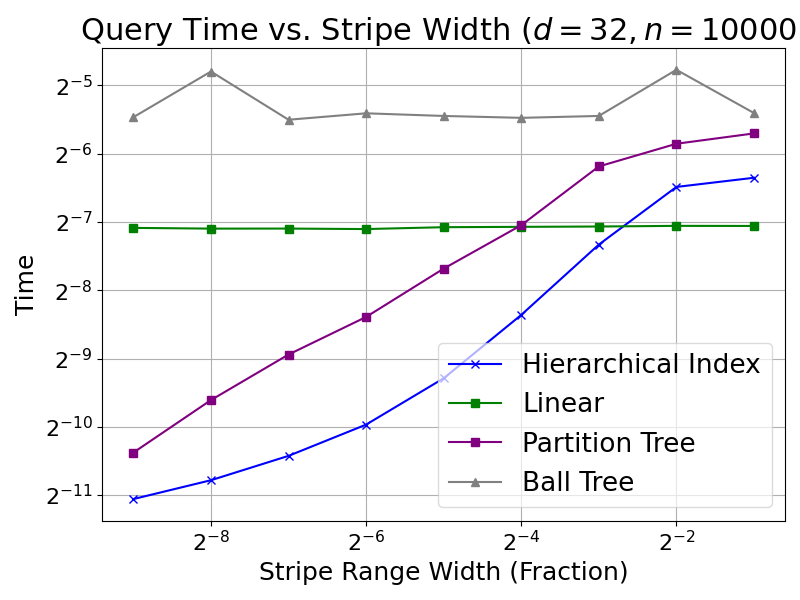}
    \caption{Comparison of \srs query time with respect to the stripe width across different dimensionalities on the synthetic data.}
    \label{fig:srs:time-width-more}
\end{figure}

Figure~\ref{fig:srs:real:more} shows the experiments for Stripe Range Searching on FIFA dataset.

\begin{figure}[H]
    \centering
    \includegraphics[width=0.6\linewidth]{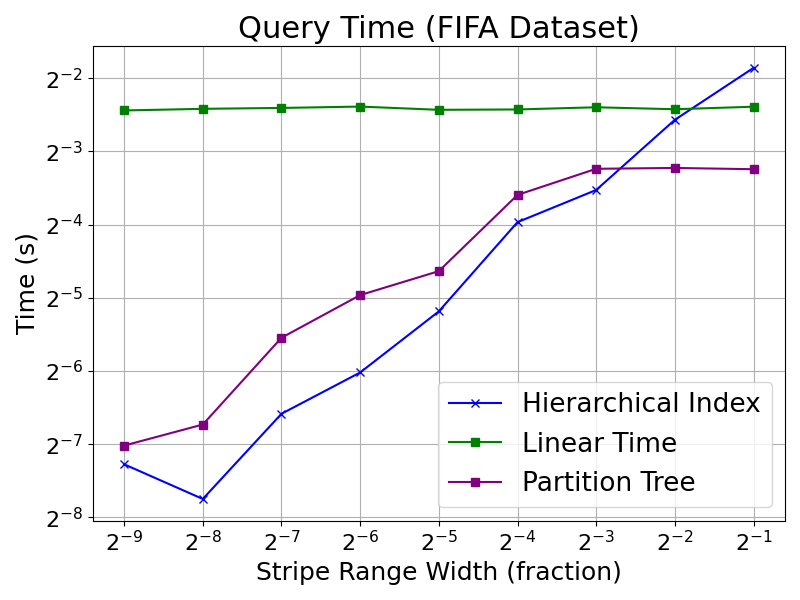}
    \caption{Comparison of the query time of Hierarchical Sampling on \srs\ problem vs the stripe range width. The scoring functions are sampled uniformly at random from a hypersphere.}
    \label{fig:srs:real:more}
\end{figure}

\subsection{More on Direct-Access and Conformal Set Retrieval}\label{sec:app:exp:rar}

\stitle{\textsf{KthLevel} Algorithm}
Figure~\ref{fig:dar:kthlevel} shows the result of running the \textsf{KthLevel} algorithm on smaller datasets in 2D. As shown in, the \textsf{KthLevel} algorithm achieves substantial speedups. However, this improvement comes at the cost of increased index size.

\begin{figure}[ht]
    \vspace{-2mm}
    \centering
    \includegraphics[width=0.49\linewidth]{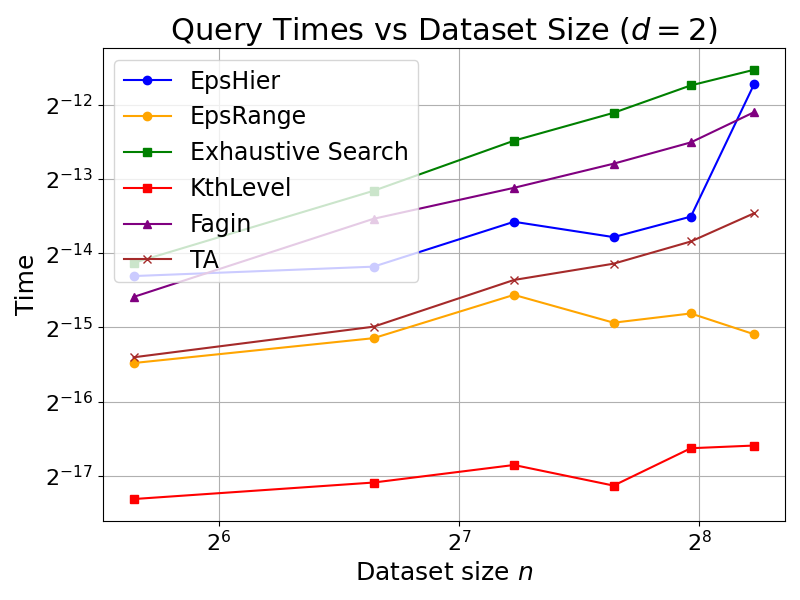}
    \includegraphics[width=0.49\linewidth]{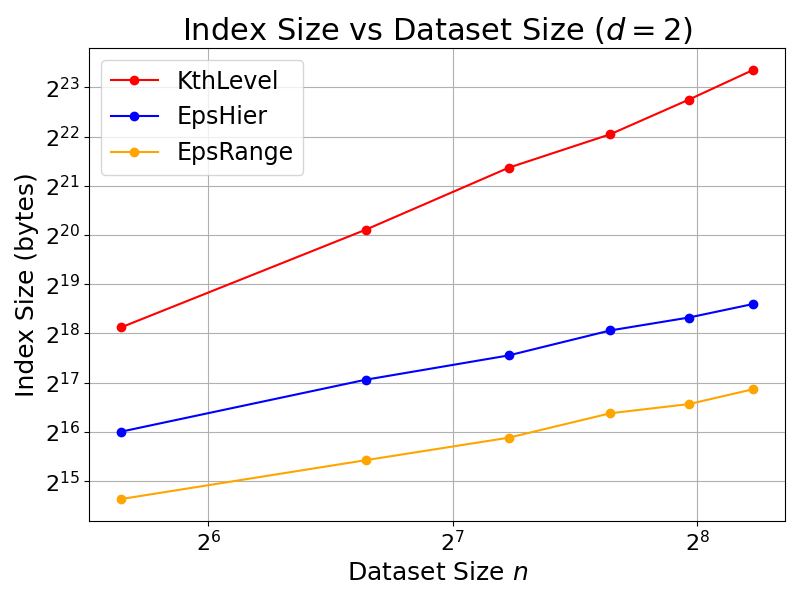}
    \vspace{-4mm}
    \caption{Result of applying \textsf{KthLevel} for solving \ith problem in 2D. Generally, \textsf{KthLevel} does not scale well with dimension $d$ and size $n$. Here, \textsf{EpsRange} in 2D is equivalent to \textsf{Eps2D} algorithm.}
    \label{fig:dar:kthlevel}\vspace{-4mm}
\end{figure}

\stitle{Other Real Datasets}
Figures~\ref{fig:dar:fifa} and~\ref{fig:dar:used-cars} compare the performance of the algorithms on the \textsf{US Used Cars} and \textsf{FIFA} datasets, where the parameters $\varepsilon$ and rank $i$ are varied across different runs.

\begin{figure}[H]
    \centering
    \includegraphics[width=0.49\linewidth]{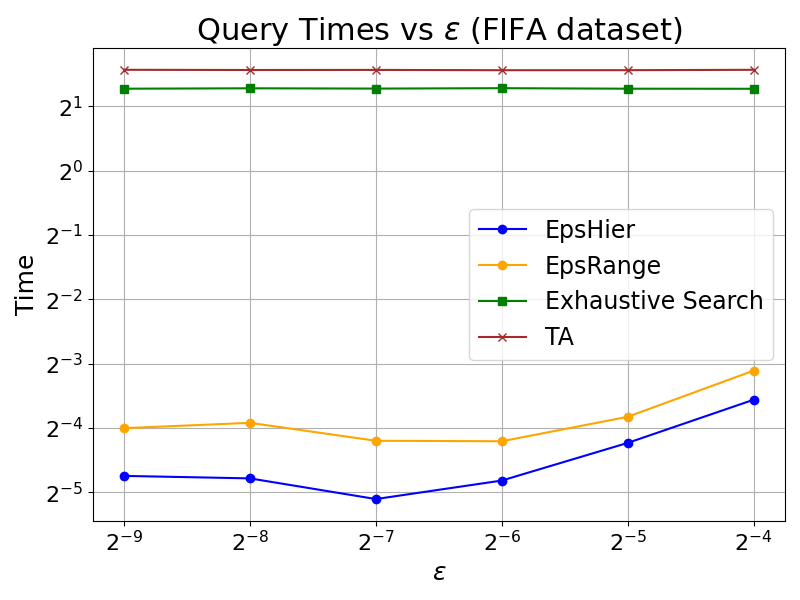}
    \includegraphics[width=0.49\linewidth]{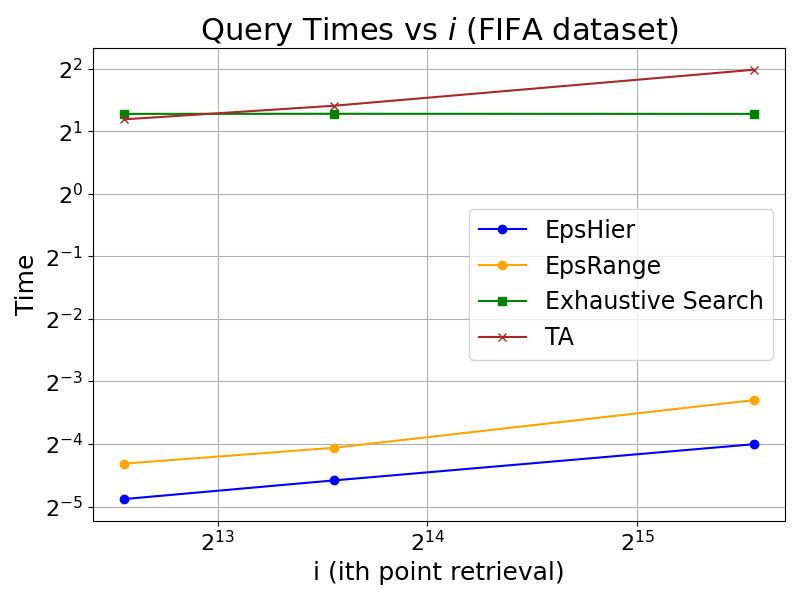}
    \caption{Comparing the query time of \textsc{EpsHier} and \textsc{EpsRange} on $\kappa$-\ith\ problem vs value of $\eps$ and rank $i$ on FIFA 2023 dataset.}
    \label{fig:dar:fifa}
    \vspace{-4mm}
\end{figure}

\begin{figure}[H]
    \centering
    \includegraphics[width=0.49\linewidth]{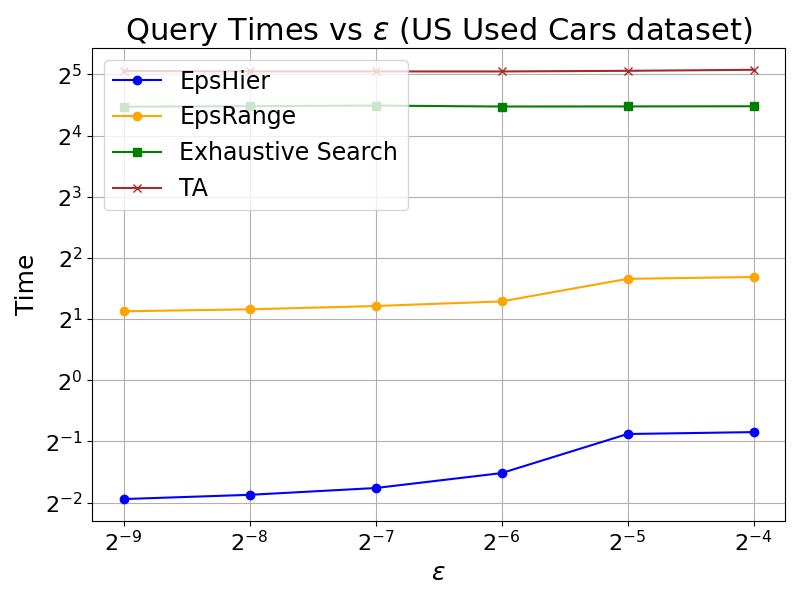}
    \includegraphics[width=0.49\linewidth]{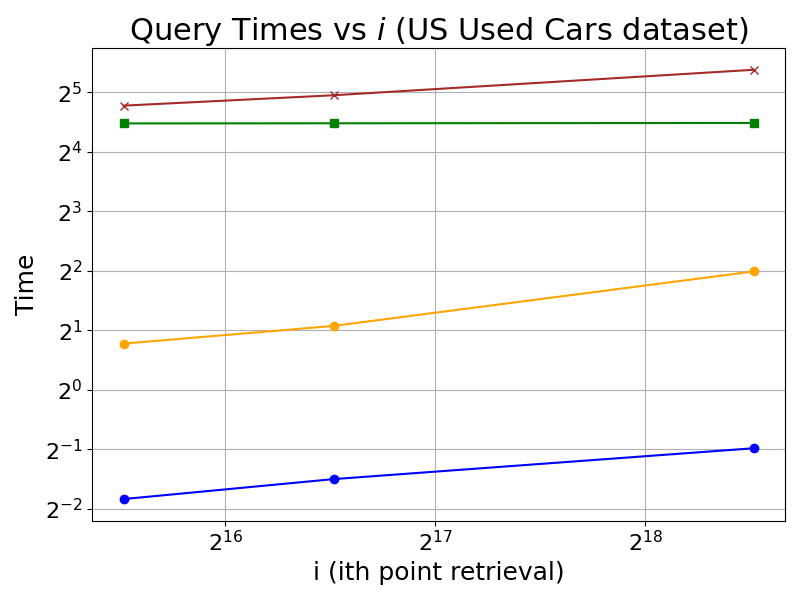}
    \caption{Comparing the query time on $\kappa$-\ith\ problem vs value of $\eps$ and rank $i$ on US Used Cars Dataset.}
    \label{fig:dar:used-cars}
    \vspace{-4mm}
\end{figure}

% \color{blue}
\subsection{Random-Access Similarity Search}\label{sec:app:exp:similarity}
\stitle{Adapting ANN Indexes to the $\kappa$-\ith\ Problem}

In this section, we describe how standard approximate nearest neighbor (ANN) indexes—such as HNSW and Product Quantization—are adapted to the random-access similarity search setting, and used to compare their performance with our proposed algorithm, \textsc{EpsHier}, in Section~\ref{sec:exp:similarity}.

Our algorithm directly solves the $\kappa$-\ith\ problem with $\kappa = \varepsilon n$ for a given $\varepsilon$. The output is a set of size $\kappa$. To evaluate recall, we check whether the ground-truth $i$-th point, denoted by $p^{(i)}$, is contained in this output set (i.e., the conformal set).

In contrast, ANN indexes are designed to return the top-$k$ nearest neighbors. To approximate the $i$-th ranked point, we query each ANN index with $k = i + 10$, where the additional constant margin ensures that the true $i$-th neighbor is covered in the best-case scenario. Depending on the index configuration, the output size may be $O(k)$ for large $k$, or bounded by a maximum size $M$. For example, in HNSW, $M$ corresponds to the beam search candidate set size. Consequently, the final output size is $\min(M, k)$.

Since this output does not directly correspond to a solution of the $\kappa$-\ith\ problem, we apply a post-processing step to ensure a fair comparison. Specifically, we select the last $\varepsilon n$ points (i.e., those with the largest distances) from the ANN output and then check whether $p^{(i)}$ lies within this subset.

\stitle{More Results on the \similarity\ Task}

Table~\ref{tab:knn-time-recall-comparison} reports the recall and query time of ANN indexes across all five benchmark datasets for varying values of $i$, ranging from small values ($i = O(1)$) to the median and third quartile of the dataset.

We observe that the performance of all ANN indexes degrades significantly as $i$ grows beyond $O(1)$. In particular, for large values of $i$, these methods fail to retrieve the required point, resulting in near-zero recall. Moreover, their query time increases substantially, as the algorithms must iterate over an increasing number of candidates and compute distances for up to $i$ points. 

In contrast, our algorithm, \textsc{EpsHier}, exhibits stable performance across all values of $i$, achieving fast query times and a guaranteed recall of $100\%$. This guarantee holds with high probability, conditioned on the sampled set forming an $\varepsilon$-sample, as discussed in Section~\ref{sec:epssampling}.

Figure~\ref{fig:ann:preprocess} compares the indexing (preprocessing) time and index size of \textsc{EpsHier} with those of the ANN indexes used in this experiment. We observe that both the index size and preprocessing time of \textsc{EpsHier} are comparable to standard ANN methods. However, unlike ANN indexes, which do not meaningfully solve the \similarity\ problem and exhibit zero recall for large $i$, our approach provides provable correctness guarantees across the entire range of $i$.

\begin{figure}[ht]
    \centering
    \includegraphics[width=\linewidth]{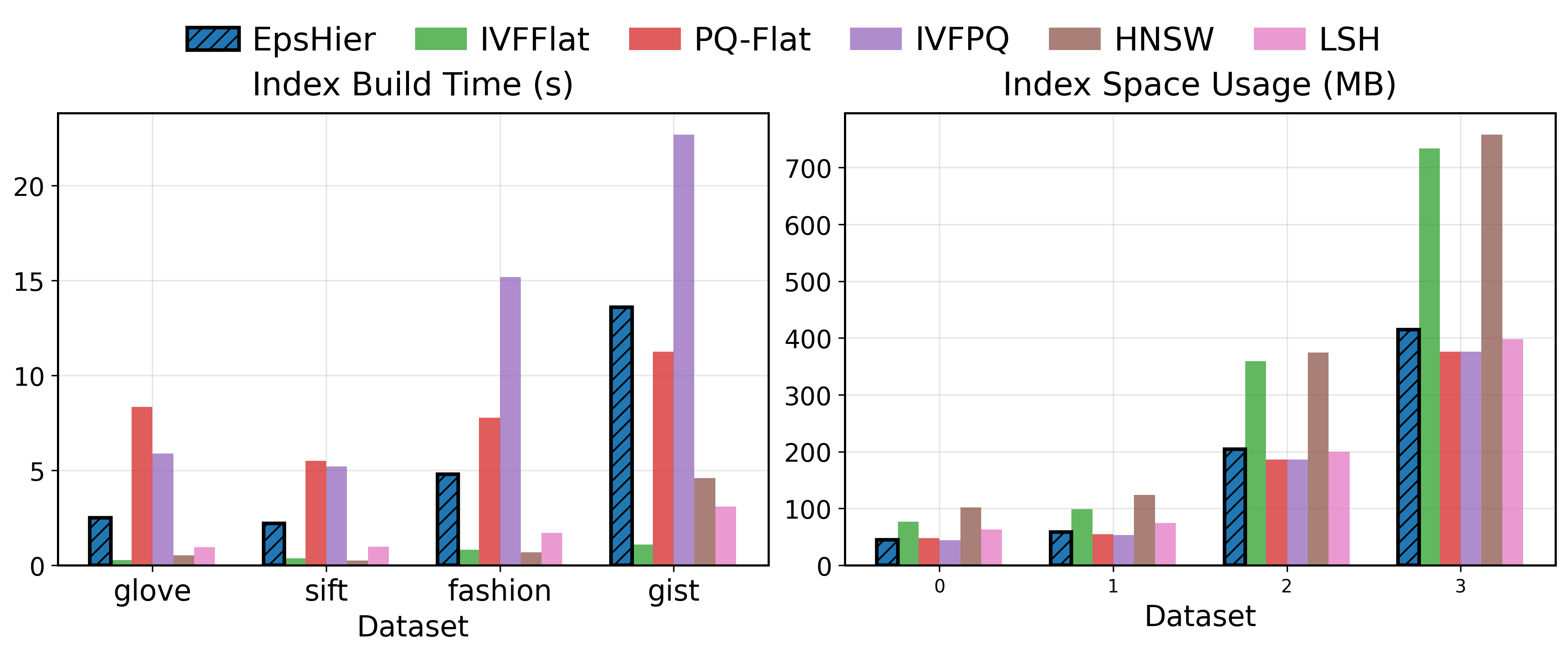}
    \caption{Comparing the space usage and building (indexing) time for Random-Access Similarity Search task}
    \label{fig:ann:preprocess}
\end{figure}

\subsection{Comparison with Exact Nearest Neighbor Search}
\label{app:faiss-flat}

We compare \textsc{EpsHier} against \texttt{FAISS IndexFlat}, a widely-used exact nearest-neighbor index that leverages SIMD-optimized distance computations. Since \texttt{IndexFlat} returns the top-$k$ nearest neighbors, retrieving the element at rank $i$ requires computing the top-$i$ results and selecting the last one.

Table~\ref{tab:faiss-flat} reports the query time on the GloVe-100-Angular dataset with one million vectors. Consistent with the trends observed in Figure~10, the runtime of exact nearest-neighbor retrieval increases as the target rank becomes larger, while the runtime of \textsc{EpsHier} remains nearly constant. Consequently, \textsc{EpsHier} becomes increasingly advantageous for accessing elements deeper in the ranked order, achieving up to $18.9\times$ speedup for large values of $i$.

\begin{table}[t]
\centering
\caption{Comparison with exact nearest-neighbor search using \texttt{FAISS IndexFlat} on the GloVe-100-Angular dataset ($n=10^6$).}
\label{tab:faiss-flat}
\begin{tabular}{lccc}
\toprule
Rank Fraction ($i/n$) & \textsc{EpsHier} (ms) & IndexFlat (ms) & Speedup \\
\midrule
$\approx 0.01\%$ & 8.8 & 1.4 & 0.2$\times$ \\
$\approx 0.1\%$  & 8.5 & 1.5 & 0.2$\times$ \\
$\approx 6\%$    & 9.1 & 20.8 & {\bf 2.3}$\times$ \\
$\approx 25\%$   & 8.8 & 74.4 & {\bf 8.4}$\times$ \\
$\approx 75\%$   & 9.1 & 172.1 & {\bf 18.9}$\times$ \\
\bottomrule
\end{tabular}
\end{table}

These results demonstrate that exact nearest-neighbor search remains highly effective for retrieving elements near the beginning of the ranked order. However, as the requested rank increases, the cost of retrieving the exact element grows substantially. In contrast, \textsc{EpsHier} maintains stable query performance across different rank positions, making it particularly suitable for random-access similarity search.

\subsection{Performance on Sparse Data}
\label{app:sparse}

To evaluate the behavior of our method on sparse high-dimensional data, we conducted an additional experiment on the TF-IDF representation of the 20 Newsgroups dataset, containing $n=10^4$ documents and $d=5000$ dimensions. We compare \textsc{EpsHier} against the exhaustive baseline used throughout the paper.

Table~\ref{tab:sparse} reports the query time for different target ranks. Similar to the results on dense datasets, \textsc{EpsHier} maintains nearly constant query time across different values of $i$ and consistently outperforms exhaustive search. 

\begin{table}[t]
\centering
\caption{Performance on the sparse TF-IDF 20 Newsgroups dataset ($n=10^4$, $d=5000$).}
\label{tab:sparse}
\begin{tabular}{ccc}
\toprule
Rank $i$ & \textsc{EpsHier} (ms) & Exhaustive (ms) \\
\midrule
156  & {\bf 15.4} & 43.2 \\
2500 & {\bf 13.8} & 43.6 \\
7500 & {\bf 13.5} & 44.0 \\
\bottomrule
\end{tabular}
\end{table}

\subsection{Effect of the Decay Rate $r$}

The decay rate $r$ controls the tradeoff between hierarchy depth and pruning effectiveness. Larger values of $r$ produce shallower hierarchies with fewer visited nodes, but also generate larger enclosing balls that overlap more frequently with query stripes, reducing pruning effectiveness. Smaller values of $r$ improve pruning quality at the cost of deeper traversals. Table~\ref{tab:r_effect} reports the impact of $r$ on the FIFA dataset. The results suggest that a moderate value ($r\approx4$) provides the best balance between traversal depth and pruning efficiency, yielding the lowest query time.

\begin{table}[t]
\centering
\caption{Effect of the decay rate $r$ on the FIFA dataset.}
\label{tab:r_effect}
\begin{tabular}{cccc}
\toprule
$r$ & \#Layers & Time (ms) & \#Visited Balls \\
\midrule
2   & 14 & 6.75 & 3053 \\
4   & 7  & {\bf 5.70} & 1796 \\
64  & 3  & 6.47 & 157 \\
128 & 2  & 6.64 & 78 \\
\bottomrule
\end{tabular}
\end{table}

% Auto-generated from time_benchmark_*.csv
% Requires: \usepackage{booktabs}
% Requires: \usepackage{xcolor}
\begin{table*}[t]
\centering
\footnotesize
\setlength{\tabcolsep}{3pt}
\begin{tabular}{lr@{\hspace{4pt}}r@{\hspace{8pt}}r@{\hspace{4pt}}r@{\hspace{8pt}}r@{\hspace{4pt}}r@{\hspace{8pt}}r@{\hspace{4pt}}r@{\hspace{8pt}}r@{\hspace{4pt}}r@{\hspace{8pt}}r@{\hspace{4pt}}r@{\hspace{8pt}}r@{\hspace{4pt}}r@{\hspace{8pt}}r@{\hspace{4pt}}r}
\toprule
Method & \textit{Time (s)} & \textit{Recall} & \textit{Time (s)} & \textit{Recall} & \textit{Time (s)} & \textit{Recall} & \textit{Time (s)} & \textit{Recall} & \textit{Time (s)} & \textit{Recall} & \textit{Time (s)} & \textit{Recall} & \textit{Time (s)} & \textit{Recall} & \textit{Time (s)} & \textit{Recall} \\
\cmidrule(lr){2-3}\cmidrule(lr){4-5}\cmidrule(lr){6-7}\cmidrule(lr){8-9}\cmidrule(lr){10-11}\cmidrule(lr){12-13}\cmidrule(lr){14-15}\cmidrule(lr){16-17}
\multicolumn{17}{l}{\textbf{fashion-mnist-784-euclidean} ($n=60,000$)} \\
 & \multicolumn{2}{c}{$i=10$} & \multicolumn{2}{c}{$i=29$ ($n/2^{11}$)} & \multicolumn{2}{c}{$i=117$ ($n/2^{9}$)} & \multicolumn{2}{c}{$i=468$ ($n/2^{7}$)} & \multicolumn{2}{c}{$i=1875$ ($n/2^{5}$)} & \multicolumn{2}{c}{$i=7500$ ($n/2^{3}$)} & \multicolumn{2}{c}{$i=30000$ ($n/2^{1}$)} & \multicolumn{2}{c}{$i=45000$ ($3n/4$)} \\
\cmidrule(lr){2-3}\cmidrule(lr){4-5}\cmidrule(lr){6-7}\cmidrule(lr){8-9}\cmidrule(lr){10-11}\cmidrule(lr){12-13}\cmidrule(lr){14-15}\cmidrule(lr){16-17}
\textbf{EpsHier* (Ours)} & 0.018 & \colorbox{green!20}{\textbf{1.000}} & 0.021 & \colorbox{green!20}{\textbf{1.000}} & 0.021 & \colorbox{green!20}{\textbf{1.000}} & 0.020 & \colorbox{green!20}{\textbf{1.000}} & 0.022 & \colorbox{green!20}{\textbf{1.000}} & \textbf{0.021} & \colorbox{green!20}{\textbf{1.000}} & \textbf{0.019} & \colorbox{green!20}{\textbf{0.920}} & \textbf{0.023} & \colorbox{green!20}{\textbf{1.000}} \\
Quick Select & 0.114 & \colorbox{green!20}{\textbf{1.000}} & 0.113 & \colorbox{green!20}{\textbf{1.000}} & 0.112 & \colorbox{green!20}{\textbf{1.000}} & 0.111 & \colorbox{green!20}{\textbf{1.000}} & 0.111 & \colorbox{green!20}{\textbf{1.000}} & 0.113 & \colorbox{green!20}{\textbf{1.000}} & 0.117 & \colorbox{green!20}{\textbf{1.000}} & 0.117 & \colorbox{green!20}{\textbf{1.000}} \\
IVFFlat & 0.003 & \colorbox{green!20}{\textbf{1.000}} & 0.003 & \colorbox{green!20}{\textbf{1.000}} & \textbf{0.003} & \colorbox{green!20}{\textbf{1.000}} & \textbf{0.004} & \colorbox{green!20}{\textbf{1.000}} & \textbf{0.007} & \colorbox{green!20}{\textbf{1.000}} & 0.018 & \colorbox{red!20}{0.000} & 0.065 & \colorbox{red!20}{0.000} & 0.069 & \colorbox{red!20}{0.000} \\
PQ & 0.003 & \colorbox{green!20}{\textbf{0.920}} & 0.003 & 0.790 & 0.004 & 0.670 & 0.005 & 0.470 & 0.010 & 0.530 & 0.028 & \colorbox{red!20}{0.000} & 0.080 & \colorbox{red!20}{0.000} & 0.111 & \colorbox{red!20}{0.000} \\
IVFPQ & 0.001 & 0.810 & 0.001 & 0.610 & 0.001 & 0.400 & 0.002 & 0.450 & 0.007 & 0.420 & 0.026 & \colorbox{red!20}{0.000} & 0.030 & \colorbox{red!20}{0.000} & 0.032 & \colorbox{red!20}{0.000} \\
HNSW & \textbf{0.000} & \colorbox{green!20}{\textbf{1.000}} & \textbf{0.000} & \colorbox{green!20}{\textbf{0.990}} & 0.001 & 0.880 & 0.002 & 0.330 & 0.002 & \colorbox{red!20}{0.050} & 0.002 & \colorbox{red!20}{0.000} & 0.005 & \colorbox{red!20}{0.000} & 0.006 & \colorbox{red!20}{0.000} \\
LSH & 0.053 & 0.390 & 0.060 & 0.270 & 0.060 & 0.250 & 0.059 & 0.210 & 0.056 & 0.270 & 0.041 & \colorbox{red!20}{0.030} & 0.085 & \colorbox{red!20}{0.000} & 0.116 & \colorbox{red!20}{0.000} \\
\midrule
\multicolumn{17}{l}{\textbf{glove-100-angular} ($n=1,183,514$)} \\
 & \multicolumn{2}{c}{$i=10$} & \multicolumn{2}{c}{$i=144$ ($n/2^{13}$)} & \multicolumn{2}{c}{$i=2311$ ($n/2^{9}$)} & \multicolumn{2}{c}{$i=9246$ ($n/2^{7}$)} & \multicolumn{2}{c}{$i=36984$ ($n/2^{5}$)} & \multicolumn{2}{c}{$i=147939$ ($n/2^{3}$)} & \multicolumn{2}{c}{$i=591757$ ($n/2^{1}$)} & \multicolumn{2}{c}{$i=887635$ ($3n/4$)} \\
\cmidrule(lr){2-3}\cmidrule(lr){4-5}\cmidrule(lr){6-7}\cmidrule(lr){8-9}\cmidrule(lr){10-11}\cmidrule(lr){12-13}\cmidrule(lr){14-15}\cmidrule(lr){16-17}
\textbf{EpsHier* (Ours)} & 0.022 & \colorbox{green!20}{\textbf{1.000}} & 0.026 & \colorbox{green!20}{\textbf{1.000}} & 0.027 & \colorbox{green!20}{\textbf{1.000}} & \textbf{0.023} & \colorbox{green!20}{\textbf{1.000}} & \textbf{0.021} & \colorbox{green!20}{\textbf{1.000}} & \textbf{0.022} & \colorbox{green!20}{\textbf{1.000}} & \textbf{0.022} & \colorbox{green!20}{\textbf{1.000}} & \textbf{0.022} & \colorbox{green!20}{\textbf{1.000}} \\
Quick Select & 3.494 & \colorbox{green!20}{\textbf{1.000}} & 3.386 & \colorbox{green!20}{\textbf{1.000}} & 3.393 & \colorbox{green!20}{\textbf{1.000}} & 3.389 & \colorbox{green!20}{\textbf{1.000}} & 3.422 & \colorbox{green!20}{\textbf{1.000}} & 3.502 & \colorbox{green!20}{\textbf{1.000}} & 3.606 & \colorbox{green!20}{\textbf{1.000}} & 3.616 & \colorbox{green!20}{\textbf{1.000}} \\
IVFFlat & 0.006 & \colorbox{green!20}{\textbf{0.990}} & \textbf{0.006} & \colorbox{green!20}{\textbf{0.980}} & \textbf{0.011} & \colorbox{green!20}{\textbf{0.960}} & 0.027 & \colorbox{green!20}{\textbf{0.950}} & 0.089 & 0.840 & 0.339 & \colorbox{red!20}{0.000} & 1.261 & \colorbox{red!20}{0.000} & 1.352 & \colorbox{red!20}{0.000} \\
PQ & 0.044 & \colorbox{green!20}{\textbf{1.000}} & 0.044 & \colorbox{green!20}{\textbf{0.920}} & 0.050 & 0.600 & 0.063 & 0.550 & 0.128 & 0.600 & 0.402 & \colorbox{red!20}{0.000} & 1.432 & \colorbox{red!20}{0.000} & 2.078 & \colorbox{red!20}{0.000} \\
IVFPQ & \textbf{0.004} & \colorbox{green!20}{\textbf{0.980}} & 0.005 & 0.530 & 0.012 & 0.590 & 0.036 & 0.510 & 0.104 & 0.560 & 0.339 & 0.210 & 0.387 & \colorbox{red!20}{0.000} & 0.404 & \colorbox{red!20}{0.000} \\
HNSW & 0.002 & 0.810 & 0.001 & 0.490 & 0.005 & \colorbox{red!20}{0.100} & 0.005 & \colorbox{red!20}{0.000} & 0.008 & \colorbox{red!20}{0.000} & 0.021 & \colorbox{red!20}{0.000} & 0.067 & \colorbox{red!20}{0.000} & 0.090 & \colorbox{red!20}{0.000} \\
LSH & 0.056 & 0.740 & 0.043 & 0.400 & 0.049 & 0.400 & 0.056 & 0.360 & 0.130 & 0.400 & 0.431 & \colorbox{red!20}{0.000} & 1.538 & \colorbox{red!20}{0.000} & 2.227 & \colorbox{red!20}{0.000} \\
\midrule
\multicolumn{17}{l}{\textbf{glove-25-angular} ($n=1,183,514$)} \\
 & \multicolumn{2}{c}{$i=10$} & \multicolumn{2}{c}{$i=144$ ($n/2^{13}$)} & \multicolumn{2}{c}{$i=2311$ ($n/2^{9}$)} & \multicolumn{2}{c}{$i=9246$ ($n/2^{7}$)} & \multicolumn{2}{c}{$i=36984$ ($n/2^{5}$)} & \multicolumn{2}{c}{$i=147939$ ($n/2^{3}$)} & \multicolumn{2}{c}{$i=591757$ ($n/2^{1}$)} & \multicolumn{2}{c}{$i=887635$ ($3n/4$)} \\
\cmidrule(lr){2-3}\cmidrule(lr){4-5}\cmidrule(lr){6-7}\cmidrule(lr){8-9}\cmidrule(lr){10-11}\cmidrule(lr){12-13}\cmidrule(lr){14-15}\cmidrule(lr){16-17}
\textbf{EpsHier* (Ours)} & 0.176 & \colorbox{green!20}{\textbf{1.000}} & 0.170 & \colorbox{green!20}{\textbf{1.000}} & 0.168 & \colorbox{green!20}{\textbf{1.000}} & 0.169 & \colorbox{green!20}{\textbf{1.000}} & 0.159 & \colorbox{green!20}{\textbf{1.000}} & \textbf{0.021} & \colorbox{green!20}{\textbf{1.000}} & \textbf{0.013} & \colorbox{green!20}{\textbf{1.000}} & \textbf{0.016} & \colorbox{green!20}{\textbf{1.000}} \\
Quick Select & 3.470 & \colorbox{green!20}{\textbf{1.000}} & 3.360 & \colorbox{green!20}{\textbf{1.000}} & 3.371 & \colorbox{green!20}{\textbf{1.000}} & 3.382 & \colorbox{green!20}{\textbf{1.000}} & 3.391 & \colorbox{green!20}{\textbf{1.000}} & 3.486 & \colorbox{green!20}{\textbf{1.000}} & 3.574 & \colorbox{green!20}{\textbf{1.000}} & 3.573 & \colorbox{green!20}{\textbf{1.000}} \\
IVFFlat & 0.003 & \colorbox{green!20}{\textbf{1.000}} & 0.003 & \colorbox{green!20}{\textbf{1.000}} & \textbf{0.007} & \colorbox{green!20}{\textbf{1.000}} & \textbf{0.021} & \colorbox{green!20}{\textbf{0.980}} & \textbf{0.080} & \colorbox{green!20}{\textbf{0.960}} & 0.327 & \colorbox{red!20}{0.000} & 1.227 & \colorbox{red!20}{0.000} & 1.299 & \colorbox{red!20}{0.000} \\
PQ & 0.013 & \colorbox{green!20}{\textbf{0.990}} & 0.014 & 0.710 & 0.021 & 0.470 & 0.040 & 0.570 & 0.097 & 0.450 & 0.373 & \colorbox{red!20}{0.000} & 1.393 & \colorbox{red!20}{0.000} & 2.032 & \colorbox{red!20}{0.000} \\
IVFPQ & 0.003 & \colorbox{green!20}{\textbf{1.000}} & \textbf{0.003} & \colorbox{green!20}{\textbf{0.940}} & 0.010 & 0.680 & 0.034 & 0.800 & 0.103 & 0.640 & 0.330 & \colorbox{red!20}{0.060} & 0.383 & \colorbox{red!20}{0.000} & 0.402 & \colorbox{red!20}{0.000} \\
HNSW & \textbf{0.002} & \colorbox{green!20}{\textbf{0.960}} & 0.001 & 0.710 & 0.003 & \colorbox{red!20}{0.160} & 0.004 & \colorbox{red!20}{0.030} & 0.007 & \colorbox{red!20}{0.000} & 0.021 & \colorbox{red!20}{0.000} & 0.068 & \colorbox{red!20}{0.000} & 0.093 & \colorbox{red!20}{0.000} \\
LSH & 0.150 & 0.880 & 0.147 & 0.520 & 0.138 & 0.440 & 0.131 & 0.540 & 0.196 & 0.470 & 0.472 & \colorbox{red!20}{0.000} & 1.544 & \colorbox{red!20}{0.000} & 2.213 & \colorbox{red!20}{0.000} \\
\midrule
\multicolumn{17}{l}{\textbf{sift-128-euclidean} ($n=1,000,000$)} \\
 & \multicolumn{2}{c}{$i=10$} & \multicolumn{2}{c}{$i=122$ ($n/2^{13}$)} & \multicolumn{2}{c}{$i=1953$ ($n/2^{9}$)} & \multicolumn{2}{c}{$i=7812$ ($n/2^{7}$)} & \multicolumn{2}{c}{$i=31250$ ($n/2^{5}$)} & \multicolumn{2}{c}{$i=125000$ ($n/2^{3}$)} & \multicolumn{2}{c}{$i=500000$ ($n/2^{1}$)} & \multicolumn{2}{c}{$i=750000$ ($3n/4$)} \\
\cmidrule(lr){2-3}\cmidrule(lr){4-5}\cmidrule(lr){6-7}\cmidrule(lr){8-9}\cmidrule(lr){10-11}\cmidrule(lr){12-13}\cmidrule(lr){14-15}\cmidrule(lr){16-17}
\textbf{EpsHier* (Ours)} & 0.220 & \colorbox{green!20}{\textbf{1.000}} & 0.222 & \colorbox{green!20}{\textbf{1.000}} & 0.228 & \colorbox{green!20}{\textbf{1.000}} & 0.233 & \colorbox{green!20}{\textbf{1.000}} & \textbf{0.236} & \colorbox{green!20}{\textbf{1.000}} & \textbf{0.220} & \colorbox{green!20}{\textbf{0.990}} & \textbf{0.026} & \colorbox{green!20}{\textbf{0.990}} & \textbf{0.204} & \colorbox{green!20}{\textbf{1.000}} \\
Quick Select & 1.707 & \colorbox{green!20}{\textbf{1.000}} & 1.633 & \colorbox{green!20}{\textbf{1.000}} & 1.630 & \colorbox{green!20}{\textbf{1.000}} & 1.638 & \colorbox{green!20}{\textbf{1.000}} & 1.681 & \colorbox{green!20}{\textbf{1.000}} & 1.757 & \colorbox{green!20}{\textbf{1.000}} & 1.815 & \colorbox{green!20}{\textbf{1.000}} & 1.835 & \colorbox{green!20}{\textbf{1.000}} \\
IVFFlat & 0.006 & \colorbox{green!20}{\textbf{1.000}} & \textbf{0.007} & \colorbox{green!20}{\textbf{1.000}} & \textbf{0.011} & \colorbox{green!20}{\textbf{1.000}} & \textbf{0.022} & \colorbox{green!20}{\textbf{1.000}} & 0.069 & 0.610 & 0.250 & \colorbox{red!20}{0.000} & 0.922 & \colorbox{red!20}{0.000} & 0.962 & \colorbox{red!20}{0.000} \\
PQ & 0.022 & \colorbox{green!20}{\textbf{1.000}} & 0.022 & 0.590 & 0.028 & 0.480 & 0.043 & 0.410 & 0.086 & 0.370 & 0.296 & \colorbox{red!20}{0.000} & 1.083 & \colorbox{red!20}{0.000} & 1.603 & \colorbox{red!20}{0.000} \\
IVFPQ & 0.003 & 0.840 & 0.003 & 0.410 & 0.008 & 0.430 & 0.027 & 0.400 & 0.085 & 0.530 & 0.259 & \colorbox{red!20}{0.000} & 0.287 & \colorbox{red!20}{0.000} & 0.302 & \colorbox{red!20}{0.000} \\
HNSW & \textbf{0.001} & \colorbox{green!20}{\textbf{0.990}} & 0.001 & 0.760 & 0.003 & \colorbox{red!20}{0.110} & 0.004 & \colorbox{red!20}{0.040} & 0.006 & \colorbox{red!20}{0.000} & 0.017 & \colorbox{red!20}{0.000} & 0.056 & \colorbox{red!20}{0.000} & 0.081 & \colorbox{red!20}{0.000} \\
LSH & 0.136 & 0.700 & 0.134 & 0.430 & 0.123 & 0.430 & 0.121 & 0.430 & 0.161 & 0.520 & 0.370 & \colorbox{red!20}{0.000} & 1.189 & \colorbox{red!20}{0.000} & 1.709 & \colorbox{red!20}{0.000} \\
\midrule
\multicolumn{17}{l}{\textbf{gist-960-euclidean} ($n=1,000,000$)} \\
 & \multicolumn{2}{c}{$i=10$} & \multicolumn{2}{c}{$i=122$ ($n/2^{13}$)} & \multicolumn{2}{c}{$i=1953$ ($n/2^{9}$)} & \multicolumn{2}{c}{$i=7812$ ($n/2^{7}$)} & \multicolumn{2}{c}{$i=31250$ ($n/2^{5}$)} & \multicolumn{2}{c}{$i=125000$ ($n/2^{3}$)} & \multicolumn{2}{c}{$i=500000$ ($n/2^{1}$)} & \multicolumn{2}{c}{$i=750000$ ($3n/4$)} \\
\cmidrule(lr){2-3}\cmidrule(lr){4-5}\cmidrule(lr){6-7}\cmidrule(lr){8-9}\cmidrule(lr){10-11}\cmidrule(lr){12-13}\cmidrule(lr){14-15}\cmidrule(lr){16-17}
\textbf{EpsHier* (Ours)} & 0.497 & \colorbox{green!20}{\textbf{1.000}} & 0.492 & \colorbox{green!20}{\textbf{1.000}} & 0.496 & \colorbox{green!20}{\textbf{1.000}} & 0.497 & \colorbox{green!20}{\textbf{1.000}} & 0.458 & \colorbox{green!20}{\textbf{1.000}} & \textbf{0.388} & \colorbox{green!20}{\textbf{1.000}} & \textbf{0.256} & \colorbox{green!20}{\textbf{1.000}} & \textbf{0.363} & \colorbox{green!20}{\textbf{1.000}} \\
Quick Select & 2.131 & \colorbox{green!20}{\textbf{1.000}} & 2.019 & \colorbox{green!20}{\textbf{1.000}} & 2.015 & \colorbox{green!20}{\textbf{1.000}} & 2.022 & \colorbox{green!20}{\textbf{1.000}} & 2.049 & \colorbox{green!20}{\textbf{1.000}} & 2.126 & \colorbox{green!20}{\textbf{1.000}} & 2.238 & \colorbox{green!20}{\textbf{1.000}} & 2.266 & \colorbox{green!20}{\textbf{1.000}} \\
IVFFlat & \textbf{0.056} & \colorbox{green!20}{\textbf{1.000}} & \textbf{0.057} & \colorbox{green!20}{\textbf{1.000}} & \textbf{0.062} & \colorbox{green!20}{\textbf{1.000}} & \textbf{0.078} & \colorbox{green!20}{\textbf{0.990}} & \textbf{0.144} & \colorbox{green!20}{\textbf{0.990}} & 0.386 & \colorbox{red!20}{0.000} & 1.256 & \colorbox{red!20}{0.000} & 1.510 & \colorbox{red!20}{0.000} \\
PQ & 0.035 & 0.430 & 0.035 & 0.290 & 0.042 & 0.280 & 0.056 & 0.300 & 0.118 & 0.440 & 0.375 & \colorbox{red!20}{0.000} & 1.357 & \colorbox{red!20}{0.000} & 2.003 & \colorbox{red!20}{0.000} \\
IVFPQ & 0.005 & 0.420 & 0.005 & 0.200 & 0.012 & 0.260 & 0.033 & 0.340 & 0.097 & 0.380 & 0.320 & \colorbox{red!20}{0.060} & 0.451 & \colorbox{red!20}{0.000} & 0.466 & \colorbox{red!20}{0.000} \\
HNSW & 0.002 & 0.770 & 0.001 & 0.490 & 0.006 & \colorbox{red!20}{0.090} & 0.006 & \colorbox{red!20}{0.030} & 0.009 & \colorbox{red!20}{0.000} & 0.019 & \colorbox{red!20}{0.000} & 0.059 & \colorbox{red!20}{0.000} & 0.079 & \colorbox{red!20}{0.000} \\
LSH & 0.125 & \colorbox{red!20}{0.160} & 0.116 & \colorbox{red!20}{0.150} & 0.109 & \colorbox{red!20}{0.110} & 0.127 & \colorbox{red!20}{0.150} & 0.177 & \colorbox{red!20}{0.120} & 0.429 & \colorbox{red!20}{0.100} & 1.462 & \colorbox{red!20}{0.000} & 2.123 & \colorbox{red!20}{0.000} \\
\bottomrule
\end{tabular}
\caption{Query time and recall at different $i$ values for various datasets. All values shown with 3 decimal precision. Recall values greater than 0.9 are shown in bold with \colorbox{green!20}{green highlight}, and values less than 0.2 are shown with \colorbox{red!20}{red highlight}. The fastest query time among methods with recall > 0.9 is shown in bold for each $i$.}
\label{tab:knn-time-recall-comparison}
\end{table*}

\end{document}